\algrenewcommand\algorithmicrequire{\textbf{Input:}}
\algrenewcommand\algorithmicensure{\textbf{Output:}}
\newclass{\stoqma}{StoqMA}
\newclass{\classP}{P}
\newclass{\bqp}{BQP}
\newclass{\qcam}{QCAM}
\newclass{\postbqp}{postBQP}
\newclass{\posta}{postA}
\newclass{\postiqp}{postIQP}
\newclass{\classa}{A}
\newclass{\bpp}{BPP}
\newclass{\fbpp}{FBPP}
\newclass{\pp}{PP}
\newclass{\cocp}{coC_=P}
\newclass{\ph}{PH}
\newclass{\np}{NP}
\newclass{\conp}{coNP}
\newclass{\gapp}{GapP}
\newclass{\approxclass}{Apx}
\newclass{\gapclass}{Gap}
\newclass{\sharpP}{\#P}
\newclass{\ma}{MA}
\newclass{\am}{AM}
\newclass{\qma}{QMA}
\newclass{\hog}{HOG}
\newclass{\quath}{QUATH}
\newclass{\bog}{BOG}
\newclass{\xeb}{XEB}
\newclass{\xhog}{XHOG}
\newclass{\xquath}{XQUATH}
\newclass{\maxcut}{MAXCUT}
\newclass{\sat}{SAT}
\newclass{\maxtwosat}{MAX2SAT}
\newclass{\twosat}{2SAT}
\newclass{\threesat}{3SAT}
\newclass{\sharpsat}{\#SAT}
\newclass{\se}{Sign Easing}
\newclass{\classx}{X}
\newtheorem{theorem}{Theorem}
\newtheorem{lemma}[theorem]{Lemma}
\DeclareMathOperator{\sign}{sign}
\DeclareMathOperator{\spn}{span}
\DeclareMathOperator{\rad}{rad}
\DeclareMathOperator{\vecz}{vec}
\newcommand{\mc}{\mathcal}
\newcommand{\mb}{\mathbb}
\newcommand{\id}{\mathbbm{1}}
\newcommand{\ee}{\mathrm{e}}
\newcommand{\ii}{\mathrm{i}}
\newcommand{\bin}{\{0,1\}}
\newcommand{\proj}[1]{\ket{#1}\bra{#1}}
\DeclareMathOperator{\acc}{Acc}
\DeclareMathOperator{\rej}{Rej}
\DeclareMathOperator{\gap}{gap}
\newcommand{\QuICS}{
Joint Center for Quantum Information and Computer Science, NIST/University of Maryland, College Park, Maryland 20742, USA}
\begin{document}

\title{Bell sampling from quantum circuits}
\author{Dominik Hangleiter}
\email{mail@dhangleiter.eu}
\author{Michael J. Gullans}
\email{mgullans@umd.edu}
\affiliation{\QuICS}
\date{\today}

\begin{abstract}
A  central challenge in the verification of quantum computers is benchmarking their performance as a whole and demonstrating their computational capabilities.
In this work, we find a universal model of quantum computation, \emph{Bell sampling}, that can be used for both of those tasks and thus provides an ideal stepping stone towards fault-tolerance. 
In Bell sampling, we measure two copies of a state prepared by a quantum circuit in the transversal Bell basis. 
We show that the Bell samples are classically intractable to produce and at the same time constitute what we call a \emph{circuit shadow}: 
from the Bell samples we can efficiently extract  information about the quantum circuit preparing the state, as well as  diagnose circuit errors. 
In addition to known properties that can be efficiently extracted from Bell samples, we give several new and efficient protocols: an estimator of state fidelity, a test for the depth of the circuit and an algorithm to estimate a lower bound to the number of $T$ gates in the circuit. 
With some additional measurements, our algorithm learns a full description of states prepared by circuits with low $T$-count.
\end{abstract}

\maketitle
\begin{bibunit}


\paragraph*{Introduction. }
As technological progress on fault-tolerant quantum processors continues, a central challenge is to demonstrate their computational advantage and to benchmark their performance as a whole.
Quantum random sampling experiments serve this double purpose \cite{boixo_characterizing_2018,liu_benchmarking_2022,heinrich_general_2022,hangleiter_computational_2023-1} and have arguably surpassed the threshold of quantum advantage \cite{arute_quantum_2019,zhong_quantum_2020,zhu_quantum_2022,madsen_quantum_2022,morvan_phase_2023,deng_gaussian_2023}.
However, this approach currently suffers several drawbacks. 
Most importantly, it can only serve its central goals---benchmarking and certification of quantum advantage---in the classically simulable regime. 
This deficiency arises because evaluating the performance benchmark, the \emph{cross-entropy benchmark}, requires a classical simulation of the ideal quantum computation.
What is more, the cross-entropy benchmark suffers from various problems related to the specific nature of the physical noise in the quantum processor \cite{gao_limitations_2021,morvan_phase_2023,ware_sharp_2023}, and  yields limited information about the underlying quantum state. 
More generally, in near-term quantum computing without error correction, we lack many tools for validating a given quantum computation just using its output samples.

In this work, we consider \emph{Bell sampling}, a model of quantum computation in which two identical copies of a state prepared by a quantum circuit are measured in the transversal Bell basis, see \cref{fig:bell sampling}. 
We show that this model is universal for quantum computation, that the output samples yield a variety of diagnostic information about the underlying quantum state and that they allow for detecting and correcting certain errors in the state preparation. 
We may thus think of the Bell samples as classical \emph{circuit shadows}, in analogy to the notion of state shadows coined by \textcite{aaronson_learnability_2007,huang_predicting_2020}, since we can efficiently extract specific information about the generating circuit or a family of generating circuits from them. 
Bell sampling also serves as a stepping stone towards quantum fault-tolerance: 
not only can we naturally detect certain errors from the Bell samples, but the protocol is also compatible with stabilizer codes---the Bell measurement between code blocks is transversal for such codes and allows for the fault-tolerant extraction of all error syndromes.  
As a concrete application, we demonstrate that Bell sampling from universal quantum circuits exhibits quantum advantage that can be efficiently validated on near-term quantum processors. 

Technically, we make the following contributions.
We show that Bell sampling is universal and provide complexity-theoretic evidence for the classical intractability of Bell sampling from random universal quantum circuits, following an established hardness argument~\cite{aaronson_computational_2013,bremner_average-case_2016,hangleiter_computational_2023-1}. 
We give two new diagnostic primitives based on noiseless Bell samples. 
First, we introduce a new test to verify the depth of quantum circuits. 
Here, we make use of the  fact that from the Bell basis samples one can compute correlation properties of the two copies and in particular a swap test on any subsystem.
Second, we show that the Bell samples can be used to efficiently measure the stabilizer nullity---a magic monotone \cite{beverland_lower_2020}---and give a protocol to efficiently learn a full description of any quantum state that can be prepared by a circuit with low $T$-count. 
Here, we build on a result by \textcite{montanaro_learning_2017}, who has shown that stabilizer states can be learned from Bell samples. 
In the setting of noisy state preparations, we analytically show that the Bell samples can be used to estimate the fidelity of state preparations and demonstrate the feasibility numerically.
We also give a protocol for efficiently detecting errors in the state preparation based only on the properties of the Bell samples.

\begin{figure}[b]
  \includegraphics{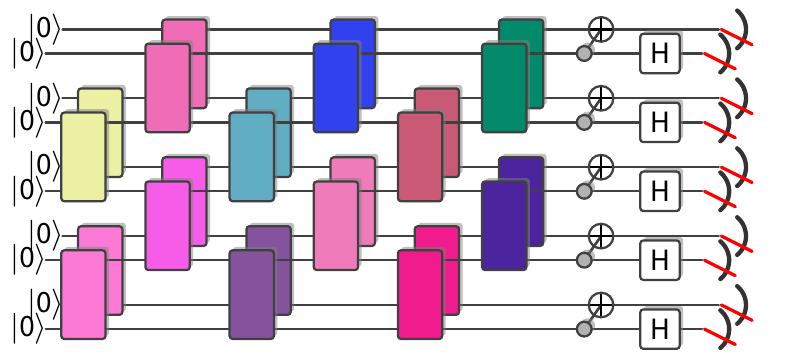}
  \caption{\label{fig:bell sampling} \textbf{The Bell sampling protocol.} In the Bell sampling protocol we prepare  the quantum state $C \ket {0^n}\otimes C \ket {0^n}$ using a quantum circuit $C$, and measure all qubits transversally in the Bell basis across the bipartition of the system. }
\end{figure}
Of course, the idea to sample in the Bell basis to learn about properties of quantum states is as old as the theory of quantum information itself and has found many applications in quantum computing, including learning stabilizer states \cite{montanaro_learning_2017}, testing stabilizerness \cite{gross_schurweyl_2021}, measuring magic \cite{haug_scalable_2023,haug_efficient_2023}, and quantum machine learning \cite{huang_information-theoretic_2021}. 
The novelty of our approach is to view Bell sampling as a computational model. 
We then ask the question: 
What can we learn from the Bell samples about the circuit preparing the underlying quantum state? 



\paragraph*{Bell sampling. }

We begin by defining the Bell sampling protocol and noting some simple properties that will be useful in the remainder of this work. 
Consider a quantum circuit $C$ acting on $n$ qubits, and define the Bell basis of two qubits as 
\begin{align}
\label{eq:bell basis}
  \ket{\sigma_r } &= (\sigma_r \otimes \id) \ket{\Phi^+},\text{ where }
  \ket{\Phi^+} = (\ket {00 } + \ket {11} )/\sqrt 2, 
\end{align}
and for $r \in \bin^2$ we identify
\begin{align}
   \sigma_{00} = \id, \quad \sigma_{01} = X, \quad \sigma_{10} = Z, \quad \sigma_{11} = \ii \,\sigma_{01}\sigma_{10} = Y . 
 \end{align} 
The Bell sampling protocol proceeds as follows, see \cref{fig:bell sampling}. 
\begin{enumerate}
  \item Prepare $\ket{\mc C} \coloneqq \ket C \otimes \ket C  \coloneqq C \ket{0^n} \otimes C \ket{0^n} $.
  \item Measure all qubit pairs $(i,i+n)$  for $i \in [n] \coloneqq \{1, 2, \ldots, n \}$ in the Bell basis, yielding an outcome $r \in \bin^{2n}$. 
\end{enumerate}

It is easy to see that the distribution of the outcomes $r$ can be written as
\begin{align}
\label{eq:output probs}
 P_C(r) & = \frac 1 {2^n} \left|\bra C\sigma_r  \ket{\overline C}\right|^2
\end{align}
where $\sigma_r = \sigma_{r_1 r_{n+1}} \otimes \sigma_{r_2 r_{n+2}} \otimes \cdots \otimes \sigma_{r_n r_{2n}}$ is the $n$-qubit Pauli matrix corresponding to the outcome $r = (r_1, r_2, \ldots, r_{2n})$, and $\overline C$ denotes complex conjugation of $C$. 
In order to perform the measurement in the Bell basis, we need to apply a depth-1 quantum circuit consisting of $n$ transversal \texttt{cnot}-gates followed by Hadamard gates on the control qubits and a measurement of all qubits in the computational basis. 



\paragraph*{Computational complexity. }
We first show that Bell sampling is a universal model of quantum computation. 
To show this, we observe that we can estimate both the sign and the magnitude of $\bra C Z \ket C$ for any quantum circuit $C$ from Bell samples from a circuit $C'(C)$ in which we use a variant of Ramsey interferometery with a single ancilla qubit in each copy of the circuit, see the Supplementary Material (SM) \cite{suppmat}.
We then show that approximately sampling from the Bell sampling distribution $P_C$ is classically intractable on average for universal random quantum circuits $C$ with $\Omega(n^2)$ gates in a brickwork architecture (as depicted in \cref{fig:bell sampling}), assuming certain complexity-theoretic conjectures are satisfied, via a standard proof technique, see the SM~\cite{suppmat} for details.
The argument puts the complexity-theoretic evidence for the hardness of Bell sampling from random quantum circuits on a par with that for standard universal circuit sampling~\cite{boixo_characterizing_2018,bouland_complexity_2019,movassagh_quantum_2020,kondo_quantum_2022,krovi_average-case_2022,arute_quantum_2019}.



\paragraph*{Bell samples as classical circuit shadows. }
Samples in the computational basis---while difficult to produce for random quantum circuits---yield very little information about the underlying quantum state. 
In particular, the problem of verification is essentially unsolved since the currently used methods require exponential computing time. 
In contrast, from the Bell samples, we can \emph{efficiently} infer many properties of the quantum state preparation $\ket C \otimes \ket C$. 
Known examples include the overlap $\tr[\rho \sigma]$ of a state preparation $\rho \otimes \sigma$ via a swap test, 
the magic of the state $\ket C$ \cite{haug_scalable_2023}, and the outcome of measuring any Pauli operator $P \otimes P$ \cite{gottesman_introduction_2009}.
Here, we add new properties to this family. 
We give efficient protocols for estimating the fidelity, testing the depth of low-depth quantum circuits, for testing its magic, and for learning quantum states that can be prepared by a circuit with low $T$-count.

Let us begin by recapping how a swap test can be performed using the Bell samples, and observing some properties that are useful in the context of benchmarking random quantum circuits. 
To this end, write the two-qubit swap operator $\mb S  = P_{\bigvee^2} - P_{\bigwedge^2}$
as the difference between the projectors onto the symmetric subspace $P_{\bigvee^2} = \proj{\sigma_{00}} + \proj{\sigma_{01}} + \proj{\sigma_{10}}$ and the antisymmetric subspace $P_{\bigwedge^2} = \proj{\sigma_{11}}$. 
The overlap 
$\tr[\rho \sigma] = \tr[(\rho \otimes \sigma) \mb S] $ can then be directly estimated up to error $\epsilon$ from $M \in O(1/\epsilon^{2})$ Bell samples as 
\begin{align}
  \frac1 M  \left(|\{r : \pi_Y(r) = 0 \}| - |\{r : \pi_Y(r) = 1 \}|\right). 
\end{align}

\begin{figure}
  \includegraphics{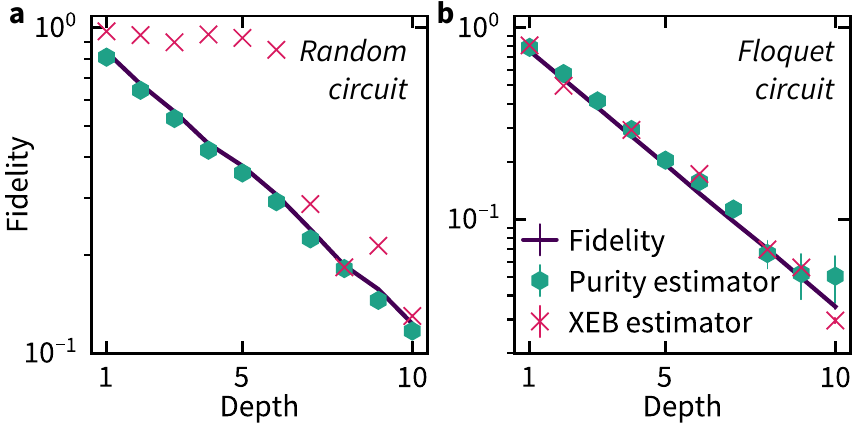}
  \caption{\textbf{Fidelity estimation based on noisy Bell sampling.} We simulate noisy Bell sampling and XEB including noisy measurements using $10^6$ samples and compute the fidelity (lines), and the purity (hexagons), and XEB (crosses) based estimators of fidelity for 
  \emph{(a)} typical Clifford circuits with two-qubit random gates in an all-to-all connected architecture with XEB $1$ on $n = 20$ qubits and Pauli $(X,Y,Z)$ error probabilies $p =  0.005 \cdot (1, 1/3,1/10)$, and 
  \emph{(b)} crystalline Floquet Clifford circuits that are scrambling \cite{sommers_crystalline_2023} on $18$ qubits in 1D with (depolarized) two-qubit gate fidelity $0.98$. Missing XEB points are due to ideal XEB values $0$. Error bars represent one standard deviation.
  \label{fig:noisy purity and fidelity}}
\end{figure}
For noisy quantum state preparations $\rho \otimes \rho$, we can thus estimate the purity $P = \tr [\rho^2]$ of $\rho$. 
We now argue that the purity is a good estimator for the fidelity $F=\bra C \rho \ket C$ of the state preparation if noise is local circuit-level Pauli noise. 
Pauli channels naturally appear as the effective noise after repeated rounds of syndrome extraction in stabilizer codes~\cite{beale_quantum_2018,huang_performance_2019}.
We can also ``force'' the noise into Pauli noise using randomized compiling implemented independently on two copies of a fixed circuit; 
this converts experimentally relevant noise into Pauli channels~\cite{wallman_noise_2016,winick_concepts_2022}. 
We show that the average fidelity of random circuits with local Pauli noise relates to the average purity as $\mb E_C F = \sqrt{\mb E_C P}$ if the Bell measurement is noise-free \cite{liu_benchmarking_2022,ware_sharp_2023}.
The root purity is thus a good estimator of the average fidelity from the Bell samples irrespective of the noise rate and depth of the circuit ensemble with precision $\sim 1/(F\sqrt M)$ in the number of Bell samples $M$. 
Assuming that the local Pauli noise channels are independently identically distributed, we can extend this estimator to noisy Bell measurements as  
\begin{align}
\label{eq:modified purity estimator}
   \overline F = (\mb E_C P)^{\frac E {2 (E + 2/3)}},
\end{align}
where $E$ is the number of error locations in the circuit prior to the measurement. 
We expect this estimator to be accurate even for typical and non-random circuits and give numerical evidence that this is the case in \cref{fig:noisy purity and fidelity} for  random Clifford circuits and non-random Floquet Clifford circuits.

How does the purity estimator compare to other means of estimating the fidelity of a quantum state? 
A widely used method is cross-entropy benchmarking (XEB), which is obtained from classical samples in the computational basis~\cite{arute_quantum_2019,choi_preparing_2023}. 
XEB is sample-efficient for random circuits, but requires computing ideal output probabilities of $C$, making it infeasible for already moderate numbers of qubits and non-Clifford gates. 
The XEB is a good estimator of the fidelity in the regime of low local error probabilities $\eta \lesssim 1/n$ and for depths $d \in \Omega(\log n)$, but not outside of those regimes~\cite{ware_sharp_2023,morvan_phase_2023} as witnessed in \cref{fig:noisy purity and fidelity}(a). 
In contrast, Bell sampling is computationally and sample efficient independently of the circuit, and the root purity estimator of fidelity is accurate in both the regimes of high noise and of low depths.  
Correlated coherent errors on both copies na\"ively ruins the correspondence between fidelity and root purity, but independent randomized compiling on the two copies recovers it by removing these correlations.

In the SM~\cite{suppmat}, we show these results, elaborate the various estimators and also discuss the relation of Bell sampling to different means of verifying quantum computations more generally.  
From now on, we will assume that the purity is close to unity. 

\paragraph*{Depth test.}
We now describe a Bell sampling protocol to measure the depth of a quantum circuit $C$ which is promised to be implemented in a fixed architecture, i.e., with gates applied in layers according to a certain pattern. 
The basic idea underlying the depth test is to use swap tests on subsystems of different sizes in order to obtain estimates of subsystem purities. 
For a subsystem $A$ of $[n]$, the subsystem purity is given by $ P_A(\rho)  = \tr[\rho_A^2]$,
where $\rho_A = \tr_{A^c}[\rho]$ is the reduced density matrix on subsystem $A \subset [n]$. 
It can be estimated from the fraction of outcome strings with even $Y$-parity $\pi_Y(r_A)$ on the substrings $r_A = (r_i,r_{n+i})_{i \in A}$. 

\begin{figure}
  \includegraphics{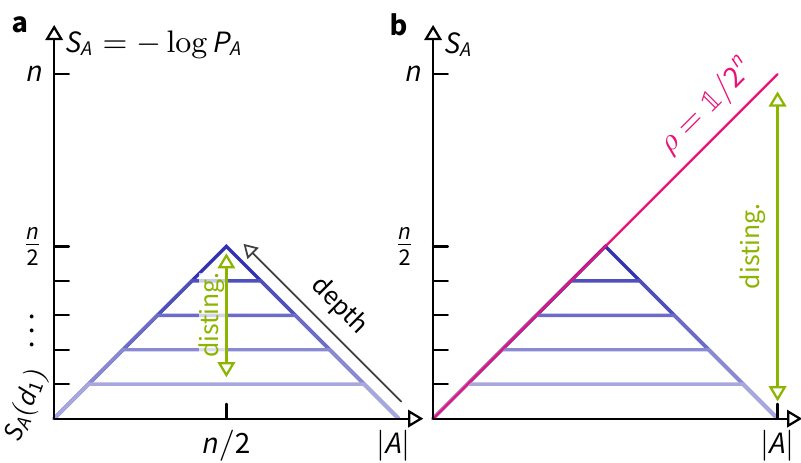}
  \caption{\label{fig:page curves} \textbf{Depth-dependent Page curves.} (a) The maximal subsystem entanglement entropy depends on the circuit architecture and depth (shades of blue) until the half-cut entanglement reaches its maximal value given by $n/2$. We measure the subsystem entropy at half-cuts to obtain the maximal sensitivity to different circuit depths. (b) We detect errors in the Bell samples by detecting strings that lead to a non-zero estimate of the purity of $\rho$. }
\end{figure}

Our test is based on the observation that the amount of entanglement generated by quantum circuits on half-cuts reaches a depth-dependent maximal value until it saturates at a circuit depth that depends on the dimensionality of the circuit architecture, see \cref{fig:page curves}(a) for an illustration. 
In order to lower-bound the depth of a circuit family we choose a subsystem size at which the distinguishability between different depths is maximal. 
This is typically the case at half-cuts, where the R\'enyi-$2$ entanglement entropy $S_A(\rho) = - \log P_A(\rho)$ can be at most $n/2$. 
At the same time, the entanglement entropy is bounded as a function of depth $S_A(d) \leq d|\partial A|$, where $\partial A$ is the number of gates applied across the boundary of $A$ in every layer of the circuit. 
We now compute an empirical estimate $\hat S_{n/2}$ of $S_A(\proj C)$ for a size-$n/2$ subsystem $A$ using the Bell samples and then compute the maximum $d$ such that $\hat S_{n/2} - \epsilon \geq d \cdot |\partial A|$ up to an error tolerance $\epsilon$ depending on the number of Bell samples. 
We can further refine this test for random quantum circuits by exploiting their average subsystem entanglement properties, known as the \emph{Page curve}~\cite{page_average_1993}. 
Depth-dependent Page curves have been computed analytically \cite{nahum_quantum_2017} and numerically \cite{sommers_crystalline_2023} for a few circuit architectures and random ensembles. 

We remark that these entanglement-based tests rely on universal features of quantum chaotic dynamics.  As a result, they are also expected to be applicable to generic Hamiltonian dynamics, similar to how ideas for standard quantum random sampling have recently been extended to this case~\cite{Mark23,Shaw23}.

\paragraph*{Magic test and Clifford+$T$ learning algorithm.}
Another primitive that can be exploited in property tests of quantum states using the Bell samples is the fact that for stabilizer states $\ket S$, the Bell distribution is supported on a coset of the stabilizer group of $\ket S$ \cite{montanaro_learning_2017}. 
Leveraging this property allows for efficiently learning stabilizer states~\cite{montanaro_learning_2017}, testing stabilizerness~\cite{gross_schurweyl_2021}, learning circuits with a single layer of $T$-gates~\cite{lai_learning_2022} and estimating measures of magic~\cite{haug_scalable_2023,haug_efficient_2023}. 
Here, we describe a simple, new protocol that, from the Bell samples, allows us to efficiently estimate the stabilizer nullity, a magic monotone~\cite{beverland_lower_2020}, and learn states that can be prepared by quantum circuits with $t \in O(\log n)$ $T$-gates.

Our learning algorithm proceeds in two steps.
In the first step, we find a compression of the non-Clifford part of the circuit, similarly to Refs.~\cite{arunachalam_parameterized_2022,leone_learning_2023-1}. 
To achieve this, using Bell difference sampling \cite{gross_schurweyl_2021}, we find a Clifford unitary $U_{C}$ corresponding to a subspace $C \subset \mb F_2^{2n}$ such that $U_{C} \ket \psi$ has high fidelity with $\ket x \ket \varphi$ for some computational-basis state $\ket x$  on the first $\dim(C)$ qubits, and a state $\ket \varphi$ on the remaining qubits containing the non-Clifford information. 
The dimension of $C$ satisfies $\dim(C) \geq n- t$.
The number of $T$-gates~$t$ required to prepare $\ket \psi$ is therefore lower-bounded by the stabilizer nullity $M(\ket \psi) \coloneqq n- \dim(C)$, which is a magic monotone \cite{beverland_lower_2020}. 
We show that only $O(n/\epsilon)$ Bell samples are sufficient to ensure that $\ket \psi$ is $\epsilon$-close to a state with exact stabilizer nullity given by the estimate $\hat M$ of $M(\ket \psi)$. 
To the best of our knowledge this is the most efficient way of measuring the magic of a quantum state to date.

In the second step of the learning algorithm, we characterize the state $\ket \varphi$ on the remaining $n - \dim(C) \leq t $ qubits using pure-state tomography, for example via the scheme of Ref.~\cite{guta_fast_2018}, giving an estimate $\ket {\hat \varphi}$. 
The output of the algorithm is a classical description of $\ket {\hat \psi} = U_{C} \ket x \ket {\hat \varphi}$.
The learning algorithm runs in polynomial time and succeeds with high probability in learning an $\epsilon$-approximation to $\ket \psi$ in fidelity using $O(n /\epsilon)$ Bell samples and $O(2^{t}/\epsilon^2)$ measurements to perform tomography of $\ket x \ket {\hat \varphi}$. 

Using Clifford+$T$ simulators \cite[e.g.][]{bravyi_simulation_2019-1,pashayan_fast_2022,campbell_unified_2017} we can now produce samples from and compute outcome probabilities of  $\ket {\hat \psi}$ in time $O(2^t)$. 
We note that the exponential scaling in $t$ is asymptotically optimal since the description of a state with stabilizer nullity $t$ has $2^t + n-t$ real  parameters. 
Our algorithm generalizes to arbitrary non-Clifford gates.

To summarize, we have given efficient ways to extract properties of the circuit $C$---its depth and an efficient circuit description for circuits with low $T$-count---using only a small number of Bell samples.
Further properties of $\ket C$ that can be efficiently extracted from the Bell samples include the expectation values of any diagonal two-copy observables $A = \sum_{r} a_{r} \proj {\sigma_{r}}$ and different measures of magic~\cite{haug_scalable_2023}.
The Bell samples thus serve as an efficient classical shadow of $C$.


\paragraph*{Error Detection and Correction}
In the last part of this paper, we discuss another appealing feature of Bell samples: we can perform error detection and correction.
The idea that redundantly encoding quantum information in many copies of a quantum state allows error detection goes back to the early days of quantum computing. 
Already in 1996, \textcite{barenco_stabilisation_1996} have shown that errors can be reduced by symmetrizing many copies of a noisy quantum state.
More recently Refs.~\cite{cotler_quantum_2019,koczor_exponential_2021,huggins_virtual_2021} used measurements on multiple copies to suppress errors in expectation value estimation. 
In our two-copy setting, some simple \emph{single-sample} error detection properties follow immediately from the tests in the previous section. 

First, we observe that an outcome in the antisymmetric subspace, i.e., an outcome $r$ with $\pi_Y(r) = 1$, is certainly due to an error.
We can thus reduce the error in the sampled distribution by discarding such outcomes. 
We show in the SM \cite{suppmat} that such error detection reduces the error rate of a white-noise model by approximately a factor of $2$. 
Quantum computations in the Bell sampling model with error detection can thus achieve comparable fidelities to circuit model computations, where no error detection is possible, in spite of the factor of 2 in qubit overhead.  

Second, we note that Bell samples are compatible with stabilizer codes. 
For such codes, the Bell measurement between code blocks is a transversal measurement, and allows to extract the syndrome $\sigma \otimes \sigma$ for $\sigma \in \mathsf P_n$ in the stabilizer of the code~\cite{gottesman_introduction_2009}. 
If a detectable/correctable error occurred in one of the code blocks, this syndrome detects/identifies that error up to stabilizer equivalence.
The fact that the Bell measurement is transversal implies that an error in the Bell measurement does not spread, so that local error channels or coherent errors in the entangling \texttt{cnot} gates in the Bell measurement reduce the overall measurement fidelity by $(1-\epsilon)^{n}$, where $\epsilon$ is the error rate per Bell pair. 
Bell sampling is thus accurate in the regime of $\epsilon n \ll 1 $. 
We also note that antisymmetric errors in the Bell measurement are  detectable. 

Finally, we observe that quadratic error suppression is possible for estimating the expectation values of diagonal two-copy observable $A$, through the estimate $\tr[A \rho^{\otimes 2}]/\tr[\mb S\rho^{\otimes2}]$,  similar to virtual distillation \cite{cotler_quantum_2019,huggins_virtual_2021,koczor_exponential_2021}.
Specifically, this is true for estimating the expectation $\bra {E_0} P \ket {E_0}^2$ of a Pauli observable $P$ in the ground state $\ket {E_0} $ of a gapped Hamiltonian by choosing $A = (P \otimes P ) \mb S$, see the SM \cite{suppmat} for details.


\paragraph*{Discussion and outlook. }
In this work, we have proposed and studied Bell sampling as a model of quantum computation. 
We have shown that many properties of the quantum circuit preparing the underlying state can be extracted efficiently, and that in particular certain errors in the state preparation can be detected from single shots.  
Based on this, we have argued that the Bell samples are classical circuit shadows.
Since Bell sampling is universal this allows us to perform universal quantum computations whose outputs also yield information about the quantum circuit. 
This makes Bell sampling an interesting computational model, and our main focus in this work is to establish this fact. 

Bell sampling is not only interesting conceptually, however.
It is also realistic. Since the Bell basis measurement requires only transversal \texttt{cnot} and single-qubit gates, it can be naturally implemented in unit depth on various quantum processor architectures with long-range connectivity. 
These include in particular ion traps \cite{bermudez_assessing_2017} and Rydberg atoms in optical tweezers \cite{bluvstein_quantum_2022}. 
It is more challenging to implement Bell sampling in geometrically local architectures such as superconducting qubits \cite{arute_quantum_2019}. 
In such architectures, one can interleave the two copies in a geometrically local manner such that the Bell measurement is a local circuit; however, this comes at the cost of additional layers of SWAP gates for every unit of circuit depth. Alternatively, one can use looped pipeline architectures to implement the Bell measurement \cite{Cai2022}.

But is Bell sampling also practical in the near term? Initial experimental results indicate that it is practical for logical qubit architectures \cite{bluvstein_logical_2023}, but to more fully answer this question various sources of noise need to be analyzed in more detail. 
For the fidelity estimation protocols we have analyzed the setting of Pauli noise relevant, e.g., to circuits implemented with independent randomized compiling on the two copies or to large-scale fault-tolerant circuits with repeated rounds of error correction. 
It remains an important open question whether we can develop noise-robust versions of the depth and magic tests.
While we have exploited the purity of the state $\ket C$ in our error detection protocol, it is in an interesting question whether it is possible to detect additional errors from the Bell samples. 
For instance, it might be possible to exploit the fact that the subsystem purity of the target state is low for large subsystems, see \cref{fig:page curves}. 

We have shown that classically simulating the Bell sampling protocol with universal random circuits is classically intractable. 
An exciting question in this context is whether the complexity of Bell sampling might be more noise robust than computational-basis sampling in the asymptotic scenario.
For universal circuit sampling in the computational basis \textcite{gao_efficient_2018,aharonov_polynomial-time_2022} developed an algorithm that simulates sufficiently deep random circuits with a constant noise rate in polynomial time. 
In the Supplementary Material~\cite{suppmat} we give some initial evidence that this simulation algorithm fails for Bell measurements.
If the hardness of Bell sampling indeed turns out to be robust to large amounts of circuit noise, we face the exciting prospect of a scalable quantum advantage demonstration with classical validation and error mitigation.

\smallskip

\emph{Note:} While finalizing this work, we became aware of Refs.~\cite{grewal_efficient_2023-2,leone_learning_2023}, where the authors independently report algorithms similar to the one we present above for learning quantum states generated by circuits with low $T$-count.
After this work was completed, we collaborated on the physical implementation of Bell sampling in a logical qubit processor, illustrating the feasibility of our results to near-term devices \cite{bluvstein_logical_2023}. 

\paragraph*{Acknowledgements}
D.H. warmly thanks Abhinav Deshpande and Ingo Roth for helpful discussions that aided in the proofs of \cref{thm:sharpp hardness probs} and \cref{lem:subspace}, respectively.  
We are also grateful to Dolev Bluvstein, Maddie Cain, Bill Fefferman, Xun Gao, Soumik Ghosh, Alexey Gorshkov, Vojt\v ech Havl\'i\v cek, Markus Heinrich, Marcel Hinsche, Marios Ioannou, Marcin Kalinowski, Mikhail Lukin and Brayden Ware for discussions. This research was supported in part by NSF QLCI grant
OMA-2120757 and Grant No.\ NSF PHY-1748958 through the KITP program on ``Quantum Many-Body Dynamics and Noisy Intermediate-Scale Quantum Systems.''
D.H.\ acknowledges funding from the US Department of Defense through a QuICS Hartree fellowship.

\let\oldaddcontentsline\addcontentsline
\renewcommand{\addcontentsline}[3]{}
\putbib
\let\addcontentsline\oldaddcontentsline%
\end{bibunit}

\onecolumngrid
\cleardoublepage
\setcounter{section}{0}
\setcounter{page}{1}
\setcounter{equation}{0}
\thispagestyle{empty}
\begin{center}
\textbf{\large Supplementary Material for ``Bell sampling from quantum circuits''}\\
\vspace{2ex}
Dominik Hangleiter and Michael J. Gullans
\vspace{2ex}
\end{center}

\begin{bibunit}

\twocolumngrid
\renewcommand\thesection{S\arabic{section}}

\pagenumbering{roman}
\renewcommand\thefigure{S\arabic{figure}}
\renewcommand\theequation{S\arabic{equation}}

\tableofcontents
\section{\bqp-completeness of Bell sampling}
In this section, we show that Bell sampling is \bqp-complete as a model of quantum computation in spite of the fact that we are restricting to circuits of the form $C \otimes C$ and measurements in the transversal Bell basis.

\begin{lemma}[\bqp-completeness]
  Bell sampling is \bqp-complete.
\end{lemma}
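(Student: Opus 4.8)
The plan is to prove two inclusions: that Bell sampling can be carried out in \bqp, and conversely that every \bqp\ computation can be decided by classically post-processing Bell samples. The first inclusion is immediate, since the protocol prepares $\ket{C}\otimes\ket{C}$, applies the depth-one transversal \texttt{cnot}-plus-Hadamard circuit, and measures in the computational basis; this is a polynomial-size quantum computation, so any decision obtained by classical polynomial-time post-processing of the samples lies in \bqp. For the converse I would start from a \bqp\ language decided by a verifier circuit $C$ whose acceptance is read off the first output qubit. Membership is then equivalent to determining the sign of $\bra{C}Z_1\ket{C}$ with constant confidence, since the acceptance probability equals $p_0=(1+\bra{C}Z_1\ket{C})/2$ (taking acceptance to mean the first qubit reads $0$), and the promise gap of \bqp\ guarantees $p_0\ge 2/3$ or $p_0\le 1/3$.

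The difficulty is that, by \cref{eq:output probs}, the Bell samples only expose the squared magnitudes $|\bra{C}\sigma_r\ket{\overline C}|^2$; in particular, reading out $Z_1\otimes Z_1$ on the two independent copies returns $(\bra{C}Z_1\ket{C})^2$, which fixes $|\bra{C}Z_1\ket{C}|$ but destroys its sign---and this sign is exactly what separates $p_0\ge 2/3$ from $p_0\le 1/3$. This is the main obstacle, and I would resolve it with the Ramsey-type gadget alluded to in the main text. I append a single ancilla qubit $q$ in each copy, initialized to $\ket{0}$, and apply one $Y$-rotation $R_Y(\pi/2)$ on $q$ controlled on the first qubit of $C$. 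A short computation then shows that the resulting marginal satisfies $\bra{C'}Z_q\ket{C'}=p_0\in[0,1]$: the branch with output $0$ (weight $p_0$) leaves $q=\ket{0}$ and contributes $+1$, while the branch with output $1$ rotates $q$ to $\ket{+}$ and contributes $0$, with no cross terms since the two branches live in orthogonal sectors of the first qubit. The key point is that this expectation is sign-definite (nonnegative), so squaring it is an invertible operation.

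I would then Bell-sample two copies of $C'$, estimate $(\bra{C'}Z_q\ket{C'})^2=p_0^2$ from the marginal outcome statistics on the $q$-pair---the two-bit Bell outcome of a pair reveals the $Z\otimes Z$ eigenvalue, and because the two copies are independent this average equals $(\bra{C'}Z_q\ket{C'})^2$---and take the nonnegative square root to recover $p_0$, hence both the sign and the magnitude of $\bra{C}Z_1\ket{C}$. Since the two cases $p_0\ge 2/3$ and $p_0\le 1/3$ are separated by a constant gap, $O(1)$ Bell samples suffice. The reduction respects the restricted $C'\otimes C'$ form and the fixed transversal Bell measurement, since it only adds one ancilla and one controlled single-qubit gate per copy and reads out a single-qubit marginal; the conjugation appearing in \cref{eq:output probs} is harmless here because $Z_q$ is real and the copies are independent, so no relative-phase information is needed. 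Combined with the first inclusion, this establishes \bqp-completeness.
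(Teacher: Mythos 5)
Your proposal is correct and follows essentially the same route as the paper: both overcome the sign ambiguity of Bell-sample estimates (which only give squared Pauli expectations) by attaching a single ancilla and a Ramsey-type gadget so that the acceptance probability appears as a \emph{sign-definite} single-qubit Pauli expectation of the modified circuit $C'$, which can then be recovered by taking the nonnegative square root of the estimated square. The only difference is cosmetic: you use a controlled-$R_Y(\pi/2)$ with the ancilla in $\ket{0}$ and read out $Z_q$ to get $p_0$ directly, while the paper applies $\ee^{-\ii\frac{\pi}{8}(Z_0 Z_1 + Z_0)}(H\otimes C)$ and reads out $|\bra{C'}X_0\ket{C'}| = p_1$.
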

\begin{proof}
Consider an arbitrary quantum circuit $C$. Then estimating the probability of measuring the first qubit of $C$ in the $\ket1$ state up to additive precision is \bqp-complete.
The idea of the proof is to design a circuit $C'$ such that we can infer $p_1 \coloneqq \tr[(\proj{1} \otimes \id_{n-1}) C \proj {0^n}C^\dagger]$ from Bell samples from $C'$.  
Let $\rho = \tr_{[n]\setminus \{1\}}[C \proj {0^n}C^\dagger] $ be the reduced density matrix of the first qubit of the state $C \ket 0 $ before the measurement.
Then our task is to estimate $p_1 = \tr [\rho \proj 1 ]$ from Bell sampling. 

To achieve this, we make use of the fact that we can infer the square of any Pauli expectation value $\bra C P \ket C^2$ up to additive precision from Bell samples from $\ket C \otimes \ket C$ for any circuit $C$. 
Starting from an arbitrary quantum circuit $C$ we add an ancillary qubit (labeled by $0$) before the first qubit and run the circuit $C' = \ee^{- \ii \frac \pi 8  (Z_0 Z_1 + Z_0) } (H \otimes C )$, where $H$ denotes the Hadamard gate.
The resulting marginal state on the ancillary qubit  is then given by 
\begin{multline}
 \tr_{[n]}[\ee^{- \ii \frac \pi 8  (Z_0 Z_1 + Z_0) }(\proj + \otimes \rho )\ee^{\ii \frac \pi 8  (Z_0 Z_1 + Z_0)} \\= \frac 1 2 \left( \id + \left(\frac 12 - \frac 12\bra C Z_1 \ket C \right) X + \left( \frac 12 + \frac 12\bra C Z_1 \ket C  \right) Y\right). 
\end{multline}
From Bell samples from $\ket{C'} \otimes \ket{C'}$ we can now estimate $|\bra {C'} X_0 \ket {C'}| = |1 - \bra C Z_1 \ket C|/2 = p_1$.
\end{proof}

\section{Classical hardness of Bell sampling}

In this section, we provide complexity-theoretic evidence that sampling from the Bell distribution $P_C$ up to constant total-variation distance (TVD) error is classically intractable. 

In order to show this, we follow a standard proof strategy, which has three main ingredients. 
\begin{enumerate}
  \item \textbf{Hiding}:  The distributions over outcomes and circuit instances are interchangeable.
  \item \textbf{Average-case \gapp\ hardness} of approximating the output probabilities. While we cannot prove this in any instance, we typically provide evidence for it using three ingredients:
  \begin{enumerate}
    \item  Worst-case \gapp\ hardness of approximation,
    \item  Average-case \gapp\ hardness of near-exact computation, and
    \item Anticoncentration.
  \end{enumerate} 
\end{enumerate}
We defer the reader to Section III of Ref.~\cite{hangleiter_computational_2023-1} for a detailed exposition of the proof strategy.

\subsection{Hiding}

Consider the Bell sampling distribution 
\begin{align}
  P_C(r) & = \frac{1}{2^n}|\bra{ \sigma_r }C \otimes C \ket {0^{2n}}|^2\\
  & = \frac{1}{2^n}| \bra {\Phi^+} (\sigma_r \otimes \id) (C \otimes C) \ket {0^{2n}}|^2.
\end{align}
Then, using that $\ket{\Phi^+} = \vecz (\id)/{\sqrt 2}$, we find that 
\begin{align}
   (\sigma_r \otimes \id) \ket {\Phi^+} = (\sqrt{\sigma_r} \otimes \sqrt{\sigma_r}^T)\ket {\Phi^+},
\end{align}
and hence
\begin{align}
  P_{C}(r) = | \bra {\Phi^+} \sqrt{\sigma_r}C  \otimes \sqrt{\sigma_r}^T C \ket {0^{2n}}|^2, 
\end{align}
so that for $x$ such that $\sqrt{\sigma_r} \neq \sqrt{\sigma_r}^T$, we cannot write $P_C(r) = P_C'(0^n)$ for some $C' = C'(C,r)$.
And indeed, while $\sqrt X ^T = \sqrt X $ and $\sqrt Z ^T = \sqrt Z$, we have $\sqrt Y^T \neq \sqrt Y$.

This means that hiding does not hold on the full output distribution. 
However, we can restrict ourselves to the \textit{a priori} known support of the Bell sampling distribution, which is given by the symmetric subspace, characterized by $\pi_Y(r) =0$. 
Indeed, we can view the fact that $\sqrt Y^T \neq \sqrt Y$ as a signature of the fact that the corresponding Bell state is not symmetric.
Conversely, consider an even number of $Y$-outcomes. 
Specifically, for a two-qubit state with two $Y$-outcomes, we can explicitly check that $
\sqrt{Y \otimes Y}^T = \sqrt{Y \otimes Y}$. 

Hence, hiding holds on the symmetric subspace for any architecture that  includes a layer of single-qubit gates that is invariant under $\sqrt{X}, \sqrt{Y}, \sqrt{Z}$ at the end of the circuit. 

Furthermore, defining $C_r = \sqrt {\sigma_r} C $ we observe that 
\begin{align}
  |\bra {\sigma_r} C_s \ket {0^{2n}}|^2 = |\bra {\sigma_{r \oplus s}} C \ket {0^{2n}} |^2,
\end{align}
since $\sqrt{\sigma_r} \sigma_s \sqrt{\sigma_r} \in \sf P_{n}$, where $\sf P_n$ is the Pauli group with phases in $\{\pm 1, \pm \ii\}$. 

\subsection{Average-case \gapp\ hardness of approximating outcome probabilities}

\subsubsection{Worst-case \gapp\ hardness of approximating outcome probabilities}
\begin{lemma}
\label{thm:sharpp hardness probs}
  Given a quantum circuit $C$, it is \gapp-hard to compute $2^{-n}|\bra C\sigma_r \ket{\overline C}|^2$ up to relative error $<1/2$.
\end{lemma}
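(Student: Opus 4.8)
The plan is to prove the worst-case lower bound by exhibiting an efficiently constructible subfamily of instances $(C,r)$ for which the approximation task is as hard as computing an arbitrary \gapp\ function. The first step is to recognize the target quantity as an ordinary circuit amplitude. Since $\ket C = C\ket{0^n}$ and $\ket{\overline C} = \overline C\ket{0^n}$, we have $\bra C \sigma_r \ket{\overline C} = \bra{0^n} C^\dagger \sigma_r \overline C \ket{0^n}$, the $(0^n,0^n)$ amplitude of the efficiently constructible $n$-qubit circuit $C^\dagger \sigma_r \overline C$. I would specialize to real universal circuits $C$ (e.g.\ over $\{H,\text{Toffoli}\}$), so that $\overline C = C$, and to the valid outcome $\sigma_r = Z_1$ (i.e.\ $r_1=1$ with all other Pauli labels trivial). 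The quantity then becomes the Pauli expectation $\bra C Z_1 \ket C = 2p_0 - 1$, where $p_0$ is the marginal probability of outcome $0$ on the first qubit, so the object to be approximated is $2^{-n}(2p_0-1)^2$ up to the fixed, known factor $2^{-n}$.

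Next I would encode a \gapp\ function into this expectation value. Using the standard Feynman path-sum representation over a universal gate set, $2^m\,\bra{0^n} C^\dagger Z_1 C \ket{0^n}$ is a \gapp\ function of the circuit description for a suitable polynomial $m$; conversely, any \gapp\ function $g$ (which is poly-bounded in bit length) can be realized, up to the normalization $2^m$, as such an expectation with $|g|/2^m \le 1$. Thus from an instance $x$ of a \gapp-complete problem I can build in polynomial time a circuit $C_x$ with $\bra{C_x} Z_1 \ket{C_x} = g(x)/2^m$, so that the oracle for the approximation returns $g(x)^2/2^{2m}$ to relative error $<1/2$.

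The heart of the argument is promoting this relative-error oracle into exact recovery of the \emph{signed} value $g(x)$. Two facts drive this. First, relative error $<1/2$ detects zeros exactly: the returned value vanishes if and only if the true quantity does. Second, a relative-error approximation of $g(x)^2/2^{2m}$ pins down $|g(x)|$ up to a constant multiplicative factor, hence its order of magnitude. I would then use that $g-k$ is again a \gapp\ function for any polynomially-bounded integer $k$, so a fresh application of the amplitude encoding produces a circuit $C_{x,k}$ with $\bra{C_{x,k}} Z_1 \ket{C_{x,k}} = (g(x)-k)/2^{m'}$; feeding such circuits to the oracle returns $(g(x)-k)^2$ to within a constant factor for any chosen $k$. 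Comparing the oracle outputs at symmetric shifts $\pm K$ reveals the sign of $g(x)$, and an adaptive binary search over $k$ using the magnitude estimates and the zero-detector recovers $g(x)$ to full precision in polynomially many queries. Recovering $g(x)$ for the \gapp-complete $g$ establishes \gapp-hardness.

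The main obstacle I anticipate is precisely this last promotion step: a relative-error guarantee is on its face much weaker than exact computation, and one must verify carefully that the magnitude information, the exact zero-detector, and the \gapp-closure shift together drive an adaptive search that correctly pins down both the magnitude and the sign of an exponentially large \gapp\ value in polynomial time. Some care is also needed to ensure that each re-encoding keeps the normalization denominator fixed and the expectation within $[-1,1]$, so that the instances fed to the oracle remain legal throughout; the reduction to an amplitude and the path-sum encoding in the first two steps, by contrast, are routine.
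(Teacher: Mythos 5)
Your proposal is correct, and its skeleton matches the paper's: both proofs encode the gap of a \sharpP\ function into the squared quantity via a \emph{real} circuit, and both then promote the relative-error oracle to exact \gapp\ computation. The differences are in execution. For the encoding, you observe directly that for real $C$ and $\sigma_r = Z_1$ the target is $2^{-n}\bra C Z_1 \ket C^2$ and invoke the textbook construction (uniform superposition, reversibly compute $f$ into an ancilla) to get $\bra {C_x} Z_1 \ket{C_x} = g(x)/2^m$; the paper instead pushes the state $\propto\sum_x \ket x \ket{f(x)}$ through the explicit \texttt{cnot}+Hadamard Bell-measurement circuit and engineers an auxiliary function $f_g$ on two ancillas so that the specific outcome $(0^n 11, 0^{n+2})$ has amplitude $\propto \gap(g)$. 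The paper's heavier construction buys something your proof does not need for this lemma but the paper uses afterwards: that outcome can be rotated to the all-zero string by appending $\sqrt Z$ gates, which makes the hardness compatible with the hiding property on the symmetric subspace (and yields their side remarks about Pauli coefficients and the overlap $|\bra C \ket {\overline C}|^2$); your choice of $Z_1$ forgoes this but is simpler and fully sufficient for the statement as written. For the promotion step, the paper simply cites Proposition 8 of Bremner--Montanaro--Shepherd, whereas you re-derive it: exact zero detection under relative error, \gapp\ closure under shifts $g \mapsto g-k$, sign extraction by comparing queries at $\pm K$ with $K$ set to the constant-factor magnitude estimate, and an adaptive search that shrinks the interval containing $g$ by a constant factor per round until the integer $g$ is pinned down. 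That argument is sound (it is essentially the proof of the cited proposition), so the ``main obstacle'' you flag is not a real gap; the only point to state carefully is that the shifts $k$ must be allowed to be exponentially large in magnitude (polynomial bit-length), which \gapp\ closure under subtraction of polynomial-time computable integers handles, and that each re-encoded circuit comes with its own normalization $2^{m'}$ that the reduction must track.
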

\begin{proof}[Proof]
In order to perform the Bell basis measurement we apply $\texttt{cnot}(H \otimes \id)$ across all pairs of qubits. 
Writing $ \ket C = \sum_x c_x \ket x$, the pre-measurement state then transforms as 
\begin{align}
  \ket C \otimes \ket C =&   \sum_{xy} c_x c_y \ket x \ket{y} \\
   \xrightarrow{\texttt{cnot}^{\otimes n}} & \sum_{xy} c_x c_y \ket x \ket{x \oplus y}\\
    \xrightarrow{(H \otimes \id)^{\otimes n}} & \sum_{xyz} (-1)^{x\cdot z} c_x c_y \ket z \ket{x \oplus y} \eqqcolon \ket{\mc C}. 
\end{align}

Consider a state $\ket C$ on $n+1$ qubits and the $(0^n 1,0^{n+1})$-Bell amplitude 
\begin{align}
\label{eq:10 amplitude}
 \bra {0^n 1} \bra{0^{n+1}} \ket {\mc C } &= \sum_{xyz} (-1)^{x \cdot z} c_x c_y \braket {0^{n}1}{z}\braket {0^{n+1}}{x \oplus y}\\ & = \sum_x(-1)^{x \cdot 0^{n}1} c_x^2 \\
 & = \sum_{x: x_n = 0} c_x^2 - \sum_{x: x_n = 1} c_x^2 .
\end{align}
Let us now specify $\ket C \coloneqq \ket{C_f} \propto \sum_x \ket x \ket{f(x)}$ up to normalization, 
where $f: \bin^{n} \rightarrow \bin$ is an efficiently computable Boolean function. 
Observe that for $b \in \bin$
\begin{align}
  (\id \otimes \bra b) \sum_x \ket x \ket{f(x)} = \sum_{x: f(x) = b} \ket x. 
  \label{eq:stateb}
\end{align}
For $b=1$ the state \eqref{eq:stateb} is normalized by the square root of the number of accepting inputs of $f$, given by $ \#f \coloneqq | \acc(f)| \equiv |\{x: f(x)=1\}|$, and for $b=0$ by $\sqrt{2^n - \#f}$ . 
Hence, the coefficients of $\ket {C_f} = \sum_{x \in \bin ^{n+1}} c_x \ket x$ are given by 
\begin{align}
  c_{yb} \equiv (\bra y \otimes \bra b) \ket {C_f} =  \begin{cases} 1/\sqrt{\#f},  & f(y)=b = 1\\
   1/\sqrt{2^n - \#f},  & f(y)=b= 0\\
  0  & \text{otherwise}. \end{cases}
\end{align} 

Let us now add another qubit so that we have $n+2$ qubits in total. 
Given an arbitrary efficiently computable Boolean function $g: \{0,1\}^n \rightarrow \{0,1\}$ let us define
\begin{align}
  f_g &: \{0,1\}^{n+1} \rightarrow \{0,1\}\\
  & f_g(y) = \begin{cases}
    1 & \text{ if } y = (x, g(x))\\
    0 & \text{ if } y = (x, \neg g(x)). 
  \end{cases}
\end{align}
We now reversibly compute the function $f_g$ obtaining $n+2$-bit outcome strings $(y,f_g(y))$ with the property that $f_g(y) = 1$ if $y = (x,g(x))$. So the $(n+1)$st qubit encodes the outcome of $g$ while the last (i.e., $(n+2)$nd) qubit encodes the outcome of~$f_g$. 

$f_g$ can be efficiently computed: Let $D$ be the circuit that maps $\ket x \ket b \rightarrow \ket x \ket {g(x) \oplus b}$. Then the quantum circuit $C = (\id_n \otimes \texttt{cnot} (X \otimes \id)) (D^\dagger \otimes \id)$ computes $\ket y \ket 0 \mapsto \ket y \ket {f_g(y)}$. 

We observe that
\begin{align}
\label{eq:equality fg}
  | \acc(f_g)| =|\rej(f_g)| = 2^{n+1}/2
\end{align}
and, moreover, we have\footnote{We start labelling indices at $0$ and hence $ x_n$ is the $n+1$st bit of $x$.}
\begin{align}
\label{eq:equality g}
  |\acc(g)| = |\{ y \in \acc(f_g), y_{n} = 1 \}| &  =
  |\{ y \in \rej(f_g), y_{n} = 0 \}|, \\
  |\rej(g)| = |\{ y \in \acc(f_g), y_{n} = 0 \}| &  =
  |\{ y \in \rej(f_g), y_{n} = 1 \}|.
\end{align}
To see the latter, just observe that 
\begin{multline}
   |\{ y \in \acc(f_g), y_{n} = 1 \}| = 
  |\{ (x, g(x)): g(x) = 1 \}|\\
  =  |\{ (x, \neg g(x)): \neg g(x) = 0 \}| 
   = |\acc(g)|.
\end{multline}

Let us consider the outcome string $(0^n11, 0^{n+2})$ and compute the corresponding amplitude of the state $\ket{\mc C_{f_g}} = ((H \otimes \id) \cdot \texttt{cnot})^{\otimes (n+2)} \ket C \otimes \ket C$ as 
\begin{widetext}
\begin{align}
\label{eq:11 amplitude}
 \bra {0^n 11,0^{n+2}} \ket {\mc C_{f_g}}&= \sum_{xyz} (-1)^{x \cdot z} c_x c_y \braket {  0^{n} 11}{z}\braket {0^{n+2}}{x \oplus y}\\ & = \sum_x(-1)^{x \cdot (0^{n}11)} c_x^2 \\
 & = \sum_{x: x_n = 0, x_{n+1} = 0} c_x^2 
 - \sum_{x: x_n = 1, x_{n+1} = 0} c_x^2  
 - \sum_{x: x_n = 0, x_{n+1} = 1} c_x^2 
 + \sum_{x: x_n = 1, x_{n+1} = 1} c_x^2\\
  \stackrel{\eqref{eq:equality fg}}{=}& \frac 1 {2^{n}}\left(|\{y \in \rej(f_g): y_n = 0 \}|- |\{y \in \rej(f_g): y_n = 1 \}|\right)\\
  & \quad + \frac 1 {2^{n}} \left(|\{y \in \acc(f_g): y_n = 1 \}| -|\{y \in \acc(f_g): y_n = 0 \}| \right)  \\ 
\stackrel{\eqref{eq:equality g}}{=} &\frac{2}{2^{n}} \left(|\acc(g)| - |\rej(g)| \right)  \equiv \frac 1 {2^{n-1}} \gap(g)
\end{align}
\end{widetext}
This shows that the output amplitudes of Bell sampling from universal quantum circuits can encode the gap of any \sharpP-function. 
Finally, we reduce the outcome to the all-zero outcome---and by the hiding property any outcome in the symmetric subspace---by observing that the output string corresponds to the $\ket \id^{\otimes n} \ket Z^{\otimes 2} $ outcome and hence we can define 
$\tilde C_{f_g} = (\id^{\otimes n} \otimes (Z^{1/2})^{ \otimes 2}) C_{f_g}$ to show that $\bra {0^{2(n+2)}}\ket{\mc C_{f_g}}$ is \gapp-hard to compute.

By Proposition 8 of \textcite{bremner_average-case_2016} (see also Lemma~8 of Ref.~\cite{hangleiter_computational_2023-1}), approximating $|\bra {0^{2n}} \ket {\tilde {\mc C}_{f_g}}|^2$ up to any relative error $< 1/2$ or additive error $1/2^{2(n+1)}$ is \gapp-hard. 
\end{proof}

Notice that this argument also proves that computing certain Pauli coefficients of an $n$-qubit quantum circuit is \gapp\ hard up to relative error $<1/2$ since the circuit $C_{f_g}$ we have used in the encoding of the gap of a \sharpP\ function is real and $|\bra C \sigma_r \ket {\overline C}|^2 = |\tr[\proj C \sigma_r]|^2$ for a real circuit $C$. 
Notice, however, that we cannot reduce to the all-zero string in this case. 
This is also easy to see since the probability of the all-zero outcome of a real circuit is always one. 

Conversely, if we include the $\sqrt Z \otimes \sqrt Z $ gate at the end of $C_{f_g}$, this shows that computing the overlap $|\bra C \ket {\overline C}|^2$ is \gapp-hard in general.
\medskip 

The next step in applying the Stockmeyer argument is to see if approximate average-case hardness is plausible. To this end we can/need to make two arguments. 

\subsubsection{Near-exact average-case hardness}

First, we need to show (near-)exact average-case hardness, see \cite[Sec.~IV.D.5]{hangleiter_computational_2023-1} for the available techniques. 

Consider a circuit $C$ with some Haar-random 2-qubit gates. 
Let us follow the strategy by \textcite{krovi_average-case_2022}. 
Analogously, we interpolate every 2-qubit gate $G_i$ in the worst-case circuit $C$ to a Haar random gate $H_i G_i$ with $H_i$ drawn from the 2-qubit Haar measure. 
\begin{align}
G_i(\theta)& = \exp(\ii (1-\frac\theta{2m}) \log H_i) G_i\\
& = V_i^\dagger \sum_{k_i=1}^4 \ee^{\ii ( 1-\frac\theta{2m})\phi_{k_i}} \proj{\psi_{k_i}}V_i G_i\\
& \eqqcolon \sum_{k_j}  \ee^{\ii ( 1- \frac\theta{2m})\phi_{k_i}} \tilde G_{k_j}
\end{align}
where $V_i$ diagonalizes $H_i$ into eigenvectors $\ket {\psi_{k_i}}$ and eigenvalues $\exp(\ii \phi_{k_i})$. 
Notice that compared to \textcite{krovi_average-case_2022}, we interpolate only by an angle $\theta/2m$ instead of $\theta/m$.
Now, we consider the circuit $C(\theta)$ defined by replacing all the gates $G_i$ of $C$ with $G_i(\theta)$. 
We can write the output probability as 
\begin{multline}
  |\bra {0^{2n}} \ket {\mc C(\theta)}|^2 = | \bra {\Phi^+}^{\otimes n} C(\theta ) \otimes C(\theta) \ket {0^{2n}}\\
  = \left|\sum_{k_{ij}: i \in [m], j \in \bin} \ee^{\ii (1- \frac\theta{2m}) \sum_{ij} \phi_{k_{ij}}}
  \bra {\Phi^+}^{\otimes n} 
 \tilde G_{k_{00}} \cdots G_{k_{d1}}
  \ket {0^{2n}} \right|^2 \\
   =  \sum_{k,k'} \ee^{\ii (1- \frac\theta{2m}) \Delta \phi_{k,k'}} \bra {\Phi^+} \tilde G_k \ket 0 \bra 0 \tilde G_k^\dagger \ket {\Phi^+},
\end{multline}
where $k = (k_{00}, k_{01}, \ldots, k_{m1})$, $\tilde G_k = \prod_{ij} G_{k_{ij}}$ and $\Delta \phi_{k,k'} = \sum_{ij} (\phi_{k_{ij}} - \phi_{k'_{ij}})$. 
Since $|\Delta \Phi_{k,k'}| /2m \in O(1)$, we can now follow the argument of Krovi, replacing $m \leftarrow 2m$, to construct a polynomial of degree $d \in O(m / \log(m)^2)$ that approximates the output probabilities of Bell sampling. 
Using this polynomial, we can run robust polynomial interpolation to obtain near-exact average-case hardness. 

\subsubsection{Anticoncentration}
Finally, the question arises whether the output distribution anticoncentrates. Recall  that anticoncentration is defined as 
\begin{align}
  \Pr_{C \sim \mc C} \left[ P_C(S) \geq \frac 1 {| \Omega|} \right] \geq \gamma, 
\end{align}
for constant $\gamma$ and the sample space $\Omega$. 
Our standard tool for showing anticoncentration is the Payley-Zygmund inequality which lower bounds the anticoncentrating fraction in terms of second moments as 
\begin{align}
   \Pr_{C \sim \mc C} \left[Z  \geq \alpha \mb E[Z]\right] \geq (1- \alpha)^2 \frac { \mb E[Z]^2}{\mb E[Z^2]}, 
\end{align}
for a random variable $Z$ with $ 0 \leq Z \leq 1$ which we take to be $Z = P_C(S)$. 

This problem gets interesting for the Bell sampling circuits since already a single copy of the Bell sampling state has two copies of the circuit, so we will need second and fourth moments of the circuit family to show anticoncentration with the standard technique. 
Recall that the Bell sampling output distribution is given by 
\begin{align}
  P_C(r) = |\bra{ \sigma_{x} }C \otimes C \ket {0^{2n}}|^2, 
\end{align}
where $\sigma_r = \prod_{i=1}^N \sigma_{x_i}$.
Let's assume that we have a four-design and begin with the first moment. 
\begin{align}
  \mb E[P_C(r)] & = \bra{\sigma_r}  \mb E \left[C \otimes C \ket {0^{2n}} \bra{0^{2n}} C^\dagger \otimes C ^\dagger  \right] \ket{\sigma_r}\\
  & =\frac {1}{D_{[2]}} \bra{\sigma_r} P_{[2]}\ket{\sigma_r},
\end{align}
where $D_{[t]} = \binom{d + t -1}{t}$ and $P_{[t]}$ is the projector onto the symmetric subspace of $t$ copies (see \cite{zhu_clifford_2016} for a nice intro). 
We can explicitly compute the expression as 
\begin{align}
\bra{\sigma_r}& P_{[2]}\ket{\sigma_r}  = \frac 12 \bra{\sigma_r} (\id + \mb S) \ket{\sigma_r} \\
& = \bra {\Phi^+} (\sigma_r^2 \otimes \id) \ket{\Phi^+} + \bra {\Phi^+} (\sigma_r \sigma_r^T \otimes \id) \ket{\Phi^+}\\
& = 2^{n} (1 + (-1)^ {\pi_Y(r)}),  
\end{align}
where we in abuse of notation we write $\ket {\Phi^+} = \ket{\Phi^+}^{\otimes n} $ and observe that $\tr[XX^T] =\tr[ZZ^T] = \tr[\id]= 2^n  $ and $\tr[YY^T]=-2^n$. 
We find---as expected---that the Bell state with an odd number of singlet states (coresponding to $\sigma_r= Y$) is in the antisymmetric subspace and therefore the projector onto the symmetric subspace evaluates to zero in that case: 
\begin{align}
  \mb E [P_C(r)] = \begin{cases}
    D_{[2]}^{-1} = 2/(2^{2n}(1+ 2^{-n})) & \text{if } \pi_Y(r) \text{ even}\\
    0 & \text{if } \pi_Y(r) \text{ odd}
  \end{cases}.
\end{align}
The output distribution is thus supported on the even $Y$-parity sector on which all outcomes have equal expectation value given by $D_{[2]}^{-1} = 2/(2^n(2^n+1))$. 
The size of the even-$Y$-parity sector should be exactly given by this number since these strings correspond to a basis of the symmetric subspace.

The second moment is more complicated. 
It reads
\begin{multline}
  \mb E [\bra{\sigma_r}^{\otimes 2} C^{\otimes 4} \proj 0( C^{\dagger})^{\otimes 4} \ket {\sigma^r}^{\otimes2}] \\
  =   \frac 1 {D_{[4]}} \bra{\sigma_r}^{\otimes 2} P_{[4]}\ket {\sigma_r}^{\otimes2}
\end{multline}
To compute this overlap on the symmetric subspace on which $\sigma_r = \sigma_r^T$, we write $P_{[f]} = 4!^{-1} \sum_{\sigma \in S_4} P_\sigma $, where $P_\sigma$ is the permutation matrix corresponding to the element $\sigma$ of the symmetric group $S_4$. 
We observe that for a $c=1/3$-fraction of the permutations in the definition of $P_{[4]}$ the overlap evaluates to $\tr[\sigma_r^2]^2/2^{2n} = \tr[\id]^2/2^{2n} = 1$, while for the other $1-c$ fraction, it evaluates to $\tr[\sigma_r^4]/2^{2n} = \tr[\id]/2^{2n} = 1/2^n$. 
Viewing the Bell state $\ket {\Phi^+}$ as a vectorization of the identity matrix, these correspond exactly to the cases in which there are one versus two connected components in the resulting graph.

Consequently, we find 
\begin{multline}
  \mb E [\bra{\sigma_r}^{\otimes 2} C^{\otimes 4} \proj 0( C^{\dagger})^{\otimes 4} \ket {\sigma^r}^{\otimes2}] \\= \frac 1 { D_{[4]}} \left( c + \frac {1-c}{2^n}\right).
\end{multline}
for all even $Y$-parity $r$ (satisfying that $\pi_Y(r)$ is even). 

From this we find that if $\mc C$ generates a state $4$-design, for all $r$ in the symmetric subspace 
\begin{align}
  \Pr[P_C(r)  \geq \frac \alpha {D_{[2]}}] &\geq (1-\alpha)^2  \frac{1}{c + \frac {1-c} {2^{n}}}\frac {D_{[4]}}{D_{[2]}^2 } \\
  &= \frac{(1-\alpha)^2} {3! \cdot (c + \frac {1-c} {2^{n}})} \frac{(2^n + 3)(2^n + 2)}{(2^n+1)2^n}\\
  & \geq \frac{(1-\alpha)^2} {6c}
\end{align}

By the result of \textcite{brandao_local_2016}, this means that the Bell sampling distribution anticoncentrates in linear depth. 
Given the fact that we just copied two random circuits and qualitatively found the same results as for a single copy, we  conjecture that the Bell sampling distribution also anticoncentrates in $\log$-depth. 
To check this, we would have to directly compute the moments of the output distribution, but this becomes more complicated than for the single-copy case. 
This is because the local degrees of freedom the standard mapping to a statistical-mechanical model \cite{zhou_emergent_2019,hunter-jones_unitary_2019,barak_spoofing_2021,dalzell_random_2022}, increase from 2 to 24 permutations.

\section{Learning circuit properties from Bell sampling}

In the following, we will detail the tests that can be performed \emph{using just the samples} from the Bell distribution.
Success on those tests---while falling short of loophole-free verification---increases our confidence in the correctness of the experiment.

\subsection{Measuring purity and fidelity}

Our first observation is that, given some noisy state preparation $ \rho \otimes \sigma$, we can estimate $\tr [\rho \sigma]$ from Bell measurements on $\rho\otimes \sigma$. 
To see this, observe that we can express the single-qubit swap operator in the Bell basis as 

\begin{align}
  \mb S &= \proj{00} + \proj{11} + \ket{01}\bra{10} + \ket{10}\bra{01}\\
  & = \underbrace{\proj{\sigma_{00}} + \proj{\sigma_{01}} + \proj{\sigma_{10}}}_{P_{\bigvee^2}} - \underbrace{\proj{\sigma_{11}}}_{P_{\bigwedge^2}}
\end{align} 
Hence, we can estimate the overlap 
\begin{align}
 O(\rho, \sigma) = \tr[\rho \sigma] = \tr[(\rho \otimes \sigma) \mb S] 
\end{align}
by taking the difference between the frequency of outcomes with even parity of $11$-outcomes and odd parity of $11$-outcomes
\begin{align}
  \hat O = \frac 1 M \left(|\{r: \pi_Y(r) = 0 \}| - |\{r: \pi_Y(r) = 1\}|\right), 
\end{align}
where $M$ is the total number of measurements.
In case $\rho = \sigma$, 
\begin{align}
  O(\rho, \rho)  = \tr[\rho^2] = P(\rho)
\end{align}
is just the purity of $\rho$.

\subsubsection{Estimating the fidelity using the purity}

In this section, we will detail how an estimate of the purity can be used in order to estimate the fidelity 
\begin{align}
  F(\rho, \proj C ) = \bra C \rho \ket C,
\end{align}
of a state preparation $\rho $ compared to a target state $\ket C$. 

We model noisy state preparation $\rho_C$ of the state $\proj C$ prepared by a circuit $C$ as circuit level noise after each gate in $C$. 
Using randomized compiling implemented independently on each copy of the Bell sampling circuit, the effective noise channel for the Bell samples can then be approximately reduced to a Pauli channel \cite{wallman_noise_2016}.  
Strictly speaking, randomized compiling works well in the experimentally relevant setting in which there are some `easy' gates, say, Pauli gates, on which the noise is approximately gate-independent and some `hard' gates which have gate-dependent errors.
In that setting, the gate-dependent noise on the hard gates, except for those in the last layer of the circuit, can be reduced to a Pauli channel, and the pre-Bell-measurement state can be written as $\rho_C \otimes \rho_C$.

Let us begin by analyzing rigorously the estimation of the average fidelity 
\begin{align}
  \mb E_C F(\rho_C, \proj C)
\end{align}
over random quantum circuits $C$ from. 
Following an argument in Ref.~\cite{ware_sharp_2023}, consider circuits~$C$ constructed in such a way that every---potentially fixed---gate is followed by random single-qubit gates which form a unitary 2-design. 
For simplicity, consider iid. noise given by a single-qubit Pauli noise channel $\mc N =\sum_{i \in \{0,\ldots, 3\}} p_i \sigma_i \cdot \sigma_i $ acting on every qubit after the application of a two-qubit gate. 

\begin{lemma}[Translating fidelity and purity]
Let $\rho_C(p)$ with be the output state of a random noisy circuit $C$ with single-qubit two-design gates following every noisy gate, and Pauli noise probabilities $p = (p_0,\ldots, p_3)$. 
Then
\begin{align}
  \mb E_C \rho_C(p)^2  = \mb E_C F(\rho_C(q), \rho_C(0)),
\end{align}
where $q_0 = p_0^2 + \sum_{i \neq 0} p_i^2 $ and  for $k \neq 0$, $q_k = 2 p_k p_0 + \sum_{ij \neq 0 } |\epsilon_{ijk}| p_i p_j$ with the Levi-Civita symbol $\epsilon_{ijk}$. 
 \end{lemma}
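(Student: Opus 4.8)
The plan is to read the left-hand side as the average purity $\mb E_C \tr[\rho_C(p)^2]$ (the displayed $\rho_C(p)^2$ being shorthand for its trace) and to show that, after averaging over the single-qubit $2$-design gates, both the purity and the fidelity $\mb E_C F(\rho_C(q),\proj C)$ collapse to the \emph{same} sum over a single Pauli error pattern, weighted by a common circuit-dependent overlap. The two sides then differ only in the distribution assigned to that pattern, and the map $p\mapsto q$ is exactly what reconciles those distributions.

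First I would unravel the Pauli noise stochastically. Writing $\mathbf a=(a_\ell)_\ell$ for an assignment of a Pauli $\sigma_{a_\ell}\in\{\id,X,Y,Z\}$ to each of the $E$ noise locations, the noisy state is the mixture $\rho_C(p)=\sum_{\mathbf a}\Pr_p[\mathbf a]\,\proj{C_{\mathbf a}}$ with $\Pr_p[\mathbf a]=\prod_\ell p_{a_\ell}$ and $\ket{C_{\mathbf a}}$ the pure state obtained by inserting those Paulis into $C$. Expanding gives
\begin{align}
\tr[\rho_C(p)^2]=\sum_{\mathbf a,\mathbf b}\Pr_p[\mathbf a]\Pr_p[\mathbf b]\,\bigl|\braket{C_{\mathbf a}}{C_{\mathbf b}}\bigr|^2,
\end{align}
while $F(\rho_C(q),\proj C)=\sum_{\mathbf c}\Pr_q[\mathbf c]\,\bigl|\braket{C_{\mathbf c}}{C}\bigr|^2$. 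Both squared overlaps are bilinear in the ket and bra copies of the circuit, so each random single-qubit gate enters only through $\mb E_U[U^{\otimes2}\otimes \bar U^{\otimes2}]$; this is a second moment, reproduced exactly by a $2$-design, which is where the design hypothesis is used.

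The core step is to carry out the $2$-design average and prove the identity $\mb E_C\bigl|\braket{C_{\mathbf a}}{C_{\mathbf b}}\bigr|^2=\mb E_C\bigl|\braket{C_{\mathbf c}}{C}\bigr|^2$, where $\mathbf c$ is the per-location product pattern defined by $\sigma_{c_\ell}\propto\sigma_{a_\ell}\sigma_{b_\ell}$. Averaging gate by gate, $\mb E_U[U^{\otimes2}(\cdot)U^{\dagger\otimes2}]$ projects onto the commutant $\spn\{\id,\mb S\}$, the standard replica reduction with one Ising-like degree of freedom per location. A single final error cancels against its adjacent gate by unitarity and leaves exactly $\bra\phi\sigma_{a}\sigma_{b}\ket\phi=\bra\phi\sigma_{c}\ket\phi$ on either side, so the two-copy and one-copy overlaps agree trivially there; for multiple errors the conjugated Pauli strings no longer cancel, and one must show via the replica average (or an inductive peel-off-and-average argument) that the two-copy partition function equals the one-copy one location by location. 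I expect this gluing—tracking how the permutation degrees of freedom and conjugated errors propagate through the later $2$-design gates while keeping the two computations in lockstep—to be the main obstacle.

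It then remains to identify the induced distribution of the difference pattern. Since $\mathbf a,\mathbf b$ are drawn independently from $\Pr_p$, at each location the product $\sigma_{a_\ell}\sigma_{b_\ell}$ is distributed as the self-convolution of $p$ over the Pauli group modulo phase, i.e.\ over $\mb F_2^2$. Grouping $(a_\ell,b_\ell)$ by the value of the product gives $\Pr[\sigma_{a_\ell}\sigma_{b_\ell}\propto\id]=\sum_i p_i^2=q_0$ (attained exactly when $a_\ell=b_\ell$) and, for $k\neq0$, $\Pr[\sigma_{a_\ell}\sigma_{b_\ell}\propto\sigma_k]=2p_0p_k+\sum_{i,j\neq0}|\epsilon_{ijk}|p_ip_j=q_k$, where the first term is the two orderings pairing $\sigma_k$ with $\id$ and the second collects the pairs of distinct non-identity Paulis whose product is $\propto\sigma_k$. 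This is precisely the stated $q$, so substituting $\Pr_p[\mathbf a]\Pr_p[\mathbf b]\mapsto\Pr_q[\mathbf c]$ turns the purity sum into the fidelity sum at parameter $q$, yielding $\mb E_C\tr[\rho_C(p)^2]=\mb E_C F(\rho_C(q),\proj C)$.
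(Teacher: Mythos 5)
Your route is the same as the paper's---a replica/2-design reduction plus the observation that $q$ is the self-convolution of $p$ over the Pauli group modulo phase---and your bookkeeping is correct: the stochastic unraveling into error patterns, the reading of $\mb E_C\rho_C(p)^2$ as the average purity, and the computation of the induced distribution $q$ all match the lemma. However, the step you explicitly defer, namely proving
$\mb E_C\bigl|\braket{C_{\mathbf a}}{C_{\mathbf b}}\bigr|^2=\mb E_C\bigl|\braket{C_{\mathbf c}}{C}\bigr|^2$
for patterns with more than one error, is not a technical afterthought: it is the entire content of the lemma and is precisely what the paper's proof consists of. Flagging it as ``the main obstacle'' and gesturing at an inductive peel-off-and-average argument leaves the proof incomplete at its core.

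The missing idea is a one-line identity that makes your ``lockstep'' worry evaporate. Both sides are replica contractions of the form $\tr[\mb S\,X]$, where at each noise location the purity side conjugates by $\sigma_a\otimes\sigma_b$ (one Pauli per replica) and the fidelity side by $\sigma_c\otimes\id$. These two conjugations agree \emph{exactly} as linear maps on the local commutant $\spn\{\id,\mb S\}$: trivially on $\id$, and on $\mb S$ because
\begin{align}
(\sigma_i\otimes\sigma_j)\,\mb S\,(\sigma_i\otimes\sigma_j) = (\sigma_i\sigma_j\otimes\id)\,\mb S\,(\sigma_j\sigma_i\otimes\id),
\end{align}
which follows from $\mb S(A\otimes B)=(B\otimes A)\mb S$, the phases in $\sigma_i\sigma_j=\ii^m\sigma_c$ cancelling between the two factors. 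In the paper's channel language this is the pair of statements $\mc N^{\otimes 2}|\id\rrangle=|\id\rrangle$ and $\mc N^{\otimes 2}(\mb S)=(\mc N'\otimes\id)(\mb S)$, with $\mc N'$ the Pauli channel with probabilities $q$. Since each single-qubit 2-design average retains of the preceding object $X$ only the local components $\tr_i[X]$ and $\tr_i[\mb S_i X]$, and both conjugations preserve these two components for \emph{arbitrary} $X$ (cyclicity under the partial trace plus the identity above), the two computations agree after every average, location by location; there is no propagation of conjugated Pauli strings through later gates to track. Supplying this observation closes your gap and recovers the paper's proof.
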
 
Before we prove the lemma, note that the iid. assumption is not necessary since the translation of noise rates holds for every noise location individually. 

\begin{proof}
Let $\mc U = |U \rrangle \llangle U^\dagger| \equiv \mathrm{vec}(U \cdot U^\dagger)$ be the matricization of the adjoint action of the two-qubit gate $U$, where we denote the vectorization of a matrix $A$ by $|A \rrangle$. 
The full circuit is then composed of two copies of noisy unitary two-qubit channels given by $\mc N^{\otimes 4}{\mc U^{\otimes 2}}$, where the cut is across the bipartition of the Bell measurement, and let the noisy state be $\rho(\epsilon)$. 

Then, we can express the fidelity between a state with noise rate $\epsilon$ and a state with noise rate $0$, as the purity of a state with noise rate $\epsilon'$.
To see this, we evaluate the average over gates in a single layer of the circuit. 
Then 
\begin{align}
  \mb E_U \mc N^{\otimes 4} \mc U^{\otimes 2} =  \sum_{\pi, \sigma} \mc N^{\otimes 4} | \pi \rrangle \llangle \sigma |,  
\end{align}
and we can evaluate
\begin{align}
   \mc N^{\otimes 2} | \id \rrangle & = \mc N' \otimes \id | \id \rrangle  = | \id \rrangle, \\
 \mc N^{\otimes 2} (\mb S )& = 
 \sum_{ijk \in \{0,\ldots, 3\}} p_i p_j (-1)^{s(k)} \sigma_i \otimes \sigma_j \ket {\sigma_k} \bra {\sigma_k}\sigma_i \otimes \sigma_j \\
 & = 
 \sum_{ij \in \{0,\ldots, 3\}} p_i p_j (\sigma_i \sigma_j \otimes \id) \mb S(\sigma_j \sigma_i \otimes \id)\\
 & = \sum_k q_k (\sigma_k \otimes \id)\mb S( \sigma_k \otimes \id)\\
 & \equiv (\mc N'\otimes \id) (\mb S).
 \end{align} 
Here, we have defined $s(k) = 1$ for $k = 2$ (corresponding to $\sigma_k = Y$) and $0$ otherwise. 
\end{proof}

Thus, we have moved noise from one copy of the circuit to the other copy of the circuit, and thereby related the quantities for fidelity (one noisy, one ideal copy) and purity (two noisy copies). 
For the depolarizing channel with depolarizing parameter $\epsilon$, we have $p_0 = 1-3\epsilon/4, p_1 = p_2 = p_3 = \epsilon/4$, so $q_0 = (1-3\epsilon/4)^2 + 3(\epsilon/4)^2$, $q_1 = q_2 = q_3 = \epsilon/2 + 2 (\epsilon/4)^2$.
Hence, in this case the type of noise channel even remains the same, and purity with local depolarizing strength $\epsilon$ simply corresponds to fidelity with a different local depolarizing rate given by $2\epsilon$: 
\begin{align}
\label{eq:moving epsilon}
  \mb E_C P(\rho(\epsilon)) = \mb E_C F(\rho_C(2\epsilon)), \proj C).
\end{align}
For more general types of Pauli noise, the noise channels $\mc N$ and $\mc N'$ will be different. 
The mapping from purity to fidelity carries over through appropriate redefinition of the noise rate.  

The average fidelity of random quantum circuits is up to small corrections given by the probability of no error in the circuit, that is, 
\begin{equation}
  \mb E_C F(\rho_C(\eta), \proj C) =  (1 - \eta)^E. 
  \label{eq:fidelity decay}
\end{equation}
An analytical argument that this behaviour is accurate irrespective of the noise rate is given in Ref.~\cite{ware_sharp_2023}, where it is shown that the average fidelity is approximately given by 
\begin{align}
  \mb E_C F \approx 2^{-n } + (1 - \eta)^{nd} + C \Lambda^d 2^{-n}, 
\end{align}
where $d$ is the circuit depth so that $E = nd$, and $\Lambda< 1, C> 0 $.
This expression shows that the decay of the fidelity with circuit volume continues until $\mb E_C F \approx 2^{-n}$, where it plateaus for arbitrary circuit depth. 

Since $(1-\eta)^E \approx \ee^{-\eta E}$, \cref{eq:moving epsilon,eq:fidelity decay} imply that 
\begin{align}
   \mb E_C P(\rho(\epsilon)) \approx [\mb E_C F(\rho_C(\epsilon), \proj C)]^2, 
\end{align}
and therefore that 
\begin{align}
   \overline F_{\mathrm{purity}} = \sqrt {\mb E_C\tr[\rho_C^2]}
   \label{eq:supp root purity}
\end{align}
is a good estimator of the average fidelity. 

We note that the decay behaviour according to \eqref{eq:fidelity decay} has been observed numerically to high precision for arbitrary noise rates~\cite{ware_sharp_2023,morvan_phase_2023}. 
Thus, we expect the root average purity to be an accurate average fidelity estimator for arbitrary noise rates. 
At the same time, the average XEB is known to fail for high noise rates, as well as in the regime of very low depth, before anticoncentration sets in \cite{ware_sharp_2023}. 

\subsubsection{Estimating the fidelity of fixed circuits}

The analytical argument above is valid for the average fidelity and purity of random circuits with circuit-level Pauli noise. 
However, we also expect the fidelity decay with the circuit volume to be true of non-random circuits, and the root purity estimator \eqref{eq:supp root purity} to be accurate in such cases as well. 
Na\"ively, we can understand this expectation as follows. 
If the local noise is given by Pauli noise channels---which can be enforced using randomized compiling---the effective density matrix on each copy of a fixed state preparation can be written as 
\begin{align}
\label{eq:randcomp}
  \rho_C(\eta) \coloneqq (1-\gamma )\proj {C} + \gamma \,\sigma_C ,
\end{align}
for another density matrix $\sigma$. 
This is because for a Pauli noise channel, there is a probability of no error occurring, which is given at least by $(1- \gamma) \geq (1- \eta)^E$. 
Generically, we expect the overlap $\bra C \sigma_C \ket C \sim 2^{-n}$ as well as the purity of $\sigma_C$ $\tr[\rho_C^2] \sim 2^{-n}$ to be exponentially small since after randomized compiling the errors in the circuit are incoherent Pauli errors and the deviation $\sigma$ from the ideal state $\proj C$ is generically uncorrelated with $\proj C$.
This implies that the purity of $\rho_C$ is given by 
\begin{align}
  P(\rho_C) &= (1-\gamma)^2 + 2 \gamma(1-\gamma) \bra C \sigma \ket C + \gamma^2 \tr[\sigma_C^2]\\
  & \sim (1-\gamma)^2 + \frac 1 {2^n} (2 \gamma (1- \gamma ) + 1),
\end{align}
and similarly for the fidelity 
\begin{align}
  F(\rho_C, \proj C) & = (1- \gamma)^2 + \gamma \bra C \rho_C \ket C\\
  & = (1- \gamma)^2 + \frac \gamma {2^n},
\end{align}
and therefore, up to exponentially small corrections 
\begin{align}
   F(\rho_C, \proj{\mc C}) \approx \sqrt{\tr[\rho_C(\eta)^2]}.
 \end{align} 
We also provide numerical evidence that the root purity is an accurate estimator of the fidelity for fixed circuits in the following.

\begin{figure*}
  \includegraphics{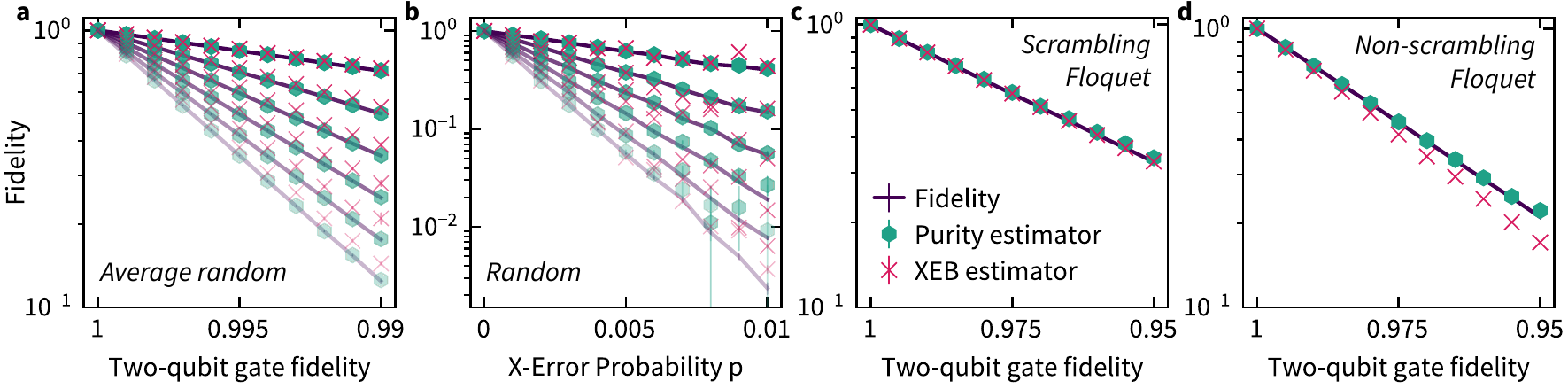}
  \caption{\textbf{Performance of the Bell sampling fidelity estimator.} To assess the purity estimator, we simulate random and non-random Clifford circuits and perform direct fidelity estimation, Bell sampling, and standard-basis sampling to compute the fidelity (green lines), root purity (blue hexagons), and normalized linear XEB (pink crosses), respectively, each obtained from $10^6$ samples per circuit. 
  For the purpose of this plot, the Bell measurement is assumed to be noise-free.
  Error bars represent one standard deviation. 
  \emph{(a)} Average fidelity over 100 random two-qubit Clifford circuits for $n = 6,12,18,24,30,36$ qubits (in decreasing opacity) in an all-to-all connected architecture with $12$ gate layers with local depolarizing noise. 
  \emph{(b)} Single-instance of a random circuit for $n = 6,12,18,24,30,36$ qubits (decreasing opacity) with $12$ gate layers in an all-to-all connected architecture with local Pauli noise with $(X,Y,Z)$ error probabilites $(p, p/3,p/10)$. 
  \emph{(c)} Non-scrambling 1D, depth-$2$ crystalline circuit with iSWAP entangling gatesfrom Ref.~\cite{sommers_crystalline_2023} on 18 qubits with no single-qubit gates. 
  \emph{(d)} Scrambling 1D, depth-$2$ crystalline circuit with iSWAP entangling gatesfrom Ref.~\cite{sommers_crystalline_2023} on 18 qubits with single-qubit $\sqrt{X}$ gates following every two-qubit gate. 
  \label{fig:purity_xeb_fidelity}}
\end{figure*}

\subsubsection{Accounting for measurement noise} 
\label{app:measurement noise}

When using Bell samples in order to estimate the purity, the measurement will of course not be ideal, but as noisy as the other gates in the circuit.
This distorts the estimator of $\tr[\rho^2]$. 
When the noise is accurately modelled by iid.\ single-qubit Pauli noise following the two-qubit gates in the circuit, we can accurately account for the error, however. 

To this end, observe that the Pauli errors at the end of the circuit will either not affect the Bell outcome (for $X$ and $Z$ errors on the first and second copy, respectively) or flip the measurement outcome. 
This implies that exactly $2n$ error locations are added to the circuit, albeit with probability of an error of $2\eta/3$. 
Hence, the purity $P_{\mathrm{Bell}}$ estimated from noisy Bell measurements will be given by
\begin{align}
    P_{\mathrm{Bell}} \approx \ee^{- 2E  \eta - 2n \eta/3} = \ee^{- E \eta \cdot  (E/n +2/3 ) \cdot 2 n /E } = \overline F^{ (E/n +2/3 ) \cdot 2 n /E }, 
\end{align}
with $\overline F = \ee^{- \eta E }$.
If the number of two-qubit gates in the circuit is given by $m$ and $E = 2m$, this suggests a fidelity estimator given by 
\begin{align}
  \label{eq:noisy fidelity estimator purity}
  \overline F_{\text{purity}}=  P_{\mathrm{Bell}}^{m/(n (2m/n +2/3 )) }. 
\end{align}

\subsubsection{Numerical simulations}

In order to assess the fidelity estimate obtained from Bell sampling in practice, we numerically simulate quantum circuits subject to Pauli noise, see \cref{fig:purity_xeb_fidelity}, and compare the fidelity estimator from the Bell samples---given by the root purity $\sqrt{\tr[\rho^2]}$---with fidelity estimators based on linear cross-entropy benchmarking (XEB) \cite{dalzell_random_2021,choi_preparing_2023,ringbauer_verifiable_2023}. 
However, computing the true fidelity and estimating the linear XEB requires exponential runtime. 
This is why we resort to Clifford circuit simulations in which the comparison can be done efficiently. 
In particular, the true fidelity of the noisy state can be efficiently estimated using direct fidelity estimation (DFE) \cite{flammia_direct_2011}.

In DFE, random elements $s \in \mc S$ of the stabilizer group $\mc S$ of a Clifford state $\ket {\mc S}$ are measured, yielding outcomes $\sigma\in \pm 1$ and the results averaged. 
From the expansion of the fidelity as
\begin{align}
   F(\rho, \proj{\mc S}) = \frac 1 {2^n} \sum_{s \in \mc S} \sum_{\sigma = \pm 1} \tr[\rho \pi_s^\sigma] \cdot \sigma,
\end{align}
where we have written $s = \pi_s^+ - \pi_s^-$, this procedure yields an estimator of the fidelity. 
For the linear XEB fidelity estimator, we estimate the ideal XEB of a state $\ket \psi$ with output distribution in the computational basis $p(x) = |\braket{x}{\psi}|^2$ 
\begin{align}
  \chi_{\mathrm {ideal}} = 2^n \sum_x p(x)^2 - 1, 
\end{align}
as well as the noisy XEB of a state preparation $\rho$ with output distribution $q(x) = \bra x \rho \ket x$ 
\begin{align}
  \chi = 2^n \sum_x q(x) p(x) -1, 
\end{align}
using samples from $q$.
The fidelity estimator is then given by~\cite{ringbauer_verifiable_2023}
\begin{align}
   F_{\mathrm{XEB}} = \frac \chi {\chi_{\mathrm {ideal}}}. 
\end{align}
In order to compute circuit-averaged fidelities, we average the DFE and root purity estimator over many random circuits; for the XEB estimator we compute the average noisy and ideal XEB values $\overline \chi$ and $\overline \chi_{\mathrm{ideal}}$, respectively, and estimate the average fidelity $\overline F_{\mathrm{XEB}} = \overline \chi/\overline \chi_{\mathrm{ideal}} $.

We assess the quality of the XEB and root purity estimators both for averaged and typical random circuits and non-random circuits. 
An interesting example of non-random Clifford circuits are the crystalline Floquet circuits studied by \textcite{sommers_crystalline_2023}, since these circuits have been classified in terms of their scrambling properties. 

In \cref{fig:purity_xeb_fidelity}(a,b) we show the three estimators for random circuits with uniformly random two-qubit Clifford gates applied to random pairs of qubits. 
In (a), we show the average fidelity for circuits subject to local depolarizing noise following each two-qubit gate. 
In (b), we show the single-instance fidelity for typical circuits subject to non-depolarizing Pauli noise. 
In \cref{fig:purity_xeb_fidelity}(c,d) we turn to non-random circuits, namely, the crystalline circuits of Ref.~\cite{sommers_crystalline_2023}. 
We consider a 1D brickwork circuit with iSWAP entangling gates followed by single-qubit $\sqrt X $ gates (d), and no single-qubit gates (c). 
These two variants of the circuit are fast scrambling (d) and non-scrambling (c), respectively. 
In all variants of the circuit, we find that the root-purity is an accurate estimator of the true fidelity, while the XEB estimator is significantly less accurate. 
Importantly, \cref{fig:purity_xeb_fidelity} does not include measurement noise.

\begin{figure}
  \includegraphics{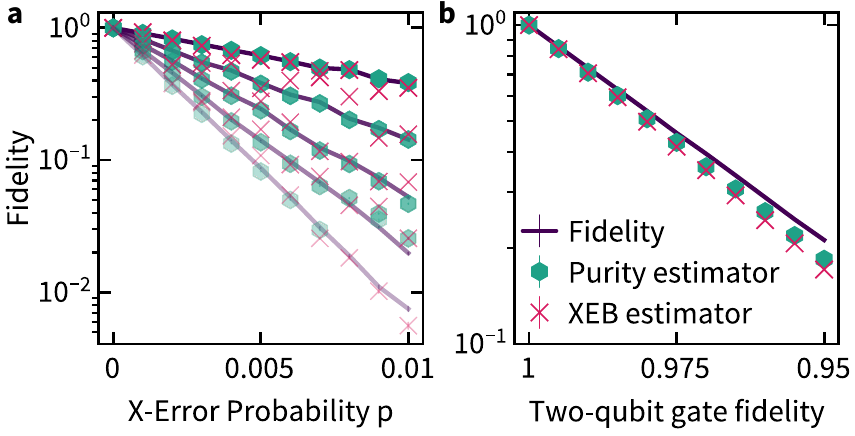}
  \caption{\textbf{Fidelity estimation based on noisy Bell sampling.} We simulate noisy Bell sampling and XEB including noisy measurements using $10^7$ samples. 
  \emph{(a)} Single-instance fidelity (lines), purity estimator (hexagons), and XEB (crosses) of typical random Clifford circuits for $n = 6,12,18,24,30$ (in decreasing opacity) with depth $12$ and Pauli $(X,Y,Z)$ error probabilies $p (1,1/3,1/10)$. 
  \emph{(b)} Fidelity of non-random scrambling crystalline circuits \cite{sommers_crystalline_2023} of depth $2$ acting on $18$ qubits in 1D.
  \label{fig:noisy purity and fidelity suppmat}}
\end{figure}

In order to account for measurement noise, we use the modified estimator \eqref{eq:noisy fidelity estimator purity}, and show the results in \cref{fig:noisy purity and fidelity,fig:noisy purity and fidelity suppmat}. 
We find that the root-purity estimator consistently outperforms the XEB, notably in regimes in which the XEB dramatically fails, such as low depths. 

One drawback of the root-purity as a fidelity estimator becomes apparent when considering the curves for a larger number of qubits \cref{fig:purity_xeb_fidelity}(b) and \cref{fig:noisy purity and fidelity suppmat}(a). 
This is due to the fact that the estimation error of the root purity scales as $\epsilon \sim 1/(F \sqrt M)$ in the number of samples $M$ and the fidelity $F$. 
In contrast, for the XEB and DFE, we obtain an error scaling $1/\sqrt M$ independently of the value of the fidelity. 
In order to resolve low fidelities we then need an estimation error $\epsilon \sim F$, so that for XEB and DFE we have a scaling of the required number of samples as $M \sim 1/F^2$, whereas for the purity we require $M \sim 1/F^4$. 

\subsubsection{Relation to verified quantum computation}

The results in this section imply that arbitrary  quantum computations in the Bell sampling model (with randomized compiling or in an encoded setting) can be verified efficiently just from the classical Bell samples under specific assumptions about the noise in the physical device. 
Let us briefly pause and contextualize this point in the context of protocols for the verification of quantum computations in different models of computation. 
Efficiently verifying a quantum computation in full generality is known to be possible only using comparably complicated schemes that require large overheads compared to a bare quantum circuit. 

The specific schemes we have in mind are blind verified quantum computation \cite{fitzsimons_unconditionally_2017}, verification using two spatially separated entangled quantum computers \cite{reichardt_classical_2013}, and classically verifiable quantum computation using post-quantum cryptography \cite{mahadev_classical_2018-1}. 
All of these settings make strong assumptions about the capabilities of a quantum device. 
Blind verified computation requires a large space overhead compared to the circuit model since it makes use of measurement-based quantum computation and requires the ability to prepare single-qubits perfectly. 
The classical leash protocol of \textcite{reichardt_classical_2013} requires two spatially separated quantum computers which share a large number of Bell pairs, since it relies on the rigidity of CHSH inequalities. It also involves a space-time mapping similar to measurement-based quantum computation and therefore incurs a high space-overhead.
Finally, the protocol of \textcite{mahadev_classical_2018-1} requires a high overhead since the quantum computation needs to be performed in a post-quantum secure homomorphic encryption scheme and is based on the assumption that such schemes exist. 
 
Bell sampling is a model of quantum computation that allows one to efficiently glean information about a quantum state prepared by an arbitrary circuit from measurements in a single basis only. 
It is a setting that is realistic in the near and intermediate term, namely one where two copies of a circuit are entangled on a single device. 
The price to be paid for the small overhead (just a factor of 2) is that the obtained information is conditional on assumptions about the noise occurring in the physical device. 
For instance, when we use randomized compiling, that assumption boils down to an assumption about single-qubit Pauli gates and the last circuit layer being nearly noise-free, while the remainder of the noise on the other gates is not context-dependent. 
We think of this setting as comparable to blind-verified computation, which can be viewed as imposing an assumption on the noise in the single-qubit state preparation step. 
For verification of random circuits on average (through the average state fidelity) the assumption we have used (but which could be weakened) is local gate-independent noise. 

While we have made an effort to obtain a good initial understanding of the types and amounts of noise required for the verification protocol to work, an exhaustive characterization would go far beyond the scope of this letter. 
We hope that our discussion of Bell sampling as a validated model of quantum computation will motivate future research into specific instantiations.

\subsection{Testing depth}
Another property that we can efficiently check from the Bell samples is subsystem purity and thereby the entanglement structure of the state. 
To this end, given a subsystem $A \subset [n]$ we simply consider the substrings $r_A = (r_i: i \in A) \cup (r_{n+i}: i \in A)$ corresponding to this subsystem and then run the purity test. 

Now, we can use the maximal entanglement achievable by circuits in a certain architecture, in order to test for their depth. 
For geometrically local circuits in large local dimension with entangling two-qubit gates, the maximal entanglement which can be generated in any subsystem is simple: 
it is just given by the number of entangled pairs which can be generated by the circuit across the boundary of the considered region. 
Thus, for depth-$d$ circuits in a one-dimensional geometry (with closed boundary conditions), the maximal entanglement entropy of a contiguous subsystem $A$ is given by $E_A(d) = \min\{2d, |A|, n - |A|, n/2\}$.
In higher dimensions, for a subsystem $A$ this increases to $E_A(d) = \min\{|\partial A| d, |A|, n - |A|, n/2 \}$, where $\partial A$ is the number of edges in the interaction graph protruding out of the subsystem $A$. 
The maximal entanglement $E_A(d)$ is thus a depth-dependent property.

This picture provides us with a simple way to test the depth of a circuit from the Bell samples. 
We estimate $P_A(\rho) =\tr[\rho_A^2]$ for random contiguous subsystems of size $|A| = n/2$ and compare the results to the maximal achievable entanglement in each subsystem in a given architecture with circuit depth $d$. 
The minimal $d$ for which $ P_A(\rho) \geq E_A(d)$ is our lower bound for the circuit depth.

The estimate of the subsystem purity is obtained from the substrings $r_A$ as 
\begin{align}
\label{eq:hatp_A}
  \hat P_A = \frac 1 M  \left(|\{r: \pi_Y(r_A) = 0 \}| - |\{r: \pi_Y(r_A) = 1 \}|\right), 
\end{align}
and $|\hat P_A - \tr[\rho_A^2]| \leq \sqrt{2\log(\delta/2)/M}$ with probability at least $1-\delta$. 
Hence, in order to obtain an estimate of the R\'enyi-2-entanglement entropy for a circuit of depth $d\leq n/2$---given by $2^{-O(d)}$---$2^{O(d)}$ samples are required. This means that log-depth circuits can be efficiently validated, while larger depth requires superpolynomially many Bell samples.

\begin{algorithm}[H]
 \caption{Depth test from maximal entanglement \label{alg:depth max}}
 \begin{algorithmic}[1]
   \Require Bell samples $b^0, \ldots, b^M \leftarrow P_\psi$, error tolerance $\epsilon > 0 $.  
   \State Estimate $\hat P_A(\proj \psi)$ as defined in \cref{eq:hatp_A} for a contiguous subsystem $A$ with $|A| = n/2$ obtaining an estimate $\overline P_{n/2}$, for example using a median-of-means estimator.

   \Ensure $d_l = \max \left\{d: - \log (\overline P_{n/2} + \epsilon) \geq E_{n/2}(d) \right\}$

 \end{algorithmic}
\end{algorithm}

We can further refine the depth test by using  properties of \emph{random} quantum circuits. 
Consider first a Haar-random pure state. 
It has been known since the seminal work of \textcite{page_average_1993} that the average entanglement of subsystems of increasing size obeys the now so-called \emph{Page curve}: as the subsystem size increases to $n/2$, its average entanglement entropy increases, as it increases beyond $n/2$, it decreases back to $0$ at subsystem size $n$. 
The precise shape of the Page curve including the so-called Page correction---the deviation of the entanglement at subsystem size $n/2$ from maximal entanglement---is known for various families of random quantum states \cite{garnerone_typicality_2010,collins_matrix_2013,fukuda_typical_2019,iosue_page_2023}.
Notably, \textcite{garnerone_typicality_2010} show that random MPSs exhibit typical entanglement whenever the bond dimension scales at least as a square of the system size. 

However, depth-dependent Page curves have to the best of our knowledge not been computed analytically yet. Nonetheless, empirically one finds that for various circuit families, the bounds $E_A(d)$ can be nearly exhausted, see for example Ref.~\cite{sommers_crystalline_2023}. 

Rather than comparing to the maximal achievable entanglement $E_A(d)$, if we have a depth-dependent Page curve, we can compare the result to the anticipated value of the Page curve $T_A(d) =- \log( \mb E_{C \sim \mc C_d} \tr[( \tr_{A^c}\proj C )^2])$ for a size-$n/2$ subsystem $A$ for the circuit family  $\mc C_d $ of a given depth $d$. 

\begin{algorithm}[H]
 \caption{Depth test from average entanglement \label{alg:depth average}}
 \begin{algorithmic}[1]
   \Require Bell samples $b^{C_1}_0, \ldots, b^{C_L}_M \leftarrow P_C$ for $C_1, \ldots, C_L \leftarrow \mc C$, error tolerance $\epsilon > 0 $.  
   \State Estimate $\frac 1 K \sum_{i=1}^L \hat P_A(\proj \psi)$ with $\hat P_A$ defined as in \cref{eq:hatp_A}, from choosing a contiguous subsystem with $|A| = n/2$ obtaining an estimate $\overline P_{n/2}$, for example using a median-of-means estimator.

   \Ensure $d_l = \max \left\{d: - \log (\overline P_{n/2} + \epsilon) \geq t_{A}(d)\right\}$

 \end{algorithmic}
\end{algorithm}

\paragraph*{Small amounts of noise}

The presence of a small amount of noise only slightly distorts our estimate of the purity and subsystem purity. 
We can always write the noisy state as 
\begin{align}
  \rho = (1-\eta) \proj \psi + \eta \chi, 
\end{align}
where $\chi \perp \proj \psi$.
The deviation of its purity from unity is then given by 
\begin{align}
  |1 - \tr \rho^2| &= 2\eta + O(\eta^2) 
\end{align}
and likewise for the subsystem purity
\begin{align}
  \big|\tr &\proj \psi_A^2 - \tr \rho_A^2\big|  \\
  &\leq 2 \eta \left|\tr \proj \psi_A^2 - \tr \proj \psi_A \chi_A\right| + O(\eta^2)\\
  & \leq 2 \eta (\tr \proj \psi_A^2 + 1)+ O(\eta^2)\\
  & \leq  \eta (3 +2^{- O(d)}) + O(\eta^2). 
\end{align}
Hence to verify depth we need states with exponentially small impurity $2^{- O(d)}$ in the circuit depth $d$.

\paragraph*{Larger amounts of noise}

The discrepancy of the Page curve for a pure state and the Page curve of a mixed state can be intuitively understood: 
For a maximally mixed state the R\'enyi-2 entropy of subsystems of size $k$ is exactly given by $-\log 2^{-k}$, and hence, as a function of subsystem size, it is just given by the subsystem size. 
The entanglement entropy of a white-noise state $\rho = (1- \eta) \proj \psi +  \eta \id/2^n$ is thus given by 
\begin{multline}
  - \log \tr\rho_A^2 = \\-\log \big[ (1- \eta)^2\tr(\proj \psi_A^2) +  ( 2\eta( 1- \eta) + \eta^2) 2^{-k}\big]  \\
  = - \log \big[(1- \eta)^2\tr(\proj \psi_A^2)\big]\\+ \frac{( 2\eta( 1- \eta) + \eta^2) 2^{-k}}{(1- \eta)^2\tr(\proj \psi_A^2)} + O(2^{-2k}),
\end{multline}
which yields a good approximation for $\tr ( \proj \psi_A^2 ) \gg 2^{-k}$.

\subsection{Learning a Clifford + $T$ circuit}

In this section, we show that a quantum state prepared by a circuit with few non-Clifford gates can be learned efficiently. 
We separate the proof into several steps.
First, we derive the expansion of the operator $\ket \psi \bra {\overline \psi}$ corresponding to a quantum state $\ket \psi$ generated   by a Clifford circuit with few $T$ gates. 
This operator determines the Bell sampling distribution, and in deriving it, we will already see the central concepts we will use in the learning algorithm.
Motivated by this decomposition, we elaborate the algorithm. 
Finally, we analyze statistical errors when running the algorithm and show the output of the algorithm is close to $\ket \psi$ in fidelity.

Before we describe the protocol, let us recap some simple properties of the Bell samples from the stabilizer state $\proj S =2^{-n} \sum_{\sigma \in \mc S}\sigma$ with $n$-dimensional stabilizer group $\mc S \subset \sf P_n$, i.e., a commuting subgroup of the $n$-qubit Pauli group $\sf P_n$. 
For stabilizer states $\ket S$, the complex conjugation $\ket {\overline S} = \sigma_k \ket S$ is described by a Pauli operator $\sigma_k$ that depends on $\ket S$ \cite{montanaro_learning_2017}. 
Let us denote by roman letters the binary symplectic subspace $S\subset \mb F_2^{2n}$ corresponding to a subgroup $\mc S$ of $\sf P_n$, which includes all but the phase information about $\mc S$. 
For $a,b \in \mb F_2^{2n}$, the symplectic inner product is given by $\omega(a,b) \coloneqq (\sum_{i=1}^n a_i b_{n+i} - b_i a_{n+i}) \mod 2$. 

From \cref{eq:output probs} it immediately follows that the output distribution of Bell sampling from $\ket S \otimes \ket S$ is supported on the affine space $S \oplus k\coloneqq \{s \oplus k: s \in S\}$. 
We can therefore learn $S$ from differences of the Bell samples $b^i \oplus b^j \in S $, and the missing phases of the stabilizers from a measurement of the corresponding stabilizer operators \cite{montanaro_learning_2017}.

\subsubsection{The Bell distribution of low $T$-count quantum states}

Consider a state $\ket \psi = C_t T_{x_t} C_{t-1} T_{x_{t-1}} \cdots T_{x_1} C_0  \ket 0 $ generated by a circuit comprising $t+1$ Clifford layers $C_i$ and $t$ $T$ gates at positions $x_i \in [n]$ for $i\in [t]$. 
Then, we can shift the $T$-gates to the end of the circuit as
\begin{align}
  \ket \psi & = \tilde T_t \cdots \tilde T_1 C_t \cdots C_0 \ket 0 \eqqcolon \tilde T_t \cdots \tilde T_1 \ket S, 
\end{align}
where we have defined $\tilde T_i = C_t \cdots C_i T_{x_i} C_i^\dagger \cdots C_t^\dagger$.
Hence, we can write 
\begin{align}
  \ket{\psi}\bra{\overline \psi}  = \prod_{i=1}^t (\alpha \id + \ii \beta P_i ) \proj S  \prod _{j=1}^t (\alpha \id - \ii \beta P_j) K^\dagger, 
\end{align}
where $P_i = C_t \cdots C_i Z_{x_i} C_i^\dagger \cdots C_t^\dagger$, $\alpha = \cos (\pi/8)$ and $\beta = \sin (\pi/8)$, and $K: \ket \psi \mapsto \ket{\overline \psi}$ is the complex-conjugation operator. 
In as next step, we  rewrite $\ket \psi\bra{\overline \psi}$ in terms of the stabilizer groups $\mc S$ which stabilizes $\ket S$ and $\mc G = \langle P_1, \ldots, P_d \rangle$ as 
\begin{align}
\ket \psi\bra{\overline \psi} & = \frac 1 {2^n} \left(\sum_{\sigma_{g } \in \mc G } \alpha_g \sigma_{g } \right) \left(\sum_{\sigma_{s } \in \mc S} \sigma_{s } \right) \left(\sum_{\sigma_{g'}  \in \mc G} \overline \alpha_{g'} \sigma_{g'}  \right) K^\dagger,
\end{align}
where the prefactors are given by 
\begin{align}
  \alpha_g = \sum_{x \in \bin^d: \prod_i P_i^{x_i} = \sign(x) \sigma_g} \alpha^{d-|x|} (\ii\beta)^{|x|} \sign(x). 
\end{align}
Now, we make use of the fact that the complex conjugation operator $K$acting on the stabilizer state $\ket S$ can be written as a Pauli-$Z$ matrix $\sigma_k \coloneqq \sigma_{10}^{\otimes s} $ for some $s \in \bin^n$ depending on $\ket S$ as $K \ket S = \sigma_k \ket S$. 
This gives us 
\begin{align}
\ket \psi &\bra{\overline \psi} = \frac 1 {2^n} \left(\sum_{\sigma_{g} \in \mc G } \alpha_g \sigma_{g} \right)\left(\sum_{\sigma_{s} \in \mc S} \sigma_{s} \right)  \sigma_{k} \left(\sum_{\sigma_{g'} \in \mc G}  \beta_{g'} \sigma_{g'} \right)\nonumber \\
& = \sum_{\sigma_g,\sigma_{g'} \in \mc G } \alpha_g\beta_{g'} \sigma_g \Pi_{\mc S} \sigma_k \sigma_{g'} \nonumber\\
& = \sum_{\sigma_g,\sigma_{g'} \in \mc G } \alpha_g\beta_{g'} (-1)^{\langle g',k\rangle} \sigma_g \Pi_{\mc S}  \sigma_{g'}\sigma_k, 
\label{eq:decomposition psipsibar}
\end{align}
where $\Pi_{\mc S} = \sum_{\sigma_s \in \mc S} \sigma_s/2^n \equiv \proj S$ denotes the projector onto the ground space of $\mc S$. 
Moreover, we let $\beta_g = \overline \alpha_g \cdot (-1)^{\pi_Y(g)}$.

Let us denote by $\langle \mc G, \mc S \rangle = \{\sigma_g\sigma_s : \sigma_g \in  \mc G, \sigma_s \in \mc S\}$ the group generated by $\mc G$ and $\mc S$, and by $\mc G \oplus \sigma = \{ \sigma_g \sigma: \sigma_g \in \mc G \}$ the shift of $\mc G$ by $\sigma$. 
We notice that all Pauli operators appearing in the sum are elements of $\langle \mc G,  \mc S\rangle \oplus \sigma_k$ with dimension $\dim(\langle \mc G, \mc S \rangle ) \leq n + t$. 
Let us decompose $\langle\mc G , \mc S \rangle  = \langle \mc C,  \mc L \rangle  $ into a maximally commuting subgroup $\mc C$, i.e., the maximal subgroup with the property that $[\sigma_s, \sigma_l] = 0,\, \forall \sigma_s \in \mc C, \sigma_l \in  \langle\mc G ,\mc S \rangle $ (the stabilizer group), and a `logical' subgroup $\mc L$ (which is ambiguous because it can be shifted by any element of $\mc C$. 
On the level of the corresponding symplectic vector spaces, $C = H \cap H^\perp$, where $H^\perp = \{s \in \mb F_2^{2n}: \omega(s, h) = 0\, \forall h \in H\}$, and $H \coloneqq \spn(G,S)$. 
To find $C$ one thus simply needs to solve a linear system of equations. 
Thus, $\dim(\mc C) \geq n-t$ while $\dim(\mc L) \leq 2 t$. 
Then we can decompose every element $\sigma$ of $\langle \mc G, S\rangle $ as $\sigma_g = \sigma_l(g) \sigma_c(g) = \sigma_c(g)  \sigma_l(g) $ for $\sigma_l(g) \in \mc L$ and $\sigma_c(g) \in \mc C$.

We can then rewrite \cref{eq:decomposition psipsibar} as 
\begin{align}
  \ket{\psi}&\bra {\overline \psi} \\
  &= \sum_{\sigma_g, \sigma_{g'} \in \mc G} \alpha_g  \beta_{g'}
  \sigma_l(g) \sigma_c(g)  \Pi_{\mc S\setminus \mc C}\Pi_{\mc C} \sigma_c(g') \sigma_l(g')  \sigma_k\\
  & = \sum_{\sigma_g, \sigma_{g'} \in \mc G} \alpha_g  \beta_{g'}
  \sigma_l(g)   \Pi_{\mc S\setminus \mc C} \underbrace{\sigma_c(g) \Pi_{\mc C} \sigma_c(g')}_{= \Pi_{\mc C}} \sigma_l(g')  \sigma_k\\
  & = \sum_{\sigma_g, \sigma_{g'} \in \mc G}\sum_{\sigma \in \mc S\setminus \mc S'} \frac{ \alpha_g  \beta_{g'}}{2^{\dim (\mc L)}}
  \underbrace{\sigma_l(g)  \sigma\sigma_l(g')}_{\in \mc L} \Pi_{\mc C} \sigma_k\\
& \eqqcolon \sum_{l \in  L} \lambda_l \sigma_l \Pi_{\mc C} \sigma_k  \eqqcolon \sum_{h \in \spn(L,C)\oplus k} q(h) \sigma_h \label{eq:pauli decomposition low t count}. 
\end{align}
In the first equality we have used that $\Pi_{\mc S } = \Pi_{\mc C} \Pi_{\mc S\setminus \mc C}$, and in the last equality we have grouped the operators in $\mc L$ and defined its coefficients as $\lambda_l \in \mb C$, $l = 1, \ldots, 4^{t}$. 
Fixing an $L$ such that $\spn(L,C)= H$,
and given $l\in L$ we fix the phase of the corresponding $\sigma_l\in \mc L$ in the logical (non-commuting) subgroup $\mc L$ to be $+1$, making the coefficients $\lambda_l$ unique.
Since $L$ decomposes $C$ into $2^{\dim(L)}$ disjoint cosets, for $h = l \oplus s \oplus k$, we also define $q(h) = \lambda_l/2^{\dim(C)}$.

When we perform Bell sampling from $\ket\psi \otimes \ket \psi $, a Bell sample $b$ will then be distributed as 
\begin{align}
\label{eq:bell distribution low t -count}
  P_\psi(b) = |\tr[\ket \psi \bra {\overline \psi } \sigma_b]|^2 /2^n = 2^n|q(b)|^2. 
\end{align}
In particular the distribution is supported on $  \spn( C, L)  \oplus k$. 

A subspace $S\subset \bin^n$ is \emph{isotropic} if for all $s,t \in S$, $\omega(s,t) = 0$. An isotropic subspace $S$  thus corresponds to a commuting subgroup $\mc S$, since $\omega(s,t) =0$ iff $[\sigma_s,\sigma_t] = 0$.
The maximal isotropic subspace $C$ of $H$ is called the \emph{radical} of $H$, which is given by $\rad H = C \cap C^\perp$.

\subsubsection{The learning algorithm}

Our learning algorithm is based on the decomposition \eqref{eq:pauli decomposition low t count} and the resulting Bell distribution \eqref{eq:bell distribution low t -count}. 
In the algorithm, we assume access to state preparations of $\ket \psi$

 \begin{algorithm}[H]
   \caption{Magic estimation \label{alg:magic}}
   \begin{algorithmic}[1]
     \Require $M \in \mb N$
      \State 
      Perform Bell sampling from $\ket \psi$, obtaining samples $b^0, \ldots, b^M \leftarrow P_\psi$.  
      \label{algstep:sample}

      \State Compute all Bell differences $b^{(i,j)} = b^j \oplus b^i$. \label{algstep:differences}
       \State Define $G' = \spn(\{b^{(i,j)} \}_{i,j})$, and $\hat t = \dim(G') - n$. \label{algstep:basis} 
   \end{algorithmic}
 \end{algorithm}

 \begin{algorithm}[H]
   \caption{Clifford+$T$ learning algorithm\label{alg:Clifford+T}}
   \begin{algorithmic}[1]
     \Require Error threshold $\epsilon$, failure probability $\delta$.
      \State Run \cref{alg:magic} with $M =  2n \log(1/{\delta})/\epsilon$, yielding $\hat t, G'$. 
      \State Find a set of generators $s^1, \ldots, s^{\hat t}$ of the radical $C' = \rad G'$ of $G'$ by solving a linear system of equations.

      \State
      Find a Clifford unitary $U_{C'}$ such that $U_{C'} \sigma_{s^i} U_{C'}^\dagger = \pm Z_i$ for all $i = 1, \ldots, \hat t$. \label{algstep:Clifford finding} 

    \State
      Let $M' = O(2^k \log(1/\delta)/\epsilon^2)$. 
      Prepare $M'$ copies of $U_C' \ket \psi$. 
      Measure qubits $1, \ldots, n- \hat t $ in the computational basis, and qubits $n-\hat t + 1, \ldots, n$ according to the pure-state tomography scheme of Ref.~\cite{guta_fast_2020}, yielding data sets $X = \{x^1, \ldots, x^{M'}\},D = \{d^1, \ldots, d^{M'}\}$. Let $x \in \{0,1\}^{n - \hat t} $ be the  majority outcome of the computational-basis measurements. \label{algstep:tomography}
  \State 
    Remove those elements $d^k$ for which $x^k \neq x$ from $D$.
    \label{algstep:tomography2}
  \State 
    Run the recovery algorithm of Ref.~\cite{guta_fast_2020} on $D$. 
    \label{algstep:tomography3}
 \Ensure 
  $U_{C'}, x, \ket {\hat \varphi}$.
   \end{algorithmic}
 \end{algorithm}

\subsubsection{Correctness of \cref{alg:Clifford+T}}

To show the correctness of the algorithm, we proceed in several steps. 
In the first step, we show that 
the state $U_{C'} \ket \psi$ is close to a product state $\ket x \otimes \ket \varphi$, since all elements of $\mc C'$ are close to stabilizers of $\ket \psi$. 
In the second step, we show that the tomographic estimates we obtain from measuring the first $n- \hat t$ qubits in the computational basis, and performing state tomography on others, yield an overall reconstruction $\ket{\hat \psi}$ with fidelity at least $1- \epsilon$ with the target state $\ket \psi$.

\begin{lemma}
\label{lem:fidelity stabilizer}
  Let $C'$ be the radical of the subspace spanned by $M \in O(n \log(1/\delta) /\epsilon)$ Bell samples from a pure state $\ket \psi$, and let $U_{C'}$ be the associated Clifford unitary. 
  Then with probability at least $1- \delta$ over the Bell samples, there is a bit string $x \in \bin ^{\dim(C')}$ and a pure state $\ket \varphi \in (\mb C^2)^{\otimes n - \dim (C')}$ such that $|\bra x \bra\varphi  \psi \rangle|^2 \geq 1 - \epsilon$. 
\end{lemma}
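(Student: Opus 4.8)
The plan is to prove the equivalent statement that $U_{C'}\ket\psi$ is $\epsilon$-close to a product state $\ket x\otimes\ket\varphi$ across the cut at qubit $\dim(C')$, i.e.\ that $\ket\psi$ is an approximate common $\pm1$-eigenstate of the generators $\sigma_{s^1},\dots,\sigma_{s^{\dim(C')}}$ of the group $\mc C'$ associated with $C'=\rad G'$, where $G'$ is the span of the Bell-difference vectors $b^i\oplus b^j$; the quantity $|\bra x\bra\varphi\psi\rangle|^2$ in the statement is exactly the fidelity of this approximation. I would first record the exact case. Recall from \cref{eq:pauli decomposition low t count} that $\ket\psi=V\ket S$ with $V=\tilde T_t\cdots\tilde T_1$ a product of rotations $\tilde T_i=\ee^{\ii\frac\pi8 P_i}$, and that $P_\psi$ is supported on $H\oplus k$ with $H=\spn(G,S)$. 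Because $S$ is Lagrangian ($S=S^\perp\subset H$), we get $H^\perp\subseteq S^\perp=S\subseteq H$, so the radical $C=\rad H=H\cap H^\perp=H^\perp$ sits inside $S$. Hence every $c\in C$ both satisfies $\sigma_c\ket S=\pm\ket S$ and commutes with all $P_i$, so $V^\dagger\sigma_c V=\sigma_c$ and $\sigma_c\ket\psi=\pm\ket\psi$: the true radical elements are \emph{exact} stabilizers of $\ket\psi$, and $C'=C$ would give fidelity $1$.

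Next I would turn the fidelity into a stabilizer sum. Expanding $\proj x=\prod_i\tfrac{\id+(-1)^{x_i}Z_i}{2}$ and pulling each $Z_i$ back through $U_{C'}$ to $\pm\sigma_{s^i}$ gives
\begin{align}
  F=\max_x\bra\psi U_{C'}^\dagger(\proj x\otimes\id)U_{C'}\ket\psi=\frac1{2^{\dim(C')}}\max_{\mu}\sum_{\sigma_{c'}\in\mc C'}\mu(c')\,\bra\psi\sigma_{c'}\ket\psi,
\end{align}
the maximum running over characters $\mu$ of $\mc C'$. Every $c'\in C$ contributes $\pm1$, so $1-F\le\frac1{2^{\dim(C')}}\sum_{\sigma_{c'}\in\mc C'\setminus\mc C}(1-|\bra\psi\sigma_{c'}\ket\psi|)$ and it suffices to control the elements of $\rad G'$ that are \emph{not} exact stabilizers. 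The bridge to the samples is the identity
\begin{align}
  \Eb_{b\sim P_\psi}(-1)^{\omega(a,b)}=(-1)^{\pi_Y(a)}\,\bra\psi\sigma_a\ket\psi^2,
\end{align}
which I would obtain by inserting $P_\psi(b)=2^{-n}|\bra\psi\sigma_b\ket{\overline\psi}|^2$ and applying Pauli completeness $\sum_b\tr[\sigma_b A]\tr[\sigma_b B]=2^n\tr[AB]$ to $\rho'=\ket\psi\bra{\overline\psi}$. In particular $\bra\psi\sigma_a\ket\psi^4=1$ precisely when $a\in C$, so a non-stabilizer direction has $\bra\psi\sigma_a\ket\psi^4<1$.

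The probabilistic core then runs as follows. A single Bell sample anticommutes with $\sigma_a$ with probability $r_a=\tfrac12(1-(-1)^{\pi_Y(a)}\bra\psi\sigma_a\ket\psi^2)$, and $a\in\rad G'$ forces $\omega(a,b^i)$ to be constant over the $M$ samples, an event of probability $r_a^M+(1-r_a)^M\le 2(1-\min(r_a,1-r_a))^M$. Any direction with fidelity cost $1-|\bra\psi\sigma_a\ket\psi|\ge\epsilon$ has $r_a(1-r_a)=\tfrac14(1-\bra\psi\sigma_a\ket\psi^4)\ge\epsilon/4$, hence $\min(r_a,1-r_a)\ge\epsilon/4$, so it survives with probability at most $2\,\ee^{-\Omega(\epsilon M)}$. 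A union bound over the at most $2^{2n}$ such directions shows that for $M\in O((n+\log(1/\delta))/\epsilon)\subseteq O(n\log(1/\delta)/\epsilon)$ every surviving $\sigma_{c'}\in\mc C'\setminus\mc C$ obeys $|\bra\psi\sigma_{c'}\ket\psi|\ge1-\epsilon$ with probability $\ge1-\delta$; substituting into the fidelity bound yields $1-F\le\epsilon$, with $x$ the majority measurement outcome fixing the signs.

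The main obstacle is attaining the clean $1/\epsilon$ (rather than $n/\epsilon$) sample scaling, which forces the deficit to be summed over the \emph{whole} group $\mc C'$ rather than generator-by-generator. This in turn requires that the signs $\sign\bra\psi\sigma_{c'}\ket\psi$ assemble into a genuine character of $\mc C'$, i.e.\ that the near-stabilized survivors be simultaneously near-eigenstates of $\ket\psi$, and that the harmless case in which the samples fail to span some true stabilizer directions ($C'\subsetneq C$, which keeps fidelity at $1$) be cleanly separated from the harmful non-stabilizer survivors, since $C'$ and $C$ need not be nested. Establishing this near-eigenstate/character property, together with the Pauli-completeness computation of the bridging identity, is where I expect the real work to lie; the spanning and Chernoff estimates are routine.
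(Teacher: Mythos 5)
Your proposal follows essentially the same route as the paper's proof. The core mechanism is identical: every $a \in \rad G'$ has a constant pairing $\omega(a,b^i)$ across all $M$ samples, so any Pauli direction whose expectation $\bra \psi \sigma_a \ket \psi^2$ is bounded away from $1$ lands in the radical only with probability exponentially small in $\epsilon M$, and a union bound over the $4^n$ directions gives $M \in O((n + \log(1/\delta))/\epsilon)$. The paper packages this same fact differently---it notes that the empirical estimator $\hat e(\sigma_c)$ of $\bra\psi \sigma_c \ket\psi^2$ has zero sample variance for every $c \in C'$ (all samples give the same eigenvalue of $\sigma_c \otimes \sigma_c$), so an empirical-Bernstein bound forces $\bra \psi \sigma_c \ket \psi^2 > 1-\epsilon$ for all $c \in C'$ at the $1/\epsilon$ rate---but this is the same probabilistic content as your all-samples-agree binomial estimate. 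Your bridge identity matches the paper's two-copy Pauli estimator, and your character-sum expansion of the fidelity matches the paper's direct-fidelity-estimation step.

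The one step you flag as ``the real work''---that the signs $\sign \bra \psi \sigma_{c'} \ket \psi$ assemble into a character of $\mc C'$---is not actually needed, and you should not spend effort on it. Work with the squares, which is what your concentration step delivers anyway. Let $k = \dim(C')$ and $p(x) = \bra \psi U_{C'}^\dagger (\proj{x} \otimes \id) U_{C'} \ket \psi$; the Boolean Fourier expansion gives $p(x) = 2^{-k} \sum_{z} (-1)^{x \cdot z} \langle Z^z \rangle$ with $\langle Z^z \rangle = \bra \psi U_{C'}^\dagger (Z^z \otimes \id) U_{C'} \ket \psi = \pm \bra \psi \sigma_{c'(z)} \ket \psi$, and Parseval gives $\sum_x p(x)^2 = 2^{-k} \sum_z \langle Z^z\rangle^2$. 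Since $p$ is a probability distribution,
\begin{align}
\max_x p(x) \;\geq\; \sum_x p(x)^2 \;=\; \frac{1}{2^{k}} \sum_{c' \in C'} \bra \psi \sigma_{c'} \ket \psi^2 \;\geq\; 1 - \epsilon,
\end{align}
and the maximizing $x$ is your bit string; no sign bookkeeping and no character property enter. This also dissolves your worry about $C'$ and $C$ not being nested: your union bound is over all bad directions, so it already guarantees that \emph{every} element of $C'$ is an $\epsilon$-approximate stabilizer, regardless of how $C'$ relates to the true radical $C$. It is worth noting that the paper's own proof elides exactly the point you identified---it picks $\ket x$ satisfying $\bra x Z_c \ket x = s(\sigma_c)$ for all $c \in C'$ without verifying that such an $x$ exists, which is precisely your character condition---and the Parseval argument above closes that hole in both proofs.
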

\begin{proof}
We begin by observing that all elements of $\mc C'$ are approximate stabilizers of $\ket \psi$. 
To see this, we observe that for any Pauli operator $P$ we can measure $\bra \psi P \ket \psi^2$ using the Bell samples $B = \{b^1, \ldots b^M\}$ as \cite{huang_information-theoretic_2021}
\begin{multline}
  \bra \psi P \ket \psi^2 \approx \hat e(P) \coloneqq  \frac 1M\big( |\{b \in B:  P \otimes P \ket {\sigma_b} = +1 \ket {\sigma_b}\}| \\-  |\{b \in B:  P \otimes P \ket {\sigma_b} = -1 \ket {\sigma_b}\}|\big).
\end{multline}
Let us write the Pauli operators corresponding to the Bell samples $\sigma_{b^i} = \sigma_{g^i} \sigma_k$ for ${g^i} \in G'$ and $\sigma_k$ the Pauli operator corresponding to complex conjugation. 
Then, for all $c \in C'$ and $i \in [M]$, $[\sigma_c, \sigma_{g^i}] = 0 $, since $C'$ is the radical of $G'$. 
Hence, 
\begin{align}
  \sigma_c \otimes \sigma_c \ket {\sigma_{b^i}} & = \sigma_c \sigma_{b^i} \otimes \sigma_c \ket{\Phi^+}\\
  & = \pm( \sigma_c \sigma_{g^i} \sigma_c \sigma_k \otimes \id) \ket{\Phi^+}\\
  & =  \pm(  \sigma_{g^i} \sigma_k \otimes \id) \ket{\Phi^+} = \pm \ket {\sigma_{b^i}},
\end{align}
where the sign depends on whether $\sigma_c^T = \pm \sigma_c$ and $\sigma_k \sigma_c = \pm \sigma_c \sigma_k$, but not on the Bell sample $b^i$.
Hence, all estimated expectation values $e(\sigma_c)=  1$. 
This implies that our estimate for $\bra\psi U_{C'}^\dagger Z_1^{z_1} \cdots Z_{n-\hat t}^{z_{n-\hat t}} U_{C'}\ket \psi ^2 $ equals $1$ for all $z \in \bin^{n- \hat t }$, and therefore the first $n- \hat t$ qubits of $ U_{C'}\ket \psi$ are close to a computational-basis state.

We now bound the distance of $\rho_L = \tr_{[n-\hat t]^c}[\proj \psi] $ from a computational-basis state. 
By the union bound with failure probability $\delta$, $M \geq2  \log (1/\delta)/\epsilon$  Bell samples are sufficient to ensure that $|e(\sigma_c) - \bra \psi \sigma_c \ket \psi^2| < \epsilon$ and hence $\bra \psi \sigma_c \ket \psi^2| > 1-\epsilon$ \cite{mnih_empirical_2008}. 
By another union bound, this imples that $M \geq2  n \log (1/\delta)/\epsilon$ samples are sufficient to ensure that this is the case for \emph{all} $c \in C'$. 
Let $s(\sigma_c) = \sign(\bra \psi \sigma_c \ket \psi)$. 
Then can use direct fidelity estimation \cite{flammia_direct_2011} to compute the fidelity with the computational-basis state $\ket x$ that satisfies $\bra x Z_c \ket x = s(\sigma_c)$, where $Z_c = U_{C'} \sigma_c U_{C'}^\dagger$. 
We find
\begin{align}
  \tr[\rho_L \proj x] & = \frac 1 {2^{n - \hat t } }\sum_{c \in C'}[ s(\sigma_c)\tr [\rho_L Z_c]
  \\& \geq \frac 1 {2^{n - \hat t }} \sum_{c \in C'}[ s(\sigma_c) s(\sigma_c) ( 1- \epsilon)] \\
  &= 1 - \epsilon,
\end{align}
since $\sqrt{1 - \epsilon} \geq 1 - \epsilon$.

What remains to be shown is that the state $U_{C'} \ket \psi$ is $\epsilon$-close to the product state $\ket x \ket \varphi$ with some $\ket \varphi$ in fidelity. 
To see this, we observe that we can write $U_{C'} \ket {\psi} = \sum_y a_y \ket {y}\ket{\varphi_y}$ for some post-measurement states $\ket \varphi_y = \bra y \otimes \id  \ket \psi/ \norm{(\bra y \otimes \id) \ket \psi}$. 
We find $|\alpha_x |^2 \ge 1-\epsilon$, and setting $\ket \varphi \coloneqq \ket \varphi_x$ proves the claim. 
\end{proof}

The tomography steps \ref{algstep:tomography} returns $x$ with exponentially low failure probability. 
By discarding those elements from $D$ for which $x^i \neq x$ in step \ref{algstep:tomography2}, we ensure that the remaining elements will yield an estimate $\ket {\hat \varphi}$ that has fidelity at least $1- \epsilon$ with $\ket \varphi_x$, which yields the claim. 

The overall runtime of the algorithm is polynomial in $n$ and $2^t$, since finding $U_{C'}$ is achieved by solving a linear system of equations, and we perform quantum state tomography on at most $t$ qubits in the last step. 

\subsubsection{Correctness of \cref{alg:magic}}

The estimate $\hat t$ of the stabilizer nullity $t$ of $\ket \psi$ clearly satisfies $\hat t \leq t$.
It immediately follows from \cref{lem:fidelity stabilizer} that there is a state $\ket \phi = U_{C'}^\dagger \ket x \ket \varphi $ with fidelity $| \braket{\phi}{\psi}|^2 \geq 1 -\epsilon$ and stabilizer nullity exactly $\hat t$. 
To see this, observe that $M(\ket \phi) = M(\ket x ) + M (\ket \varphi) = 0 + \hat t$, since the stabilizer nullity is additive for product states. \\

We also note that we the subspace $G'$ carries probability weight $1- \epsilon$ of both the Bell distribution $P_\psi $ and the characteristic distribution $C_\psi$ of $\ket \psi$. 
The characteristic distribution of $ \ket \psi$ is defined as $C_\psi(b) = 2^n | p(b)|^2$, where we write 
\begin{align}
  \proj \psi = \sum_b p(b) \sigma_b. 
\end{align}

To show this directly, we formulate the following Lemma, which  generalizes a well-known result of \textcite{erdos_probabilistic_1965} to nonuniform distributions over subspaces of $\mb F_2^n$.

\begin{lemma}[Weighted subspace generation]
\label{lem:subspace}
Let $S \subset \bin^{n}$ be a binary subspace of dimension at most $n$ and $P: S \rightarrow [0,1]$ be a probability distribution over that subspace.
Then the subspace $S' \coloneqq \spn( s^1, \ldots,s^M) \subset S$ spanned by $M$ samples $s^i \leftarrow P$ with label $i = 1, \ldots, M$ has the property that $\sum_{s \in S\setminus S'}P(s) < \epsilon$ with probability at least $1-\delta$ whenever
\begin{align}
   M \geq 2 n \log (n)\log\left(2/\delta\right) /\epsilon. 
 \end{align} 
\end{lemma}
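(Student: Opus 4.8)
The plan is to track the spanning process one sample at a time and show it behaves like a weighted coupon-collector with at most $n$ coupons to collect. Write $S_t = \spn(s^1,\dots,s^t)$ for the span after $t$ samples and let $R_t = \sum_{s \in S \setminus S_t} P(s)$ be the \emph{residual mass} outside the current span. Two structural facts come first: (i) $R_t$ is non-increasing in $t$, since $S_t \subseteq S_{t+1}$; and (ii) a fresh sample $s^{t+1}$ lies outside $S_t$ with probability exactly $R_t$, and whenever it does, $\dim S_{t+1} \geq \dim S_t + 1$. Fact (ii) holds because $s^{t+1} \notin S_t$ is precisely the event that the sample carries new probability mass, and a vector outside a subspace strictly enlarges its span.

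Next I would observe that the bad event $\{R_M \geq \epsilon\}$ is very restrictive. By monotonicity (i), $R_M \geq \epsilon$ forces $R_t \geq \epsilon$ for every $t \leq M$, so at every step the probability of enlarging the span was at least $\epsilon$. On the other hand the dimension can grow at most $n$ times, and as long as the residual stays $\geq \epsilon$ the span is a \emph{proper} subspace of $S$, so strictly fewer than $n$ enlargements can have occurred. Thus the bad event implies that, in $M$ steps each having span-enlargement probability at least $\epsilon$, fewer than $n$ enlargements happened.

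To convert this into the claimed sample bound I would split the $M$ samples into $n$ consecutive blocks of size $m = M/n$ and argue by a union bound. If the dimension had increased at least once in each of the $n$ blocks it would have increased $n$ times, contradicting the previous paragraph; hence on the bad event there is at least one block throughout which $R_t \geq \epsilon$ yet no sample enlarged the span. Conditioned on the residual being $\geq \epsilon$ at that block's start, each of its $m$ samples fails to enlarge the span with conditional probability $1 - R_t \leq 1 - \epsilon$, so the block produces no enlargement with probability at most $(1-\epsilon)^m \leq e^{-\epsilon m}$. Summing over the $n$ blocks gives $\Pr[R_M \geq \epsilon] \leq n\, e^{-\epsilon M/n}$, and requiring the right-hand side to be at most $\delta$ yields $M \geq (n/\epsilon)\log(n/\delta)$, which is implied by the stated bound $M \geq 2n\log(n)\log(2/\delta)/\epsilon$.

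The only delicate point is the adaptivity: the effect of a sample depends on the history through $S_t$, so the $\geq \epsilon$ enlargement probability is a conditional statement rather than an unconditional one. I would handle this by conditioning on the $\sigma$-algebra generated by the history at the start of each block and using that, given any history with $R_t \geq \epsilon$, the next sample enlarges the span with conditional probability $R_t \geq \epsilon$; the per-step failure probabilities $1-\epsilon$ then multiply correctly within a block, and the union bound absorbs the dependence across blocks. I expect this conditioning step, rather than any computation, to be the part that must be written carefully.
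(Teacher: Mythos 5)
Your proof is correct and is essentially the paper's own argument: both track the span growing sample by sample, split the $M$ samples into $n$ blocks, bound the probability that a block with residual mass at least $\epsilon$ produces no dimension increase by $(1-\epsilon)^{M/n}$, and combine over the $n$ blocks (the paper via a product of per-block success probabilities $(1-\delta/n)^n \geq 1-2\delta$, you via a union bound $n\,e^{-\epsilon M/n} \leq \delta$, which is an immaterial difference). Your explicit treatment of adaptivity by conditioning on the history at each block's start is a point the paper leaves implicit, but it is the same proof.
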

\begin{proof}
  We construct the subspace iteratively, observing that the first $i$ samples which have been drawn generate a subspace $S_i = \spn(s^1, \ldots, s^i)$. 
  The $(i+1)$th sample $s^{i+1}$ now either lies in $S_i$ or in its complement. 
  If it lies in the complement the dimension of $S_{i+1}= \spn( S,s^{i+1}) $ is increased by $1$ compared to $S_i$, if it lies in $S_i$ its dimension remains unchanged. 
  Suppose the complement of $S_i$ has probability weight at least $\epsilon$. 
  Then the probability that after drawing $k$ additional samples, none of them lies in the complement of $S_i$ is given by $1 - ( 1- \epsilon)^k$ and hence $k = \log \delta/\log(1- \epsilon) \geq 2 \log(1/\delta)/\epsilon$ samples are sufficient that $\dim(S_{i+k}) \geq \dim(S_i)+1 $ with probability at least $1- \delta $. 
  Repeating this argument $n$ times, the total success probability is given by $(1-\delta)^n$ unless $P(S\setminus S_{i}) < \epsilon$ for some $i \in [nk]$. 
  Choosing the failure probability in every step as $\delta/n$ and observing that $(1- \delta/n)^n \geq 1 - 2 \delta$ proves the claim.
\end{proof}

Now, we have that $\spn(\{b^i\}_i) \oplus k \subset G'$, since  for $M > n + t $ the Bell samples are linearly dependent and hence at least one sample $b^{j_0}$ is in the span of all others. 
Let $b^M$ be such a sample. Then $\spn(\{b^{(i,M)}\}_i \in \spn(\{b^i\}_i)$ and therefore $\spn(\{b^{(i,j)} \}_{i}) \supset \spn( \{b^i\}_i)$.
But this implies that $P_\psi(G' \oplus k )\geq 1- \epsilon$. 

Let $Q_\psi(a) = \sum_b P_\psi(b) P_\psi(b+a)$ be the distribution of the Bell difference samples. 
Then the differences $b^{(2i,2i+1)}$ for $i \in \{0, \ldots, (M-1)/2\}$ are distributed according to $Q_\psi$. 
It follows from \cref{lem:subspace} that if $M \geq 4n \log(n) \log(2/\delta)/\epsilon$, $Q_\psi(G') \geq 1 - \epsilon$ with probability at least $1- \delta$.
It follows from Proposition 10 of Ref.~\cite{grewal_efficient_2023} that $C_\psi(G')\geq 1 - \epsilon$.

\section{Error detection and correction}
\label{sec:error detection}
In this section, we will outline some details of the error detection procedure, and discuss potential issues that arise due to noise in the Bell measurement itself. 

\subsection{Error reduction by error detection}

In this section, we will calculate the amount of error reduction that it is possible by error detection using the global symmetry. 
Recall that our error detection procedure runs as follows.

 \begin{algorithm}[H]
   \caption{Error detection \label{alg:error detection}}
   \begin{algorithmic}[1]
    \Require Bell sample $r \leftarrow P_\rho$. 
    \If{$\pi_Y(r) = 1$}
    \State Declare an error and abort.
    \EndIf
    \Ensure $r$
    \end{algorithmic}
  \end{algorithm}

Let us now quantify the amount of error reduction that is possible using \cref{alg:error detection}. 
First, let us recall why the error detection is correct: 
Since we know that the purity of the ideal state is unity, all samples which lead to a purity away from unity must have been due to an error. These are exactly the samples from the antisymmetric subspace, i.e., samples $r$ satisfying $\pi_Y(r) = 1$. 

In the next step, we can compute the error reduction capabilities of the algorithm. 
We do so in the simplest possible model of noise: global white noise. 
In this model we write each copy of the state as 
\begin{align}
\rho = \rho_C(\eta)  =  (1- \eta) \proj C + \eta \id/2^n, 
\end{align}
and hence the pre-measurement state as 
\begin{multline}
  \rho \otimes \rho = (1- \eta)^2 \proj C^{\otimes 2} + \frac {\eta( 1- \eta)}{2^n}( \proj C \otimes \id\\ + \id \otimes \proj C) + \frac{\eta^2}{2^{2n}} \id. 
\end{multline}

Again, we start with the case of $s=0$. 
We observe that the Bell distribution for the noisy state can be written as 
\begin{align}
  P_\rho(r) = (1 - \eta)^2 P_C(r) + \frac 1{4^n} P_e(\eta),  
\end{align}
with the error probability $P_e(\eta) = 2 \eta ( 1- \eta) + \eta^2$, i.e., the distribution of the errors is uniform.
Now, we observe that an error falls into the antisymmetric subspace with probability $D_{[2]}/2^{2n} = \frac 12 ( 1 + 1/2^n)$. 
Hence, the probability that an error is detected is given by $P_e(\eta) \cdot \frac 12 ( 1 + 1/2^n)$. 

Now, we can consider the unnormalized postselected distribution $\tilde Q_\rho(r) = (1 - \eta)^2 P_C(r) + \delta_{[2]}(r)/D_{[2]} P_e(\eta)$, and normalize it as $\tilde P_\rho(r) = \tilde Q_\rho(r)/(\sum_r \tilde Q_\rho(r))$, where $\delta_{[2]}(r) = 1 $ if $\pi_Y(r) = 0$ and $0$ otherwise. 
In order to compute the effective error reduction, we find 
\begin{align}
  \arg \min_{\epsilon} \norm{P_{\rho(\epsilon)} - \tilde P_{\rho(\eta)}}_{\ell_1} = \eta/2 + O(\eta^2). 
\end{align}
For $\eta\ll 1 $ we thus find an error reduction by a factor of $2$ from the first stage of the error detection algorithm.
Error detection based just on the entire subsystem therefore recovers the probability of no error compared to running computations on a single copy. 

In spite of doubling the number of qubits compared to standard-basis sampling, we have thus achieved the same overall post-selected fidelity at a given error rate.

\subsection{Noise in the Bell measurement}
\label{ssec:bell measurement noise}

Above, we have considered (local) noise in the state preparation while keeping the Bell measurements themselves error-free.
Of course, in an actual implementation of Bell sampling, the Bell measurements themselves will be noisy as well. 
The Bell measurement is constituted of transversal \texttt{cnot}-entangling gates and single-qubit measurements. The potential sources of error are therefore errors in the entangling gates and errors in the measurement apparatus. 

Since we can move all the errors before the Bell measurement to the state preparation, which we have already discussed, we only have to consider errors \emph{after} the entangling gates. 
First, consider single-qubit noise channels $\mc N$ with noise strength $\epsilon$ after each two-qubit entangling gates before the measurements in the Hadamard and computational basis, i.e., 
\begin{align}
  \proj \Phi^+ \rightarrow \texttt{cnot} \, {\mc N^\dagger}^{\otimes 2} ( \proj + \otimes \proj 0  ) \,\texttt{cnot} ,
\end{align}
and hence, the fidelity of the noisy measurement with the ideal measurement is just given by $\tr[ \mc N (\proj 0) \proj 0 ] \tr[ \mc N (\proj +) \proj + ] \approx (1- \epsilon)^2$. 
The global measurement fidelity is then just $(1-\epsilon)^{2n} $, and hence for noise rate $\epsilon \ll 1/n$, the fidelity is sufficiently high. 
Notice that this is also the regime in which we can meaningfully use the cross-entropy benchmark \cite{ware_sharp_2023}. 
Coherent errors such as \texttt{cnot}-over- or underrotations do not change the overall picture, since all entangling gates are carried out in parallel and hence the measurement fidelity just factorizes into the local fidelities. 

Of course, antisymmetric errors in the measurement will also be detected by our detection procedure (assuming that they do not combine with antisymmetric errors from the state preparation). 
At a high level, these favourable properties of the Bell measurement with regards to their susceptibility to errors and their capabilities to detect them is what makes them fault-tolerant gadgets in stabilizer codes as well. \\

Let us note, however, that this analysis will be drastically different in architectures in which the \texttt{cnot}-entangling gates cannot be carried out in a single circuit layer at the end of the circuit such as geometrically local architectures. 
In such architectures the error contribution from the Bell measurement itself will be significant.

\subsection{Virtual distillation using the Bell samples}

We observe that even further error suppression is possible in the white-noise model for estimation of expectation values of observables that are diagonal in the Bell basis. 
Such observables can be written as 
$A = \sum_{i,j \in \bin} a_{ij} \proj{\sigma_{ij}}$. 

Writing an arbitray noisy state preparation of $\ket \psi$ as 
\begin{align}
  \rho_\psi(\epsilon) = (1- \epsilon) \proj \psi + \epsilon \rho_\perp,
\end{align}
where $\tr[\rho_\perp \proj \psi] = 0 $, we can write the expectation value with respect to $\rho = \rho_\psi(\epsilon)$ as 
\begin{multline}
  \tr [A (\rho \otimes \rho)] = (1- \epsilon)^2 \tr[A \proj\psi^{\otimes 2}]\\
  + \epsilon(1 - \epsilon) \tr [A (\rho_\perp \otimes \proj \psi + \proj \psi \otimes \rho_\perp)] + \epsilon^2\tr [ A \rho_\perp^{\otimes 2}].
\end{multline}
Combined with the purity estimate, we can then estimate the ideal expectation value $\langle A \rangle_\psi = \tr[A \proj \psi^{\otimes 2}]$ from the noisy Bell sampling data from $\rho_\psi(\epsilon)$
\begin{multline}
\hat{\langle A \rangle}_\psi = \frac{\tr[A (\rho \otimes \rho)]}{\tr[\mb S (\rho \otimes \rho)]} 
=  \left( 1 - \tr[\rho_\perp^2] \epsilon^2 \right) \langle A \rangle_\psi + \epsilon^2\tr [A \rho_\perp^{\otimes 2}] \\
+ (\epsilon + \epsilon^2) \tr[ A (\rho_\perp \otimes \proj \psi + \proj \psi \otimes \rho_\perp )] + O(\epsilon^3).
\end{multline}
Here, we have used that $\tr[\rho_\psi(\epsilon)^2] = (1-\epsilon)^2 + \epsilon^2 \tr[\rho_\perp^2]$.

In particular, choosing $A= (P \otimes P)\mb S $, we can estimate $\bra \psi P \ket \psi^2$ with error suppression $\epsilon^2$. 
To see this, observe that $\tr[A (\rho\otimes \rho)] = \tr[P \rho P \rho]$ and 
\begin{multline}
  \hat{\langle P \rangle}_\psi^2 = \frac{\tr[P \rho P\rho)]}{\tr[\rho^2]} = ( 1- \tr[\rho_\perp]^2 \epsilon^2)   \langle P \rangle_\psi^2 \\+ ( \epsilon + \epsilon^2) \bra \psi P \rho_\perp P \ket \psi + \epsilon^2 \tr[P \rho_\perp P \rho_\perp] + O(\epsilon^3). 
\end{multline}
Generically, the term $ \bra \psi P \rho_\perp P \ket \psi$ will be exponentially suppressed and hence we obtain an error suppression from $\epsilon$ to $\min\{\epsilon/2^n, \epsilon^2\}$.

As a concrete example, consider the case where $\rho \propto e^{-\beta H}$ for a local gapped Hamiltonian   with gap $\Delta$ satisfying $\beta\Delta \gg 1$, and we are interested in estimating local Pauli expectation values in the ground state $\ket{E_0}$. 
Using the two-copy observable $A = (P \otimes P) \mb S$ we obtain the estimator $\tr[ P \rho P \rho ]/\tr[\rho^2] = |\bra{E_0} P \ket{E_0}|^2 + O(e^{-2 \beta \Delta})$.
This identity follows because $\bra{E_n} P \ket{E_0}$ for $n \ne 0$ is generically suppressed as an inverse polynomial in $n$ for local gapped Hamiltonians and local operators $P$ \cite{Hastings10}.  As a result, we can use post-processing of the Bell samples to virtually ``cool'' the system to half the temperature of the initial state.

\section{Applying the noisy simulation algorithm to Bell sampling}
\label{sec:aharonov algorithm}

In this section, we consider whether the noisy simulation algorithm of \textcite{gao_efficient_2018,aharonov_polynomial-time_2022} applies to Bell sampling.
Before we start, let us briefly recap the algorithm. 
We will use the notation of Ref.~\cite{aharonov_polynomial-time_2022}.

\subsection{Recap of the algorithm}
The key idea is to write an output probability $P(C,x)$ of a random circuit $C = U_d U_{d-1} \cdots U_1$ with Haar-random two-qubit gates $U_i$ as a path integral
\begin{align}
\label{eq:pauli path integral}
  P(C,x) = &\sum_{s_0, \ldots, s_d \in \mathsf{P}_n} \tr[\proj x s_d] \tr[s_d U_d s_{d-1}U_d^\dagger] \cdots  \nonumber\\
  &\qquad\cdots \tr[s_1 U_1 s_{0}U_1^\dagger] \tr[s_0 \proj {0^n}]\\
  \equiv \sum_{s \in \mathsf{P}_n^{d+1}} &\llangle x | s_d\rrangle \llangle s_d | \mc U_d | s_{d-1} \rrangle \, \cdots \, \llangle s_1 | \mc U_1 | s_{0} \rrangle \llangle s_0 | {0^n} \rrangle\\
  \eqqcolon \sum_{s \in \mathsf{P}_{n}^{d+1}} & f(C,s,x),
\end{align}
where $\mathsf{P}_n$ is the $n$-qubit Pauli group. 
This expression can be easily seen from the fact that the Pauli matrices form a complete operator basis and therefore $\tr[U\rho U^\dagger s] =\sum_{t \in \mathsf{P}_n} \tr[Ut U^\dagger s]  \tr[\rho t]$.
We also write $\mc U \coloneqq U \cdot U^\dagger$ and $\tr[ab] = \llangle a | b\rrangle$. 
Notice that in writing the path integral \eqref{eq:pauli path integral}, we have normalized the Pauli matrices to $\tr[pp']= \delta_{p,p'}$ for $p,p' \in \mathsf P_n$.

We can also think of the Pauli path integral as a Fourier decomposition of the output probabilities. 
In this Fourier representation, the effect of local depolarizing noise can be easily analyzed since it just acts as $\mc E(\rho) = (1 - \epsilon) \rho + \epsilon \tr[\rho] \id / 2^n$. 
The contribution of a Pauli path of a noisy quantum circuit to the total output probability thus decays with the number of non-identity Pauli operators in it (the Hamming weight of $s$) as
\begin{align}
  \tilde P(C,x) = \sum_{s \in \mathsf{P}_{n}^{d+1}} (1- \epsilon)^{|s|} f(C,s,x). 
\end{align}
The algorithm of \textcite{aharonov_polynomial-time_2022} is based on approximating this sum by truncating it to paths with weight $|s| \leq \ell$ for some $\ell \in \mb N$. 
The approximation can then be computed in time $2^{O(\ell)}$. 
Furthermore, since the output string just appears as $\llangle x | s_d \rrangle$, any marginal can be computed at the same complexity. 
Replacing the outcome string $x \in \{0,1\}^n$ by a string $y$ with characters $ \{ 0, 1, \bullet \}^n$, whenever there is a $\bullet$ at position $i$ of $y$ we write $\id_2$ at $i$, and eventually trace over $ \prod_{i: y_i \in \{0,1\}} \proj {y_i}_i \otimes \id$, which is efficient.
The algorithm then samples from the truncated distribution $\tilde p_C^{(\ell)}$ using marginal sampling. 
What remains to be shown is that the total-variation distance $\Delta = \mathsf{TVD}(\tilde p_C^{(\ell)},\tilde p_C^{(\ell)})$ between the truncated and the noisy distribution is sufficiently small in $\ell$ to give an efficient algorithm. To this end, they provide an upper bound on $\mb E_C [\Delta^2]$ using the Cauchy-Schwarz inequality, see Secs.~2 \& 3 of Ref.~\cite{aharonov_polynomial-time_2022}. 

Here, we consider two possible strategies to prove the algorithm remains efficient for Bell sampling. 
We show that the first strategy fails, and give evidence that the second strategy fails.
Together, this provides some evidence that the algorithm does not work, making Bell sampling a compelling candidate for noise-resilient sampling.

\subsection{Strategy 1: Upper bounds on the trace distance}

The first strategy we consider is to adapt the upper bound of \textcite{aharonov_polynomial-time_2022} on TVD to an upper bound on the trace distance. 
An upper bound $\epsilon$ on the trace distance of the sampled quantum state shows that the optimal single-copy measurement distinguishing probability is given by $\epsilon$. 
In Bell sampling we perform a two-copy measurement, and we can bound the two-copy trace distance in terms of the single-copy trace distance as 
\begin{align}
  \norm{\rho\otimes \rho - \sigma \otimes \sigma }_{1} &
  = \norm{\rho\otimes \rho - \rho \otimes \sigma + \rho \otimes \sigma -\sigma \otimes \sigma }_{1}\\
  &\leq \norm{\rho \otimes (\rho - \sigma)}_{1} + \norm{(\rho - \sigma)\otimes \sigma}_{1} \\
  &= 2 \norm{\rho - \sigma}_1
\end{align}

Consider the ideal pre-measurement state in the  path integral formulation
\begin{align}
\label{eq:pauli path integral state}
  \rho(C) = \sum_{s \in \mathsf{P}_n^{d+1}} & | s_d\rrangle \llangle s_d | \mc U_d | s_{d-1} \rrangle \, \cdots \, \llangle s_1 | \mc U_1 | s_{0} \rrangle \llangle s_0 | {0^n} \rrangle\\
  \eqqcolon \sum_{s \in \mathsf{P}_{n}^{d+1}} & g(C,s) | s_d \rrangle . 
\end{align}
Then we can write the noisy state as well as the state which is effectively generated when truncating the noisy path integral to paths of weight $\leq \ell$ as 
\begin{align}
\label{eq:noisy path integral state}
  \tilde \rho(C) &
  \coloneqq \sum_{s \in \mathsf{P}_{n}^{d+1}} (1- \gamma)^{|s|} g(C,s) | s_d \rrangle,\\
  \label{eq:noisy path integral state truncated}
  \tilde \rho^\ell(C) & 
  \coloneqq \sum_{s: |s| \leq \ell} (1- \gamma)^{|s|} g(C,s) | s_d \rrangle\\
  \label{eq:noisy path integral state difference}
  \Delta \tilde \rho(C) & 
  \coloneqq \sum_{s: |s| > \ell} (1- \gamma)^{|s|} g(C,s) | s_d \rrangle.   
\end{align}

Analogously to Eq.~(25) of \textcite{aharonov_polynomial-time_2022}, we can bound the trace distance
\begin{align}
   \norm{\Delta \tilde \rho}_1^2 & \equiv \mb E_C \norm{ \tilde \rho^\ell(C) -  \tilde \rho(C)}^2_1  \\
  \leq & 2^n \mb E_C \norm{ \Delta \tilde \rho(C) }^2_2\\
   = & 2^n \mb E_C \tr[(\Delta \tilde \rho(C))^\dagger\Delta \tilde \rho(C)] \label{eq:2norm bound trace distance}
  \\
   = &  2^n \mb E_C \sum_{s,s': |s'|, |s| > \ell } (1- \gamma)^{2|s|} \llangle s_d'| s_d \rrangle g(C,s) g(C,s')\\
   = & 2^n \sum_{s: |s| > \ell } (1- \gamma)^{2|s|} \mb E_C g(C,s)^2, 
\end{align}
using the Cauchy-Schwarz inequality and orthogonality of the coefficients $g(C,s)$ (note that this holds since it is just a local property of $\mb E_U [\llangle q| \mc U | p \rrangle  \llangle r | \mc U | s \rrangle]$). 

Hence, in order to upper bound $\norm{\Delta \tilde \rho}_1$ we need to upper bound $\sum_s \mb E_C [g(C,s)^2]$ for certain values of $s$.
\textcite{aharonov_polynomial-time_2022} achieve this by upper bounding the total sum over all $s$.

Following their strategy, we define a Fourier weight
\begin{align}
  V_k = 2^n \mb E_C \sum_{s: |s| = k} g(C,s)^2, 
\end{align}
and compute its properties. 

We certainly have 
\begin{align}
  V_0 & = 1\\
  V_k & = 0 \quad \forall 0 < k \leq d.
\end{align}
To see this, we just follow the argument of \textcite{aharonov_polynomial-time_2022}. In particular $V_0 =  2^n \llangle \id | \id \rrangle^{2d} \llangle \id | 0 \rrangle^2 = 1$ since $\llangle \id | 0 \rrangle = 1/\sqrt{2^n}$ and $\llangle \id | \id \rrangle = 1$. 

What remains is to compute $\sum_{k \geq d+1} V_k$. 
To this end, we can compute---using a 2-design assumption on $C$---the total Fourier weight
\begin{align}
  \sum_{k \geq 0} V_k &  = 2^n \mb E_C \sum_{k \geq 0 } \sum_{s:  |s| = k} g(C,s)^2\\
  & = 2^n \mb E_C \sum_{s',s} g(C,s)g(C,s')\\
  & = 2^n\mb E_C (\sum_{s \in  \sf P_n^{d+1}} g(C,s))^2 \label{eq:vk l1} \\
  &  = 2^n \mb E_C(\sum_{p \in \sf P_n } \bra C p \ket C)^2 \label{eq:vk l2} \\
  & = 2^n  \sum_{p,p'}\tr[ \mb E_C \proj C^{\otimes 2} (p \otimes p')] \\
  & \stackrel{C\,  2-\text{des}}{=} \frac{2^n}{2 D_{[2]}} \sum_{p,p'} \tr[(\id + \mb S)(p \otimes p')] 
  \\
  & =  \frac{2^n}{2 D_{[2]}} \sum_{p,p'} (2^n \delta_{p,\id}\delta_{p',\id} + \delta_{p,p'})\\
  & = \frac{2^n}{ D_{[2]}} \frac{2^n + 4^n}{2} = 2^n
\end{align}
Here, we have used orthogonality in reverse, and in lines \eqref{eq:vk l1} and \eqref{eq:vk l2}, we have used that $\rho = \sum_p p \tr[\rho p] = \sum_{s} s_d g(C,s)$ and hence $\tr[\rho p ] = \sum_{s: s_d = p} g(C,s)$.

Putting everything together, analogously to \textcite[][Eq.~(29)]{aharonov_polynomial-time_2022} we find that 
\begin{align}
  \norm{\Delta \tilde \rho}_1^2 \leq 2^n (1-\gamma)^{2 \ell}, 
\end{align}
which remains trivial for $\ell \in o(n)$.

The same argument as in Ref.~\cite{aharonov_polynomial-time_2022} thus cannot be used to show that the algorithm works for any measurement strategy. 
One might wonder if we can tighten the bound. We argue that we cannot.

First, observe that the only strict inequality we have used is to bound the trace distance by the Frobenius norm in \cref{eq:2norm bound trace distance}. 
We can also hardly hope to remove the factor of $2^n$ incurred in this bound, however, because we expect the state to be spread out in Hilbert space and the bound is tight for the uniform distribution. 

Second, observe that the trace distance upper bound is dominated by the sum over all $4^n$ Pauli matrices. 
Compare this to the original algorithm, where the only non-zero contributions to the final measurement outcomes were Pauli paths which ended in a $Z$-type string since the overlap with a computational basis state was computed. 
This is a reduction by precisely the factor of $2^n$ which we find in our upper bound on the trace distance.
Since the Bell distribution overlaps with almost all Pauli strings, we expect the upper bound to be similar when done directly in the Bell basis.  

\subsection{Strategy 2: The argument in the Bell basis}

Indeed, an alternative proof strategy is to directly upper bound the total-variation distance of the truncated Bell distribution. 
To this end we write the Bell-basis Pauli path integral as 
\begin{align}
\label{eq:pauli path integral}
  P(C,r) = &\sum_{s \in \mathsf{P}_{2n}^{d+1}} \llangle \Phi^+ | \mc P_r^\dagger \otimes \id_n| s_d\rrangle \llangle s_d | \mc U_d^{\otimes 2} | s_{d-1} \rrangle \, \cdots \\
  & \qquad\qquad \cdots\, \llangle s_1 | \mc U_1^{\otimes 2} | s_{0} \rrangle \llangle s_0 | {0^{2n}} \rrangle\\
  \eqqcolon & \sum_{s \in \mathsf{P}_{2n}^{d+1}} f(C,s,r)
\end{align}
Writing $s_i = (s_i^0, s_i^1) $ with $s_i^j \in \mathsf{P}_n$, we observe that $\llangle \Phi^+ | \mc P_r^\dagger \otimes \id_n| s_d\rrangle= (-1)^{\langle P_r,s_i^0 \rangle + \delta(s_i^0,Y)} 
\delta (s_i^0, s_i^1)$, where $\langle p,q \rangle = 1$ if $p$ and $q$ anticommute and zero otherwise. 
The boundary condition for the end of the Pauli path is therefore that both branches---copies of the system---must end at the same $n$-qubit Pauli.

Valid paths are therefore of the form 
\begin{align}
  (s_0^0, s_0^1) \rightarrow (s_1^0, s_1^1) \rightarrow \cdots \rightarrow(s_{d-1}^0, s_{d-1}^1) \rightarrow (s_d^0, s_d^0),
\end{align}
and there are $(4^{2n})^{d} \cdot 4^n = 4^{n(2d + 1)}$ of them.

Next when we add noise to the circuit, the sum again transforms to
\begin{align}
\label{eq:noisy path integral bell}
  \tilde P(C,r) = \sum_{s \in \mathsf{P}_{2n}^{d+1}} (1- \gamma)^{|s|} f(C,s,r).
\end{align}

Let us now bound the total-variation distance between an $\ell$-truncated noisy path integral $\tilde P_\ell$  and the non-truncated path integral \eqref{eq:noisy path integral bell} as 
\begin{align}
  \mb E_C [\Delta^2] & =\mb E_C  \left(\sum_r |\tilde P(C,r) - \tilde P_\ell(C,r) | \right)^2 \\
  & \leq 2^{2n} \mb E_C \sum_r (\tilde P(C,r) - \tilde P_\ell(C,r) )^2\\
  & \leq 2^{2n} \mb E_C \sum_r  \left(\sum_{s: |s| > \ell} (1-\gamma)^{|s|} f(C,s,r) \right)^2 \label{eq:bound sum tvd}
\end{align}

\textcite{aharonov_polynomial-time_2022} now go forward to bound their expression analogous to \eqref{eq:bound sum tvd} (up to scaling and replacing $r \leftarrow x $. 
Letting 
\begin{align}
  W_k = 2^{2n} \mb E_C \sum_{s \in \mathsf{P}_{n}^{d+1}: |s| = k} f(C,s,0^n)^2
\end{align}
be the total Fourier weight of a circuit at degree $k$, they use the following properties: 
\begin{itemize}
  \item orthogonality of the Fourier coefficients, i.e., 
  \begin{align}
    \mb E_C [f(C,s,x) f(C,s',x)] = 0, \quad \forall s \neq s'
  \end{align}

  \item Bounds on the total Fourier weight 
  \begin{align}
    W_0 &= 1\\
    W_k &= 0, \quad \forall 0 < k < d, \\
    \sum_{k \geq d+1} W_k &\in O(1),
  \end{align}
  which follow from anticoncentration, 
  \begin{align}
    2^n\mb E_C [p(C,x)^2] \in O(2^{-n}).
  \end{align}
\end{itemize}

Notice that all of  these properties are second-moment properties. In the Bell sampling, these become fourth moment properties.

\paragraph{Orthogonality}

Let us begin by considering the orthogonality property. 

To this end, consider a single gate in the circuit. The exact expression to compute is
\begin{align}
\label{eq:aharonov orthogonal expression}
  \mb E_{U \sim \mu} \llangle p^0 | \mc U |q^0 \rrangle  \llangle r^0 | \mc U | s^0 \rrangle \llangle p^1 | \mc U |q^1 \rrangle  \llangle r^1 | \mc U | s^1 \rrangle ,
\end{align}
 $p,q,r,s \in \mathsf{P}_{2^{4}}$. 
If $\mu$ is the Haar measure, using Weingarten calculus, we can rewrite the expression as
\begin{multline}
  \sum_{\pi, \sigma \in S_4} \text{Wg}(\pi \sigma^{-1}) \tr (W_\pi(p^0 \otimes p^1 \otimes r^0 \otimes r^1))\\\times \tr (W_\sigma(q^0 \otimes q^1 \otimes s^0 \otimes s^1)). 
\end{multline}
which does not obviously simplify.

Fortunately, to show orthogonality in the single-copy case (which involves only second moments), we can make use of the right-invariance of our gate set under multiplication with Pauli matrices, i.e., $\mc G\cdot p = \mc G$ for any $p \in \mathsf{P}_2$.  
Hence, we can insert a expectation value over Pauli matrices, 
\begin{align}
   \mb E_{U \sim \mc G} f(U) = \mb E_{U \sim \mc G} \mb E_{v \sim \mathsf{P}_2} f(Uv).
 \end{align}
 In our (two-copy) case, we therefore get
\begin{align}
  &\mb E_{U \sim \mc H}\llangle p^0 | \mc U |q^0 \rrangle  \llangle r^0 | \mc U | s^0 \rrangle \llangle p^1 | \mc U |q^1 \rrangle  \llangle r^1 | \mc U | s^1 \rrangle \\
  & =\mb E_{U \sim \mc H} \mb E_{v \sim \mathsf{P}_{2}} \llangle p^0 | \mc Uv |q^0 \rrangle  \llangle r^0 | \mc Uv | s^0 \rrangle \llangle p^1 | \mc Uv |q^1 \rrangle  \llangle r^1 | \mc Uv | s^1 \rrangle \\
  & =\mb E_{U \sim \mc H} \mb E_{v \sim \mathsf{P}_{2}} \tr[p^0 Uv q^0 v^\dagger U^\dagger]\tr[r^0 Uv s^0 v^\dagger U^\dagger]\\
  & \qquad\qquad\tr[p^1 Uv q^1 v^\dagger U^\dagger]\tr[r^1 Uv s^1 v^\dagger U^\dagger].
\end{align}

Hence, for orthogonality to hold it suffices that
\begin{align}
  \mb E_{v \in \mathsf{P}_{2}}[ v ^{\otimes 4} (p \otimes q \otimes r \otimes s)(v^{\dagger})^{\otimes 4}] = 0. 
\end{align}
Indeed, 
\begin{align}
  \mb E_{v \in \mathsf{P}_{2}}&[ v ^{\otimes 4} (p \otimes q \otimes r \otimes s)(v^{\dagger})^{\otimes 4}]\\
  & = \frac 1 {16} \sum_{v \in \mathsf{P}_2} (-1)^{\langle v, p\rangle + \langle v, q\rangle + \langle v, r\rangle + \langle v, s\rangle } p \otimes q \otimes r \otimes s\\
  & =  \frac 1 {16} \sum_{v \in \mathsf{P}_2} (-1)^{\langle v, pqrs\rangle } p \otimes q \otimes r \otimes s\\
  & = 0, \quad \forall pq \neq \ii^k rs,
  \label{eq:orthogonality condition bell}
\end{align}

\cref{eq:orthogonality condition bell} gives us the orthogonality property 
\begin{align}
  \mb E_{C \sim \mc D} f(C,q,r)f(C,s,r) = 0, \quad \forall q^0s^0 \neq \ii^k q^1 s^1.
\end{align}
The proof follows straightforwardly from \eqref{eq:orthogonality condition bell}, noting that the last layer of Paulis is always the same across $q$ and $s$. 

While the condition $q \neq s$ that arises in the single-copy case considered by \textcite{aharonov_polynomial-time_2022} reduces summation over $\mathsf{P}_2^2$ to summation over $\mathsf{P}_2$, the condition $pq \neq rs $ we find for the $2$-copy case reduces summation over $\mathsf{P}_4^2$ to summation over $\mc S = \{p,q,r,s \in \mathsf{P}_2: pq = \ii^k rs \} $. 
We have $|\mc S| = 4^{2n} \cdot 4^n= 4^{3n}$ since we can choose $p,q$ freely and then the pair $r,s$ is constrained to $rs = i^k pq $ for some $k$ of which there are $4^n $ choices.
As a result, similar to the trace-distance calculation, we find an additional exponential summation over Pauli strings, which blows up the sum by a factor of $4^n$.

Notice, however, that the orthogonality condition \eqref{eq:orthogonality condition bell} we have found is only a sufficient condition for the expectation \eqref{eq:aharonov orthogonal expression} to vanish, and it could be that the full Haar average has more zero terms. 
We expect, however, that condition \eqref{eq:orthogonality condition bell} is the only condition.
To prove that this is indeed the case, one needs to compute the fourth moments \eqref{eq:aharonov orthogonal expression}---a task that we leave to future work.
\medskip

To summarize, in both strategies we run into an exponential blow-up in the summation that, as of now, results in an exponential upper bound on the TVD between the sampled distribution and the target distribution. 

\putbib
\end{bibunit}


\begin{thebibliography}{0}%
\makeatletter
\providecommand \@ifxundefined [1]{%
 \@ifx{#1\undefined}
}%
\providecommand \@ifnum [1]{%
 \ifnum #1\expandafter \@firstoftwo
 \else \expandafter \@secondoftwo
 \fi
}%
\providecommand \@ifx [1]{%
 \ifx #1\expandafter \@firstoftwo
 \else \expandafter \@secondoftwo
 \fi
}%
\providecommand \natexlab [1]{#1}%
\providecommand \enquote  [1]{``#1''}%
\providecommand \bibnamefont  [1]{#1}%
\providecommand \bibfnamefont [1]{#1}%
\providecommand \citenamefont [1]{#1}%
\providecommand \href@noop [0]{\@secondoftwo}%
\providecommand \href [0]{\begingroup \@sanitize@url \@href}%
\providecommand \@href[1]{\@@startlink{#1}\@@href}%
\providecommand \@@href[1]{\endgroup#1\@@endlink}%
\providecommand \@sanitize@url [0]{\catcode `\\12\catcode `\$12\catcode
  `\&12\catcode `\#12\catcode `\^12\catcode `\_12\catcode `\%12\relax}%
\providecommand \@@startlink[1]{}%
\providecommand \@@endlink[0]{}%
\providecommand \url  [0]{\begingroup\@sanitize@url \@url }%
\providecommand \@url [1]{\endgroup\@href {#1}{\urlprefix }}%
\providecommand \urlprefix  [0]{URL }%
\providecommand \Eprint [0]{\href }%
\providecommand \doibase [0]{http://dx.doi.org/}%
\providecommand \selectlanguage [0]{\@gobble}%
\providecommand \bibinfo  [0]{\@secondoftwo}%
\providecommand \bibfield  [0]{\@secondoftwo}%
\providecommand \translation [1]{[#1]}%
\providecommand \BibitemOpen [0]{}%
\providecommand \bibitemStop [0]{}%
\providecommand \bibitemNoStop [0]{.\EOS\space}%
\providecommand \EOS [0]{\spacefactor3000\relax}%
\providecommand \BibitemShut  [1]{\csname bibitem#1\endcsname}%
\let\auto@bib@innerbib\@empty
\end{thebibliography}%


\begin{thebibliography}{76}%
\makeatletter
\providecommand \@ifxundefined [1]{%
 \@ifx{#1\undefined}
}%
\providecommand \@ifnum [1]{%
 \ifnum #1\expandafter \@firstoftwo
 \else \expandafter \@secondoftwo
 \fi
}%
\providecommand \@ifx [1]{%
 \ifx #1\expandafter \@firstoftwo
 \else \expandafter \@secondoftwo
 \fi
}%
\providecommand \natexlab [1]{#1}%
\providecommand \enquote  [1]{``#1''}%
\providecommand \bibnamefont  [1]{#1}%
\providecommand \bibfnamefont [1]{#1}%
\providecommand \citenamefont [1]{#1}%
\providecommand \href@noop [0]{\@secondoftwo}%
\providecommand \href [0]{\begingroup \@sanitize@url \@href}%
\providecommand \@href[1]{\@@startlink{#1}\@@href}%
\providecommand \@@href[1]{\endgroup#1\@@endlink}%
\providecommand \@sanitize@url [0]{\catcode `\\12\catcode `\$12\catcode
  `\&12\catcode `\#12\catcode `\^12\catcode `\_12\catcode `\%12\relax}%
\providecommand \@@startlink[1]{}%
\providecommand \@@endlink[0]{}%
\providecommand \url  [0]{\begingroup\@sanitize@url \@url }%
\providecommand \@url [1]{\endgroup\@href {#1}{\urlprefix }}%
\providecommand \urlprefix  [0]{URL }%
\providecommand \Eprint [0]{\href }%
\providecommand \doibase [0]{http://dx.doi.org/}%
\providecommand \selectlanguage [0]{\@gobble}%
\providecommand \bibinfo  [0]{\@secondoftwo}%
\providecommand \bibfield  [0]{\@secondoftwo}%
\providecommand \translation [1]{[#1]}%
\providecommand \BibitemOpen [0]{}%
\providecommand \bibitemStop [0]{}%
\providecommand \bibitemNoStop [0]{.\EOS\space}%
\providecommand \EOS [0]{\spacefactor3000\relax}%
\providecommand \BibitemShut  [1]{\csname bibitem#1\endcsname}%
\let\auto@bib@innerbib\@empty
\bibitem [{\citenamefont {Boixo}\ \emph {et~al.}(2018)\citenamefont {Boixo},
  \citenamefont {Isakov}, \citenamefont {Smelyanskiy}, \citenamefont {Babbush},
  \citenamefont {Ding}, \citenamefont {Jiang}, \citenamefont {Bremner},
  \citenamefont {Martinis},\ and\ \citenamefont
  {Neven}}]{boixo_characterizing_2018}%
  \BibitemOpen
  \bibfield  {author} {\bibinfo {author} {\bibfnamefont {S.}~\bibnamefont
  {Boixo}}, \bibinfo {author} {\bibfnamefont {S.~V.}\ \bibnamefont {Isakov}},
  \bibinfo {author} {\bibfnamefont {V.~N.}\ \bibnamefont {Smelyanskiy}},
  \bibinfo {author} {\bibfnamefont {R.}~\bibnamefont {Babbush}}, \bibinfo
  {author} {\bibfnamefont {N.}~\bibnamefont {Ding}}, \bibinfo {author}
  {\bibfnamefont {Z.}~\bibnamefont {Jiang}}, \bibinfo {author} {\bibfnamefont
  {M.~J.}\ \bibnamefont {Bremner}}, \bibinfo {author} {\bibfnamefont {J.~M.}\
  \bibnamefont {Martinis}}, \ and\ \bibinfo {author} {\bibfnamefont
  {H.}~\bibnamefont {Neven}},\ }\bibinfo {title} {\emph {Characterizing Quantum
  Supremacy in Near-Term Devices}},\ \href {\doibase 10.1038/s41567-018-0124-x}
  {\bibfield  {journal} {\bibinfo  {journal} {Nature Phys}\ }\textbf {\bibinfo
  {volume} {14}},\ \bibinfo {pages} {595} (\bibinfo {year} {2018})}\BibitemShut
  {NoStop}%
\bibitem [{\citenamefont {Liu}\ \emph {et~al.}(2022)\citenamefont {Liu},
  \citenamefont {Otten}, \citenamefont {Bassirianjahromi}, \citenamefont
  {Jiang},\ and\ \citenamefont {Fefferman}}]{liu_benchmarking_2022}%
  \BibitemOpen
  \bibfield  {author} {\bibinfo {author} {\bibfnamefont {Y.}~\bibnamefont
  {Liu}}, \bibinfo {author} {\bibfnamefont {M.}~\bibnamefont {Otten}}, \bibinfo
  {author} {\bibfnamefont {R.}~\bibnamefont {Bassirianjahromi}}, \bibinfo
  {author} {\bibfnamefont {L.}~\bibnamefont {Jiang}}, \ and\ \bibinfo {author}
  {\bibfnamefont {B.}~\bibnamefont {Fefferman}},\ }\bibinfo {title} {\emph
  {Benchmarking Near-Term Quantum Computers via Random Circuit Sampling}},\
  \href@noop {} {\  (\bibinfo {year} {2022})},\ \Eprint
  {http://arxiv.org/abs/2105.05232} {arXiv:2105.05232}\BibitemShut {NoStop}%
\bibitem [{\citenamefont {Heinrich}\ \emph {et~al.}(2022)\citenamefont
  {Heinrich}, \citenamefont {Kliesch},\ and\ \citenamefont
  {Roth}}]{heinrich_general_2022}%
  \BibitemOpen
  \bibfield  {author} {\bibinfo {author} {\bibfnamefont {M.}~\bibnamefont
  {Heinrich}}, \bibinfo {author} {\bibfnamefont {M.}~\bibnamefont {Kliesch}}, \
  and\ \bibinfo {author} {\bibfnamefont {I.}~\bibnamefont {Roth}},\ }\bibinfo
  {title} {\emph {General Guarantees for Randomized Benchmarking with Random
  Quantum Circuits}},\ \href@noop {} {\  (\bibinfo {year} {2022})},\ \Eprint
  {http://arxiv.org/abs/2212.06181} {arXiv:2212.06181}\BibitemShut {NoStop}%
\bibitem [{\citenamefont {Hangleiter}\ and\ \citenamefont
  {Eisert}(2023)}]{hangleiter_computational_2023-1}%
  \BibitemOpen
  \bibfield  {author} {\bibinfo {author} {\bibfnamefont {D.}~\bibnamefont
  {Hangleiter}}\ and\ \bibinfo {author} {\bibfnamefont {J.}~\bibnamefont
  {Eisert}},\ }\bibinfo {title} {\emph {Computational Advantage of Quantum
  Random Sampling}},\ \href {\doibase 10.1103/RevModPhys.95.035001} {\bibfield
  {journal} {\bibinfo  {journal} {Rev. Mod. Phys.}\ }\textbf {\bibinfo {volume}
  {95}},\ \bibinfo {pages} {035001} (\bibinfo {year} {2023})}\BibitemShut
  {NoStop}%
\bibitem [{\citenamefont {Arute}\ \emph {et~al.}(2019)\citenamefont {Arute},
  \citenamefont {Arya}, \citenamefont {Babbush}, \citenamefont {Bacon},
  \citenamefont {Bardin}, \citenamefont {Barends}, \citenamefont {Biswas},
  \citenamefont {Boixo}, \citenamefont {Brandao}, \citenamefont {Buell},
  \citenamefont {Burkett}, \citenamefont {Chen}, \citenamefont {Chen},
  \citenamefont {Chiaro}, \citenamefont {Collins}, \citenamefont {Courtney},
  \citenamefont {Dunsworth}, \citenamefont {Farhi}, \citenamefont {Foxen},
  \citenamefont {Fowler},\ and\ \citenamefont
  {\emph{others}}}]{arute_quantum_2019}%
  \BibitemOpen
  \bibfield  {author} {\bibinfo {author} {\bibfnamefont {F.}~\bibnamefont
  {Arute}}, \bibinfo {author} {\bibfnamefont {K.}~\bibnamefont {Arya}},
  \bibinfo {author} {\bibfnamefont {R.}~\bibnamefont {Babbush}}, \bibinfo
  {author} {\bibfnamefont {D.}~\bibnamefont {Bacon}}, \bibinfo {author}
  {\bibfnamefont {J.~C.}\ \bibnamefont {Bardin}}, \bibinfo {author}
  {\bibfnamefont {R.}~\bibnamefont {Barends}}, \bibinfo {author} {\bibfnamefont
  {R.}~\bibnamefont {Biswas}}, \bibinfo {author} {\bibfnamefont
  {S.}~\bibnamefont {Boixo}}, \bibinfo {author} {\bibfnamefont {F.~G. S.~L.}\
  \bibnamefont {Brandao}}, \bibinfo {author} {\bibfnamefont {D.~A.}\
  \bibnamefont {Buell}}, \bibinfo {author} {\bibfnamefont {B.}~\bibnamefont
  {Burkett}}, \bibinfo {author} {\bibfnamefont {Y.}~\bibnamefont {Chen}},
  \bibinfo {author} {\bibfnamefont {Z.}~\bibnamefont {Chen}}, \bibinfo {author}
  {\bibfnamefont {B.}~\bibnamefont {Chiaro}}, \bibinfo {author} {\bibfnamefont
  {R.}~\bibnamefont {Collins}}, \bibinfo {author} {\bibfnamefont
  {W.}~\bibnamefont {Courtney}}, \bibinfo {author} {\bibfnamefont
  {A.}~\bibnamefont {Dunsworth}}, \bibinfo {author} {\bibfnamefont
  {E.}~\bibnamefont {Farhi}}, \bibinfo {author} {\bibfnamefont
  {B.}~\bibnamefont {Foxen}}, \bibinfo {author} {\bibfnamefont
  {A.}~\bibnamefont {Fowler}}, \ and\ \bibinfo {author} {\bibnamefont
  {\emph{others}}},\ }\bibinfo {title} {\emph {Quantum Supremacy Using a
  Programmable Superconducting Processor}},\ \href {\doibase
  10.1038/s41586-019-1666-5} {\bibfield  {journal} {\bibinfo  {journal}
  {Nature}\ }\textbf {\bibinfo {volume} {574}},\ \bibinfo {pages} {505}
  (\bibinfo {year} {2019})}\BibitemShut {NoStop}%
\bibitem [{\citenamefont {Zhong}\ \emph {et~al.}(2020)\citenamefont {Zhong},
  \citenamefont {Wang}, \citenamefont {Deng}, \citenamefont {Chen},
  \citenamefont {Peng}, \citenamefont {Luo}, \citenamefont {Qin}, \citenamefont
  {Wu}, \citenamefont {Ding}, \citenamefont {Hu}, \citenamefont {Hu},
  \citenamefont {Yang}, \citenamefont {Zhang}, \citenamefont {Li},
  \citenamefont {Li}, \citenamefont {Jiang}, \citenamefont {Gan}, \citenamefont
  {Yang}, \citenamefont {You}, \citenamefont {Wang}, \citenamefont {Li},
  \citenamefont {Liu}, \citenamefont {Lu},\ and\ \citenamefont
  {Pan}}]{zhong_quantum_2020}%
  \BibitemOpen
  \bibfield  {author} {\bibinfo {author} {\bibfnamefont {H.-S.}\ \bibnamefont
  {Zhong}}, \bibinfo {author} {\bibfnamefont {H.}~\bibnamefont {Wang}},
  \bibinfo {author} {\bibfnamefont {Y.-H.}\ \bibnamefont {Deng}}, \bibinfo
  {author} {\bibfnamefont {M.-C.}\ \bibnamefont {Chen}}, \bibinfo {author}
  {\bibfnamefont {L.-C.}\ \bibnamefont {Peng}}, \bibinfo {author}
  {\bibfnamefont {Y.-H.}\ \bibnamefont {Luo}}, \bibinfo {author} {\bibfnamefont
  {J.}~\bibnamefont {Qin}}, \bibinfo {author} {\bibfnamefont {D.}~\bibnamefont
  {Wu}}, \bibinfo {author} {\bibfnamefont {X.}~\bibnamefont {Ding}}, \bibinfo
  {author} {\bibfnamefont {Y.}~\bibnamefont {Hu}}, \bibinfo {author}
  {\bibfnamefont {P.}~\bibnamefont {Hu}}, \bibinfo {author} {\bibfnamefont
  {X.-Y.}\ \bibnamefont {Yang}}, \bibinfo {author} {\bibfnamefont {W.-J.}\
  \bibnamefont {Zhang}}, \bibinfo {author} {\bibfnamefont {H.}~\bibnamefont
  {Li}}, \bibinfo {author} {\bibfnamefont {Y.}~\bibnamefont {Li}}, \bibinfo
  {author} {\bibfnamefont {X.}~\bibnamefont {Jiang}}, \bibinfo {author}
  {\bibfnamefont {L.}~\bibnamefont {Gan}}, \bibinfo {author} {\bibfnamefont
  {G.}~\bibnamefont {Yang}}, \bibinfo {author} {\bibfnamefont {L.}~\bibnamefont
  {You}}, \bibinfo {author} {\bibfnamefont {Z.}~\bibnamefont {Wang}}, \bibinfo
  {author} {\bibfnamefont {L.}~\bibnamefont {Li}}, \bibinfo {author}
  {\bibfnamefont {N.-L.}\ \bibnamefont {Liu}}, \bibinfo {author} {\bibfnamefont
  {C.-Y.}\ \bibnamefont {Lu}}, \ and\ \bibinfo {author} {\bibfnamefont {J.-W.}\
  \bibnamefont {Pan}},\ }\bibinfo {title} {\emph {Quantum Computational
  Advantage Using Photons}},\ \href {\doibase 10.1126/science.abe8770}
  {\bibfield  {journal} {\bibinfo  {journal} {Science}\ }\textbf {\bibinfo
  {volume} {370}},\ \bibinfo {pages} {1460} (\bibinfo {year}
  {2020})}\BibitemShut {NoStop}%
\bibitem [{\citenamefont {Zhu}\ \emph {et~al.}(2022)\citenamefont {Zhu},
  \citenamefont {Cao}, \citenamefont {Chen}, \citenamefont {Chen},
  \citenamefont {Chen}, \citenamefont {Chung}, \citenamefont {Deng},
  \citenamefont {Du}, \citenamefont {Fan}, \citenamefont {Gong}, \citenamefont
  {Guo}, \citenamefont {Guo}, \citenamefont {Guo}, \citenamefont {Han},
  \citenamefont {Hong}, \citenamefont {Huang}, \citenamefont {Huo},
  \citenamefont {Li}, \citenamefont {Li},\ and\ \citenamefont
  {\emph{others}}}]{zhu_quantum_2022}%
  \BibitemOpen
  \bibfield  {author} {\bibinfo {author} {\bibfnamefont {Q.}~\bibnamefont
  {Zhu}}, \bibinfo {author} {\bibfnamefont {S.}~\bibnamefont {Cao}}, \bibinfo
  {author} {\bibfnamefont {F.}~\bibnamefont {Chen}}, \bibinfo {author}
  {\bibfnamefont {M.-C.}\ \bibnamefont {Chen}}, \bibinfo {author}
  {\bibfnamefont {X.}~\bibnamefont {Chen}}, \bibinfo {author} {\bibfnamefont
  {T.-H.}\ \bibnamefont {Chung}}, \bibinfo {author} {\bibfnamefont
  {H.}~\bibnamefont {Deng}}, \bibinfo {author} {\bibfnamefont {Y.}~\bibnamefont
  {Du}}, \bibinfo {author} {\bibfnamefont {D.}~\bibnamefont {Fan}}, \bibinfo
  {author} {\bibfnamefont {M.}~\bibnamefont {Gong}}, \bibinfo {author}
  {\bibfnamefont {C.}~\bibnamefont {Guo}}, \bibinfo {author} {\bibfnamefont
  {C.}~\bibnamefont {Guo}}, \bibinfo {author} {\bibfnamefont {S.}~\bibnamefont
  {Guo}}, \bibinfo {author} {\bibfnamefont {L.}~\bibnamefont {Han}}, \bibinfo
  {author} {\bibfnamefont {L.}~\bibnamefont {Hong}}, \bibinfo {author}
  {\bibfnamefont {H.-L.}\ \bibnamefont {Huang}}, \bibinfo {author}
  {\bibfnamefont {Y.-H.}\ \bibnamefont {Huo}}, \bibinfo {author} {\bibfnamefont
  {L.}~\bibnamefont {Li}}, \bibinfo {author} {\bibfnamefont {N.}~\bibnamefont
  {Li}}, \ and\ \bibinfo {author} {\bibnamefont {\emph{others}}},\ }\bibinfo
  {title} {\emph {Quantum Computational Advantage via 60-Qubit 24-Cycle Random
  Circuit Sampling}},\ \href {\doibase 10.1016/j.scib.2021.10.017} {\bibfield
  {journal} {\bibinfo  {journal} {Science Bulletin}\ }\textbf {\bibinfo
  {volume} {67}},\ \bibinfo {pages} {240} (\bibinfo {year} {2022})}\BibitemShut
  {NoStop}%
\bibitem [{\citenamefont {Madsen}\ \emph {et~al.}(2022)\citenamefont {Madsen},
  \citenamefont {Laudenbach}, \citenamefont {Askarani}, \citenamefont
  {Rortais}, \citenamefont {Vincent}, \citenamefont {Bulmer}, \citenamefont
  {Miatto}, \citenamefont {Neuhaus}, \citenamefont {Helt}, \citenamefont
  {Collins}, \citenamefont {Lita}, \citenamefont {Gerrits}, \citenamefont
  {Nam}, \citenamefont {Vaidya}, \citenamefont {Menotti}, \citenamefont
  {Dhand}, \citenamefont {Vernon}, \citenamefont {Quesada},\ and\ \citenamefont
  {Lavoie}}]{madsen_quantum_2022}%
  \BibitemOpen
  \bibfield  {author} {\bibinfo {author} {\bibfnamefont {L.~S.}\ \bibnamefont
  {Madsen}}, \bibinfo {author} {\bibfnamefont {F.}~\bibnamefont {Laudenbach}},
  \bibinfo {author} {\bibfnamefont {M.~F.}\ \bibnamefont {Askarani}}, \bibinfo
  {author} {\bibfnamefont {F.}~\bibnamefont {Rortais}}, \bibinfo {author}
  {\bibfnamefont {T.}~\bibnamefont {Vincent}}, \bibinfo {author} {\bibfnamefont
  {J.~F.~F.}\ \bibnamefont {Bulmer}}, \bibinfo {author} {\bibfnamefont {F.~M.}\
  \bibnamefont {Miatto}}, \bibinfo {author} {\bibfnamefont {L.}~\bibnamefont
  {Neuhaus}}, \bibinfo {author} {\bibfnamefont {L.~G.}\ \bibnamefont {Helt}},
  \bibinfo {author} {\bibfnamefont {M.~J.}\ \bibnamefont {Collins}}, \bibinfo
  {author} {\bibfnamefont {A.~E.}\ \bibnamefont {Lita}}, \bibinfo {author}
  {\bibfnamefont {T.}~\bibnamefont {Gerrits}}, \bibinfo {author} {\bibfnamefont
  {S.~W.}\ \bibnamefont {Nam}}, \bibinfo {author} {\bibfnamefont {V.~D.}\
  \bibnamefont {Vaidya}}, \bibinfo {author} {\bibfnamefont {M.}~\bibnamefont
  {Menotti}}, \bibinfo {author} {\bibfnamefont {I.}~\bibnamefont {Dhand}},
  \bibinfo {author} {\bibfnamefont {Z.}~\bibnamefont {Vernon}}, \bibinfo
  {author} {\bibfnamefont {N.}~\bibnamefont {Quesada}}, \ and\ \bibinfo
  {author} {\bibfnamefont {J.}~\bibnamefont {Lavoie}},\ }\bibinfo {title}
  {\emph {Quantum Computational Advantage with a Programmable Photonic
  Processor}},\ \href {\doibase 10.1038/s41586-022-04725-x} {\bibfield
  {journal} {\bibinfo  {journal} {Nature}\ }\textbf {\bibinfo {volume} {606}},\
  \bibinfo {pages} {75} (\bibinfo {year} {2022})}\BibitemShut {NoStop}%
\bibitem [{\citenamefont {Morvan}\ \emph {et~al.}(2023)\citenamefont {Morvan},
  \citenamefont {Villalonga}, \citenamefont {Mi}, \citenamefont {Mandr{\`a}},
  \citenamefont {Bengtsson}, \citenamefont {Klimov}, \citenamefont {Chen},
  \citenamefont {Hong}, \citenamefont {Erickson}, \citenamefont {Drozdov},
  \citenamefont {Chau}, \citenamefont {Laun}, \citenamefont {Movassagh},
  \citenamefont {Asfaw}, \citenamefont {Brand{\~a}o}, \citenamefont {Peralta},
  \citenamefont {Abanin}, \citenamefont {Acharya}, \citenamefont {Allen},
  \citenamefont {Andersen},\ and\ \citenamefont
  {\emph{others}}}]{morvan_phase_2023}%
  \BibitemOpen
  \bibfield  {author} {\bibinfo {author} {\bibfnamefont {A.}~\bibnamefont
  {Morvan}}, \bibinfo {author} {\bibfnamefont {B.}~\bibnamefont {Villalonga}},
  \bibinfo {author} {\bibfnamefont {X.}~\bibnamefont {Mi}}, \bibinfo {author}
  {\bibfnamefont {S.}~\bibnamefont {Mandr{\`a}}}, \bibinfo {author}
  {\bibfnamefont {A.}~\bibnamefont {Bengtsson}}, \bibinfo {author}
  {\bibfnamefont {P.~V.}\ \bibnamefont {Klimov}}, \bibinfo {author}
  {\bibfnamefont {Z.}~\bibnamefont {Chen}}, \bibinfo {author} {\bibfnamefont
  {S.}~\bibnamefont {Hong}}, \bibinfo {author} {\bibfnamefont {C.}~\bibnamefont
  {Erickson}}, \bibinfo {author} {\bibfnamefont {I.~K.}\ \bibnamefont
  {Drozdov}}, \bibinfo {author} {\bibfnamefont {J.}~\bibnamefont {Chau}},
  \bibinfo {author} {\bibfnamefont {G.}~\bibnamefont {Laun}}, \bibinfo {author}
  {\bibfnamefont {R.}~\bibnamefont {Movassagh}}, \bibinfo {author}
  {\bibfnamefont {A.}~\bibnamefont {Asfaw}}, \bibinfo {author} {\bibfnamefont
  {L.~T. A.~N.}\ \bibnamefont {Brand{\~a}o}}, \bibinfo {author} {\bibfnamefont
  {R.}~\bibnamefont {Peralta}}, \bibinfo {author} {\bibfnamefont
  {D.}~\bibnamefont {Abanin}}, \bibinfo {author} {\bibfnamefont
  {R.}~\bibnamefont {Acharya}}, \bibinfo {author} {\bibfnamefont
  {R.}~\bibnamefont {Allen}}, \bibinfo {author} {\bibfnamefont {T.~I.}\
  \bibnamefont {Andersen}}, \ and\ \bibinfo {author} {\bibnamefont
  {\emph{others}}},\ }\bibinfo {title} {\emph {Phase Transition in {{Random
  Circuit Sampling}}}},\ \href@noop {} {\  (\bibinfo {year} {2023})},\ \Eprint
  {http://arxiv.org/abs/2304.11119} {arXiv:2304.11119}\BibitemShut {NoStop}%
\bibitem [{\citenamefont {Deng}\ \emph {et~al.}(2023)\citenamefont {Deng},
  \citenamefont {Gu}, \citenamefont {Liu}, \citenamefont {Gong}, \citenamefont
  {Su}, \citenamefont {Zhang}, \citenamefont {Tang}, \citenamefont {Jia},
  \citenamefont {Xu}, \citenamefont {Chen}, \citenamefont {Zhong},
  \citenamefont {Qin}, \citenamefont {Wang}, \citenamefont {Peng},
  \citenamefont {Yan}, \citenamefont {Hu}, \citenamefont {Huang}, \citenamefont
  {Li}, \citenamefont {Li}, \citenamefont {Chen},\ and\ \citenamefont
  {\emph{others}}}]{deng_gaussian_2023}%
  \BibitemOpen
  \bibfield  {author} {\bibinfo {author} {\bibfnamefont {Y.-H.}\ \bibnamefont
  {Deng}}, \bibinfo {author} {\bibfnamefont {Y.-C.}\ \bibnamefont {Gu}},
  \bibinfo {author} {\bibfnamefont {H.-L.}\ \bibnamefont {Liu}}, \bibinfo
  {author} {\bibfnamefont {S.-Q.}\ \bibnamefont {Gong}}, \bibinfo {author}
  {\bibfnamefont {H.}~\bibnamefont {Su}}, \bibinfo {author} {\bibfnamefont
  {Z.-J.}\ \bibnamefont {Zhang}}, \bibinfo {author} {\bibfnamefont {H.-Y.}\
  \bibnamefont {Tang}}, \bibinfo {author} {\bibfnamefont {M.-H.}\ \bibnamefont
  {Jia}}, \bibinfo {author} {\bibfnamefont {J.-M.}\ \bibnamefont {Xu}},
  \bibinfo {author} {\bibfnamefont {M.-C.}\ \bibnamefont {Chen}}, \bibinfo
  {author} {\bibfnamefont {H.-S.}\ \bibnamefont {Zhong}}, \bibinfo {author}
  {\bibfnamefont {J.}~\bibnamefont {Qin}}, \bibinfo {author} {\bibfnamefont
  {H.}~\bibnamefont {Wang}}, \bibinfo {author} {\bibfnamefont {L.-C.}\
  \bibnamefont {Peng}}, \bibinfo {author} {\bibfnamefont {J.}~\bibnamefont
  {Yan}}, \bibinfo {author} {\bibfnamefont {Y.}~\bibnamefont {Hu}}, \bibinfo
  {author} {\bibfnamefont {J.}~\bibnamefont {Huang}}, \bibinfo {author}
  {\bibfnamefont {H.}~\bibnamefont {Li}}, \bibinfo {author} {\bibfnamefont
  {Y.}~\bibnamefont {Li}}, \bibinfo {author} {\bibfnamefont {Y.}~\bibnamefont
  {Chen}}, \ and\ \bibinfo {author} {\bibnamefont {\emph{others}}},\ }\bibinfo
  {title} {\emph {Gaussian {{Boson Sampling}} with {{Pseudo-Photon-Number
  Resolving Detectors}} and {{Quantum Computational Advantage}}}},\ \href@noop
  {} {\  (\bibinfo {year} {2023})},\ \Eprint {http://arxiv.org/abs/2304.12240}
  {arXiv:2304.12240}\BibitemShut {NoStop}%
\bibitem [{\citenamefont {Gao}\ \emph {et~al.}(2024)\citenamefont {Gao},
  \citenamefont {Kalinowski}, \citenamefont {Chou}, \citenamefont {Lukin},
  \citenamefont {Barak},\ and\ \citenamefont {Choi}}]{gao_limitations_2021}%
  \BibitemOpen
  \bibfield  {author} {\bibinfo {author} {\bibfnamefont {X.}~\bibnamefont
  {Gao}}, \bibinfo {author} {\bibfnamefont {M.}~\bibnamefont {Kalinowski}},
  \bibinfo {author} {\bibfnamefont {C.-N.}\ \bibnamefont {Chou}}, \bibinfo
  {author} {\bibfnamefont {M.~D.}\ \bibnamefont {Lukin}}, \bibinfo {author}
  {\bibfnamefont {B.}~\bibnamefont {Barak}}, \ and\ \bibinfo {author}
  {\bibfnamefont {S.}~\bibnamefont {Choi}},\ }\bibinfo {title} {\emph
  {Limitations of {{Linear Cross-Entropy}} as a {{Measure}} for {{Quantum
  Advantage}}}},\ \href {\doibase 10.1103/PRXQuantum.5.010334} {\bibfield
  {journal} {\bibinfo  {journal} {PRX Quantum}\ }\textbf {\bibinfo {volume}
  {5}},\ \bibinfo {pages} {010334} (\bibinfo {year} {2024})}\BibitemShut
  {NoStop}%
\bibitem [{\citenamefont {Ware}\ \emph {et~al.}(2023)\citenamefont {Ware},
  \citenamefont {Deshpande}, \citenamefont {Hangleiter}, \citenamefont
  {Niroula}, \citenamefont {Fefferman}, \citenamefont {Gorshkov},\ and\
  \citenamefont {Gullans}}]{ware_sharp_2023}%
  \BibitemOpen
  \bibfield  {author} {\bibinfo {author} {\bibfnamefont {B.}~\bibnamefont
  {Ware}}, \bibinfo {author} {\bibfnamefont {A.}~\bibnamefont {Deshpande}},
  \bibinfo {author} {\bibfnamefont {D.}~\bibnamefont {Hangleiter}}, \bibinfo
  {author} {\bibfnamefont {P.}~\bibnamefont {Niroula}}, \bibinfo {author}
  {\bibfnamefont {B.}~\bibnamefont {Fefferman}}, \bibinfo {author}
  {\bibfnamefont {A.~V.}\ \bibnamefont {Gorshkov}}, \ and\ \bibinfo {author}
  {\bibfnamefont {M.~J.}\ \bibnamefont {Gullans}},\ }\bibinfo {title} {\emph {A
  Sharp Phase Transition in Linear Cross-Entropy Benchmarking}},\ \href@noop {}
  {\  (\bibinfo {year} {2023})},\ \Eprint {http://arxiv.org/abs/2305.04954}
  {arXiv:2305.04954}\BibitemShut {NoStop}%
\bibitem [{\citenamefont {Aaronson}(2007)}]{aaronson_learnability_2007}%
  \BibitemOpen
  \bibfield  {author} {\bibinfo {author} {\bibfnamefont {S.}~\bibnamefont
  {Aaronson}},\ }\bibinfo {title} {\emph {The {{Learnability}} of {{Quantum
  States}}}},\ \href {\doibase 10.1098/rspa.2007.0113} {\bibfield  {journal}
  {\bibinfo  {journal} {Proceedings of the Royal Society A: Mathematical,
  Physical and Engineering Sciences}\ }\textbf {\bibinfo {volume} {463}},\
  \bibinfo {pages} {3089} (\bibinfo {year} {2007})}\BibitemShut {NoStop}%
\bibitem [{\citenamefont {Huang}\ \emph {et~al.}(2020)\citenamefont {Huang},
  \citenamefont {Kueng},\ and\ \citenamefont
  {Preskill}}]{huang_predicting_2020}%
  \BibitemOpen
  \bibfield  {author} {\bibinfo {author} {\bibfnamefont {H.-Y.}\ \bibnamefont
  {Huang}}, \bibinfo {author} {\bibfnamefont {R.}~\bibnamefont {Kueng}}, \ and\
  \bibinfo {author} {\bibfnamefont {J.}~\bibnamefont {Preskill}},\ }\bibinfo
  {title} {\emph {Predicting Many Properties of a Quantum System from Very Few
  Measurements}},\ \href {\doibase 10.1038/s41567-020-0932-7} {\bibfield
  {journal} {\bibinfo  {journal} {Nature Physics}\ }\textbf {\bibinfo {volume}
  {16}},\ \bibinfo {pages} {1050} (\bibinfo {year} {2020})}\BibitemShut
  {NoStop}%
\bibitem [{\citenamefont {Aaronson}\ and\ \citenamefont
  {Arkhipov}(2013)}]{aaronson_computational_2013}%
  \BibitemOpen
  \bibfield  {author} {\bibinfo {author} {\bibfnamefont {S.}~\bibnamefont
  {Aaronson}}\ and\ \bibinfo {author} {\bibfnamefont {A.}~\bibnamefont
  {Arkhipov}},\ }\bibinfo {title} {\emph {The {{Computational Complexity}} of
  {{Linear Optics}}}},\ \href {\doibase 10.4086/toc.2013.v009a004} {\bibfield
  {journal} {\bibinfo  {journal} {Th. Comp.}\ }\textbf {\bibinfo {volume}
  {9}},\ \bibinfo {pages} {143} (\bibinfo {year} {2013})}\BibitemShut {NoStop}%
\bibitem [{\citenamefont {Bremner}\ \emph {et~al.}(2016)\citenamefont
  {Bremner}, \citenamefont {Montanaro},\ and\ \citenamefont
  {Shepherd}}]{bremner_average-case_2016}%
  \BibitemOpen
  \bibfield  {author} {\bibinfo {author} {\bibfnamefont {M.~J.}\ \bibnamefont
  {Bremner}}, \bibinfo {author} {\bibfnamefont {A.}~\bibnamefont {Montanaro}},
  \ and\ \bibinfo {author} {\bibfnamefont {D.~J.}\ \bibnamefont {Shepherd}},\
  }\bibinfo {title} {\emph {Average-{{Case Complexity Versus Approximate
  Simulation}} of {{Commuting Quantum Computations}}}},\ \href {\doibase
  10.1103/PhysRevLett.117.080501} {\bibfield  {journal} {\bibinfo  {journal}
  {Physical Review Letters}\ }\textbf {\bibinfo {volume} {117}},\ \bibinfo
  {pages} {080501} (\bibinfo {year} {2016})}\BibitemShut {NoStop}%
\bibitem [{\citenamefont {Beverland}\ \emph {et~al.}(2020)\citenamefont
  {Beverland}, \citenamefont {Campbell}, \citenamefont {Howard},\ and\
  \citenamefont {Kliuchnikov}}]{beverland_lower_2020}%
  \BibitemOpen
  \bibfield  {author} {\bibinfo {author} {\bibfnamefont {M.}~\bibnamefont
  {Beverland}}, \bibinfo {author} {\bibfnamefont {E.}~\bibnamefont {Campbell}},
  \bibinfo {author} {\bibfnamefont {M.}~\bibnamefont {Howard}}, \ and\ \bibinfo
  {author} {\bibfnamefont {V.}~\bibnamefont {Kliuchnikov}},\ }\bibinfo {title}
  {\emph {Lower Bounds on the Non-{{Clifford}} Resources for Quantum
  Computations}},\ \href {\doibase 10.1088/2058-9565/ab8963} {\bibfield
  {journal} {\bibinfo  {journal} {Quantum Sci. Technol.}\ }\textbf {\bibinfo
  {volume} {5}},\ \bibinfo {pages} {035009} (\bibinfo {year}
  {2020})}\BibitemShut {NoStop}%
\bibitem [{\citenamefont {Montanaro}(2017)}]{montanaro_learning_2017}%
  \BibitemOpen
  \bibfield  {author} {\bibinfo {author} {\bibfnamefont {A.}~\bibnamefont
  {Montanaro}},\ }\bibinfo {title} {\emph {Learning Stabilizer States by
  {{Bell}} Sampling}},\ \href@noop {} {\  (\bibinfo {year} {2017})},\ \Eprint
  {http://arxiv.org/abs/1707.04012} {arXiv:1707.04012}\BibitemShut {NoStop}%
\bibitem [{\citenamefont {Gross}\ \emph {et~al.}(2021)\citenamefont {Gross},
  \citenamefont {Nezami},\ and\ \citenamefont {Walter}}]{gross_schurweyl_2021}%
  \BibitemOpen
  \bibfield  {author} {\bibinfo {author} {\bibfnamefont {D.}~\bibnamefont
  {Gross}}, \bibinfo {author} {\bibfnamefont {S.}~\bibnamefont {Nezami}}, \
  and\ \bibinfo {author} {\bibfnamefont {M.}~\bibnamefont {Walter}},\ }\bibinfo
  {title} {\emph {Schur\textendash{{Weyl Duality}} for the {{Clifford Group}}
  with {{Applications}}: {{Property Testing}}, a {{Robust Hudson Theorem}}, and
  de {{Finetti Representations}}}},\ \href {\doibase
  10.1007/s00220-021-04118-7} {\bibfield  {journal} {\bibinfo  {journal}
  {Commun. Math. Phys.}\ }\textbf {\bibinfo {volume} {385}},\ \bibinfo {pages}
  {1325} (\bibinfo {year} {2021})}\BibitemShut {NoStop}%
\bibitem [{\citenamefont {Haug}\ and\ \citenamefont
  {Kim}(2023)}]{haug_scalable_2023}%
  \BibitemOpen
  \bibfield  {author} {\bibinfo {author} {\bibfnamefont {T.}~\bibnamefont
  {Haug}}\ and\ \bibinfo {author} {\bibfnamefont {M.}~\bibnamefont {Kim}},\
  }\bibinfo {title} {\emph {Scalable {{Measures}} of {{Magic Resource}} for
  {{Quantum Computers}}}},\ \href {\doibase 10.1103/PRXQuantum.4.010301}
  {\bibfield  {journal} {\bibinfo  {journal} {PRX Quantum}\ }\textbf {\bibinfo
  {volume} {4}},\ \bibinfo {pages} {010301} (\bibinfo {year}
  {2023})}\BibitemShut {NoStop}%
\bibitem [{\citenamefont {Haug}\ \emph {et~al.}(2023)\citenamefont {Haug},
  \citenamefont {Lee},\ and\ \citenamefont {Kim}}]{haug_efficient_2023}%
  \BibitemOpen
  \bibfield  {author} {\bibinfo {author} {\bibfnamefont {T.}~\bibnamefont
  {Haug}}, \bibinfo {author} {\bibfnamefont {S.}~\bibnamefont {Lee}}, \ and\
  \bibinfo {author} {\bibfnamefont {M.~S.}\ \bibnamefont {Kim}},\ }\bibinfo
  {title} {\emph {Efficient Stabilizer Entropies for Quantum Computers}},\
  \href@noop {} {\  (\bibinfo {year} {2023})},\ \Eprint
  {http://arxiv.org/abs/2305.19152} {arXiv:2305.19152}\BibitemShut {NoStop}%
\bibitem [{\citenamefont {Huang}\ \emph {et~al.}(2021)\citenamefont {Huang},
  \citenamefont {Kueng},\ and\ \citenamefont
  {Preskill}}]{huang_information-theoretic_2021}%
  \BibitemOpen
  \bibfield  {author} {\bibinfo {author} {\bibfnamefont {H.-Y.}\ \bibnamefont
  {Huang}}, \bibinfo {author} {\bibfnamefont {R.}~\bibnamefont {Kueng}}, \ and\
  \bibinfo {author} {\bibfnamefont {J.}~\bibnamefont {Preskill}},\ }\bibinfo
  {title} {\emph {Information-{{Theoretic Bounds}} on {{Quantum Advantage}} in
  {{Machine Learning}}}},\ \href {\doibase 10.1103/PhysRevLett.126.190505}
  {\bibfield  {journal} {\bibinfo  {journal} {Phys. Rev. Lett.}\ }\textbf
  {\bibinfo {volume} {126}},\ \bibinfo {pages} {190505} (\bibinfo {year}
  {2021})}\BibitemShut {NoStop}%
\bibitem [{sup()}]{suppmat}%
  \BibitemOpen
  \href@noop {} {}\bibinfo {note} {The Supplemental Material includes the
  additional references
  \cite{brandao_local_2016,zhou_emergent_2019,hunter-jones_unitary_2019,barak_spoofing_2021,dalzell_random_2022,dalzell_random_2021,ringbauer_verifiable_2023,flammia_direct_2011,fitzsimons_unconditionally_2017,reichardt_classical_2013,mahadev_classical_2018-1,garnerone_typicality_2010,collins_matrix_2013,fukuda_typical_2019,iosue_page_2023,mnih_empirical_2008,erdos_probabilistic_1965,grewal_efficient_2023,Hastings10}.}\BibitemShut
  {Stop}%
\bibitem [{\citenamefont {Bouland}\ \emph {et~al.}(2019)\citenamefont
  {Bouland}, \citenamefont {Fefferman}, \citenamefont {Nirkhe},\ and\
  \citenamefont {Vazirani}}]{bouland_complexity_2019}%
  \BibitemOpen
  \bibfield  {author} {\bibinfo {author} {\bibfnamefont {A.}~\bibnamefont
  {Bouland}}, \bibinfo {author} {\bibfnamefont {B.}~\bibnamefont {Fefferman}},
  \bibinfo {author} {\bibfnamefont {C.}~\bibnamefont {Nirkhe}}, \ and\ \bibinfo
  {author} {\bibfnamefont {U.}~\bibnamefont {Vazirani}},\ }\bibinfo {title}
  {\emph {On the Complexity and Verification of Quantum Random Circuit
  Sampling}},\ \href {\doibase 10.1038/s41567-018-0318-2} {\bibfield  {journal}
  {\bibinfo  {journal} {Nature Phys}\ }\textbf {\bibinfo {volume} {15}},\
  \bibinfo {pages} {159} (\bibinfo {year} {2019})}\BibitemShut {NoStop}%
\bibitem [{\citenamefont {Movassagh}(2020)}]{movassagh_quantum_2020}%
  \BibitemOpen
  \bibfield  {author} {\bibinfo {author} {\bibfnamefont {R.}~\bibnamefont
  {Movassagh}},\ }\bibinfo {title} {\emph {Quantum Supremacy and Random
  Circuits}},\ \href@noop {} {\  (\bibinfo {year} {2020})},\ \Eprint
  {http://arxiv.org/abs/1909.06210} {arXiv:1909.06210}\BibitemShut {NoStop}%
\bibitem [{\citenamefont {Kondo}\ \emph {et~al.}(2022)\citenamefont {Kondo},
  \citenamefont {Mori},\ and\ \citenamefont {Movassagh}}]{kondo_quantum_2022}%
  \BibitemOpen
  \bibfield  {author} {\bibinfo {author} {\bibfnamefont {Y.}~\bibnamefont
  {Kondo}}, \bibinfo {author} {\bibfnamefont {R.}~\bibnamefont {Mori}}, \ and\
  \bibinfo {author} {\bibfnamefont {R.}~\bibnamefont {Movassagh}},\ }\bibfield
  {title} {\emph {\bibinfo {title} {\emph {Quantum Supremacy and Hardness of
  Estimating Output Probabilities of Quantum Circuits}},\ }}in\ \href {\doibase
  10.1109/FOCS52979.2021.00126} {\emph {\bibinfo {booktitle} {2021 {{IEEE}}
  62nd {{Annual Symposium}} on {{Foundations}} of {{Computer Science}}
  ({{FOCS}})}}}\ (\bibinfo {year} {2022})\ pp.\ \bibinfo {pages}
  {1296--1307}\BibitemShut {NoStop}%
\bibitem [{\citenamefont {Krovi}(2022)}]{krovi_average-case_2022}%
  \BibitemOpen
  \bibfield  {author} {\bibinfo {author} {\bibfnamefont {H.}~\bibnamefont
  {Krovi}},\ }\bibinfo {title} {\emph {Average-Case Hardness of Estimating
  Probabilities of Random Quantum Circuits with a Linear Scaling in the Error
  Exponent}},\ \href@noop {} {\  (\bibinfo {year} {2022})},\ \Eprint
  {http://arxiv.org/abs/2206.05642} {arXiv:2206.05642}\BibitemShut {NoStop}%
\bibitem [{\citenamefont {Gottesman}(2009)}]{gottesman_introduction_2009}%
  \BibitemOpen
  \bibfield  {author} {\bibinfo {author} {\bibfnamefont {D.}~\bibnamefont
  {Gottesman}},\ }\bibinfo {title} {\emph {An {{Introduction}} to {{Quantum
  Error Correction}} and {{Fault-Tolerant Quantum Computation}}}},\ \href@noop
  {} {\  (\bibinfo {year} {2009})},\ \Eprint {http://arxiv.org/abs/0904.2557}
  {arXiv:0904.2557}\BibitemShut {NoStop}%
\bibitem [{\citenamefont {Sommers}\ \emph {et~al.}(2023)\citenamefont
  {Sommers}, \citenamefont {Huse},\ and\ \citenamefont
  {Gullans}}]{sommers_crystalline_2023}%
  \BibitemOpen
  \bibfield  {author} {\bibinfo {author} {\bibfnamefont {G.~M.}\ \bibnamefont
  {Sommers}}, \bibinfo {author} {\bibfnamefont {D.~A.}\ \bibnamefont {Huse}}, \
  and\ \bibinfo {author} {\bibfnamefont {M.~J.}\ \bibnamefont {Gullans}},\
  }\bibinfo {title} {\emph {Crystalline {{Quantum Circuits}}}},\ \href
  {\doibase 10.1103/PRXQuantum.4.030313} {\bibfield  {journal} {\bibinfo
  {journal} {PRX Quantum}\ }\textbf {\bibinfo {volume} {4}},\ \bibinfo {pages}
  {030313} (\bibinfo {year} {2023})}\BibitemShut {NoStop}%
\bibitem [{\citenamefont {Beale}\ \emph {et~al.}(2018)\citenamefont {Beale},
  \citenamefont {Wallman}, \citenamefont {Guti{\'e}rrez}, \citenamefont
  {Brown},\ and\ \citenamefont {Laflamme}}]{beale_quantum_2018}%
  \BibitemOpen
  \bibfield  {author} {\bibinfo {author} {\bibfnamefont {S.~J.}\ \bibnamefont
  {Beale}}, \bibinfo {author} {\bibfnamefont {J.~J.}\ \bibnamefont {Wallman}},
  \bibinfo {author} {\bibfnamefont {M.}~\bibnamefont {Guti{\'e}rrez}}, \bibinfo
  {author} {\bibfnamefont {K.~R.}\ \bibnamefont {Brown}}, \ and\ \bibinfo
  {author} {\bibfnamefont {R.}~\bibnamefont {Laflamme}},\ }\bibinfo {title}
  {\emph {Quantum {{Error Correction Decoheres Noise}}}},\ \href {\doibase
  10.1103/PhysRevLett.121.190501} {\bibfield  {journal} {\bibinfo  {journal}
  {Phys. Rev. Lett.}\ }\textbf {\bibinfo {volume} {121}},\ \bibinfo {pages}
  {190501} (\bibinfo {year} {2018})}\BibitemShut {NoStop}%
\bibitem [{\citenamefont {Huang}\ \emph {et~al.}(2019)\citenamefont {Huang},
  \citenamefont {Doherty},\ and\ \citenamefont
  {Flammia}}]{huang_performance_2019}%
  \BibitemOpen
  \bibfield  {author} {\bibinfo {author} {\bibfnamefont {E.}~\bibnamefont
  {Huang}}, \bibinfo {author} {\bibfnamefont {A.~C.}\ \bibnamefont {Doherty}},
  \ and\ \bibinfo {author} {\bibfnamefont {S.}~\bibnamefont {Flammia}},\
  }\bibinfo {title} {\emph {Performance of Quantum Error Correction with
  Coherent Errors}},\ \href {\doibase 10.1103/PhysRevA.99.022313} {\bibfield
  {journal} {\bibinfo  {journal} {Phys. Rev. A}\ }\textbf {\bibinfo {volume}
  {99}},\ \bibinfo {pages} {022313} (\bibinfo {year} {2019})}\BibitemShut
  {NoStop}%
\bibitem [{\citenamefont {Wallman}\ and\ \citenamefont
  {Emerson}(2016)}]{wallman_noise_2016}%
  \BibitemOpen
  \bibfield  {author} {\bibinfo {author} {\bibfnamefont {J.~J.}\ \bibnamefont
  {Wallman}}\ and\ \bibinfo {author} {\bibfnamefont {J.}~\bibnamefont
  {Emerson}},\ }\bibinfo {title} {\emph {Noise Tailoring for Scalable Quantum
  Computation via Randomized Compiling}},\ \href {\doibase
  10.1103/PhysRevA.94.052325} {\bibfield  {journal} {\bibinfo  {journal} {Phys.
  Rev. A}\ }\textbf {\bibinfo {volume} {94}},\ \bibinfo {pages} {052325}
  (\bibinfo {year} {2016})}\BibitemShut {NoStop}%
\bibitem [{\citenamefont {Winick}\ \emph {et~al.}(2022)\citenamefont {Winick},
  \citenamefont {Wallman}, \citenamefont {Dahlen}, \citenamefont {Hincks},
  \citenamefont {Ospadov},\ and\ \citenamefont
  {Emerson}}]{winick_concepts_2022}%
  \BibitemOpen
  \bibfield  {author} {\bibinfo {author} {\bibfnamefont {A.}~\bibnamefont
  {Winick}}, \bibinfo {author} {\bibfnamefont {J.~J.}\ \bibnamefont {Wallman}},
  \bibinfo {author} {\bibfnamefont {D.}~\bibnamefont {Dahlen}}, \bibinfo
  {author} {\bibfnamefont {I.}~\bibnamefont {Hincks}}, \bibinfo {author}
  {\bibfnamefont {E.}~\bibnamefont {Ospadov}}, \ and\ \bibinfo {author}
  {\bibfnamefont {J.}~\bibnamefont {Emerson}},\ }\bibinfo {title} {\emph
  {Concepts and Conditions for Error Suppression through Randomized
  Compiling}},\ \href@noop {} {\  (\bibinfo {year} {2022})},\ \Eprint
  {http://arxiv.org/abs/2212.07500} {arXiv:2212.07500}\BibitemShut {NoStop}%
\bibitem [{\citenamefont {Choi}\ \emph {et~al.}(2023)\citenamefont {Choi},
  \citenamefont {Shaw}, \citenamefont {Madjarov}, \citenamefont {Xie},
  \citenamefont {Finkelstein}, \citenamefont {Covey}, \citenamefont {Cotler},
  \citenamefont {Mark}, \citenamefont {Huang}, \citenamefont {Kale},
  \citenamefont {Pichler}, \citenamefont {Brand{\~a}o}, \citenamefont {Choi},\
  and\ \citenamefont {Endres}}]{choi_preparing_2023}%
  \BibitemOpen
  \bibfield  {author} {\bibinfo {author} {\bibfnamefont {J.}~\bibnamefont
  {Choi}}, \bibinfo {author} {\bibfnamefont {A.~L.}\ \bibnamefont {Shaw}},
  \bibinfo {author} {\bibfnamefont {I.~S.}\ \bibnamefont {Madjarov}}, \bibinfo
  {author} {\bibfnamefont {X.}~\bibnamefont {Xie}}, \bibinfo {author}
  {\bibfnamefont {R.}~\bibnamefont {Finkelstein}}, \bibinfo {author}
  {\bibfnamefont {J.~P.}\ \bibnamefont {Covey}}, \bibinfo {author}
  {\bibfnamefont {J.~S.}\ \bibnamefont {Cotler}}, \bibinfo {author}
  {\bibfnamefont {D.~K.}\ \bibnamefont {Mark}}, \bibinfo {author}
  {\bibfnamefont {H.-Y.}\ \bibnamefont {Huang}}, \bibinfo {author}
  {\bibfnamefont {A.}~\bibnamefont {Kale}}, \bibinfo {author} {\bibfnamefont
  {H.}~\bibnamefont {Pichler}}, \bibinfo {author} {\bibfnamefont {F.~G. S.~L.}\
  \bibnamefont {Brand{\~a}o}}, \bibinfo {author} {\bibfnamefont
  {S.}~\bibnamefont {Choi}}, \ and\ \bibinfo {author} {\bibfnamefont
  {M.}~\bibnamefont {Endres}},\ }\bibinfo {title} {\emph {Preparing Random
  States and Benchmarking with Many-Body Quantum Chaos}},\ \href {\doibase
  10.1038/s41586-022-05442-1} {\bibfield  {journal} {\bibinfo  {journal}
  {Nature}\ }\textbf {\bibinfo {volume} {613}},\ \bibinfo {pages} {468}
  (\bibinfo {year} {2023})}\BibitemShut {NoStop}%
\bibitem [{\citenamefont {Page}(1993)}]{page_average_1993}%
  \BibitemOpen
  \bibfield  {author} {\bibinfo {author} {\bibfnamefont {D.~N.}\ \bibnamefont
  {Page}},\ }\bibinfo {title} {\emph {Average Entropy of a Subsystem}},\ \href
  {\doibase 10.1103/PhysRevLett.71.1291} {\bibfield  {journal} {\bibinfo
  {journal} {Phys. Rev. Lett.}\ }\textbf {\bibinfo {volume} {71}},\ \bibinfo
  {pages} {1291} (\bibinfo {year} {1993})}\BibitemShut {NoStop}%
\bibitem [{\citenamefont {Nahum}\ \emph {et~al.}(2017)\citenamefont {Nahum},
  \citenamefont {Ruhman}, \citenamefont {Vijay},\ and\ \citenamefont
  {Haah}}]{nahum_quantum_2017}%
  \BibitemOpen
  \bibfield  {author} {\bibinfo {author} {\bibfnamefont {A.}~\bibnamefont
  {Nahum}}, \bibinfo {author} {\bibfnamefont {J.}~\bibnamefont {Ruhman}},
  \bibinfo {author} {\bibfnamefont {S.}~\bibnamefont {Vijay}}, \ and\ \bibinfo
  {author} {\bibfnamefont {J.}~\bibnamefont {Haah}},\ }\bibinfo {title} {\emph
  {Quantum {{Entanglement Growth}} under {{Random Unitary Dynamics}}}},\ \href
  {\doibase 10.1103/PhysRevX.7.031016} {\bibfield  {journal} {\bibinfo
  {journal} {Phys. Rev. X}\ }\textbf {\bibinfo {volume} {7}},\ \bibinfo {pages}
  {031016} (\bibinfo {year} {2017})}\BibitemShut {NoStop}%
\bibitem [{\citenamefont {Mark}\ \emph {et~al.}(2023)\citenamefont {Mark},
  \citenamefont {Choi}, \citenamefont {Shaw}, \citenamefont {Endres},\ and\
  \citenamefont {Choi}}]{Mark23}%
  \BibitemOpen
  \bibfield  {author} {\bibinfo {author} {\bibfnamefont {D.~K.}\ \bibnamefont
  {Mark}}, \bibinfo {author} {\bibfnamefont {J.}~\bibnamefont {Choi}}, \bibinfo
  {author} {\bibfnamefont {A.~L.}\ \bibnamefont {Shaw}}, \bibinfo {author}
  {\bibfnamefont {M.}~\bibnamefont {Endres}}, \ and\ \bibinfo {author}
  {\bibfnamefont {S.}~\bibnamefont {Choi}},\ }\bibinfo {title} {\emph
  {Benchmarking Quantum Simulators Using Ergodic Quantum Dynamics}},\ \href
  {\doibase 10.1103/PhysRevLett.131.110601} {\bibfield  {journal} {\bibinfo
  {journal} {Phys. Rev. Lett.}\ }\textbf {\bibinfo {volume} {131}},\ \bibinfo
  {pages} {110601} (\bibinfo {year} {2023})}\BibitemShut {NoStop}%
\bibitem [{\citenamefont {Shaw}\ \emph {et~al.}(2023)\citenamefont {Shaw},
  \citenamefont {Chen}, \citenamefont {Choi}, \citenamefont {Mark},
  \citenamefont {Scholl}, \citenamefont {Finkelstein}, \citenamefont {Elben},
  \citenamefont {Choi},\ and\ \citenamefont {Endres}}]{Shaw23}%
  \BibitemOpen
  \bibfield  {author} {\bibinfo {author} {\bibfnamefont {A.~L.}\ \bibnamefont
  {Shaw}}, \bibinfo {author} {\bibfnamefont {Z.}~\bibnamefont {Chen}}, \bibinfo
  {author} {\bibfnamefont {J.}~\bibnamefont {Choi}}, \bibinfo {author}
  {\bibfnamefont {D.~K.}\ \bibnamefont {Mark}}, \bibinfo {author}
  {\bibfnamefont {P.}~\bibnamefont {Scholl}}, \bibinfo {author} {\bibfnamefont
  {R.}~\bibnamefont {Finkelstein}}, \bibinfo {author} {\bibfnamefont
  {A.}~\bibnamefont {Elben}}, \bibinfo {author} {\bibfnamefont
  {S.}~\bibnamefont {Choi}}, \ and\ \bibinfo {author} {\bibfnamefont
  {M.}~\bibnamefont {Endres}},\ }\bibinfo {title} {\emph {Benchmarking highly
  entangled states on a 60-atom analog quantum simulator}},\ \href@noop {}
  {\bibfield  {journal} {\bibinfo  {journal} {arXiv:2308.07914}\ } (\bibinfo
  {year} {2023})}\BibitemShut {NoStop}%
\bibitem [{\citenamefont {Lai}\ and\ \citenamefont
  {Cheng}(2022)}]{lai_learning_2022}%
  \BibitemOpen
  \bibfield  {author} {\bibinfo {author} {\bibfnamefont {C.-Y.}\ \bibnamefont
  {Lai}}\ and\ \bibinfo {author} {\bibfnamefont {H.-C.}\ \bibnamefont
  {Cheng}},\ }\bibinfo {title} {\emph {Learning Quantum Circuits of Some
  \${{T}}\$ Gates}},\ \href {\doibase 10.1109/TIT.2022.3151760} {\bibfield
  {journal} {\bibinfo  {journal} {IEEE Trans. Inform. Theory}\ }\textbf
  {\bibinfo {volume} {68}},\ \bibinfo {pages} {3951} (\bibinfo {year}
  {2022})}\BibitemShut {NoStop}%
\bibitem [{\citenamefont {Arunachalam}\ \emph {et~al.}(2022)\citenamefont
  {Arunachalam}, \citenamefont {Bravyi}, \citenamefont {Nirkhe},\ and\
  \citenamefont {O'Gorman}}]{arunachalam_parameterized_2022}%
  \BibitemOpen
  \bibfield  {author} {\bibinfo {author} {\bibfnamefont {S.}~\bibnamefont
  {Arunachalam}}, \bibinfo {author} {\bibfnamefont {S.}~\bibnamefont {Bravyi}},
  \bibinfo {author} {\bibfnamefont {C.}~\bibnamefont {Nirkhe}}, \ and\ \bibinfo
  {author} {\bibfnamefont {B.}~\bibnamefont {O'Gorman}},\ }\bibinfo {title}
  {\emph {The {{Parameterized Complexity}} of {{Quantum Verification}}}},\
  \href@noop {} {\  (\bibinfo {year} {2022})},\ \Eprint
  {http://arxiv.org/abs/2202.08119} {arXiv:2202.08119}\BibitemShut {NoStop}%
\bibitem [{\citenamefont {Leone}\ \emph
  {et~al.}(2023{\natexlab{a}})\citenamefont {Leone}, \citenamefont {Oliviero},
  \citenamefont {Lloyd},\ and\ \citenamefont {Hamma}}]{leone_learning_2023-1}%
  \BibitemOpen
  \bibfield  {author} {\bibinfo {author} {\bibfnamefont {L.}~\bibnamefont
  {Leone}}, \bibinfo {author} {\bibfnamefont {S.~F.~E.}\ \bibnamefont
  {Oliviero}}, \bibinfo {author} {\bibfnamefont {S.}~\bibnamefont {Lloyd}}, \
  and\ \bibinfo {author} {\bibfnamefont {A.}~\bibnamefont {Hamma}},\ }\bibinfo
  {title} {\emph {Learning Efficient Decoders for Quasi-Chaotic Quantum
  Scramblers}},\ \href@noop {} {\  (\bibinfo {year} {2023}{\natexlab{a}})},\
  \Eprint {http://arxiv.org/abs/2212.11338} {arXiv:2212.11338}\BibitemShut
  {NoStop}%
\bibitem [{\citenamefont {Guta}\ \emph {et~al.}(2018)\citenamefont {Guta},
  \citenamefont {Kahn}, \citenamefont {Kueng},\ and\ \citenamefont
  {Tropp}}]{guta_fast_2018}%
  \BibitemOpen
  \bibfield  {author} {\bibinfo {author} {\bibfnamefont {M.}~\bibnamefont
  {Guta}}, \bibinfo {author} {\bibfnamefont {J.}~\bibnamefont {Kahn}}, \bibinfo
  {author} {\bibfnamefont {R.}~\bibnamefont {Kueng}}, \ and\ \bibinfo {author}
  {\bibfnamefont {J.~A.}\ \bibnamefont {Tropp}},\ }\bibinfo {title} {\emph
  {Fast State Tomography with Optimal Error Bounds}},\ \href@noop {} {\
  (\bibinfo {year} {2018})},\ \Eprint {http://arxiv.org/abs/1809.11162}
  {arXiv:1809.11162}\BibitemShut {NoStop}%
\bibitem [{\citenamefont {Bravyi}\ \emph {et~al.}(2019)\citenamefont {Bravyi},
  \citenamefont {Browne}, \citenamefont {Calpin}, \citenamefont {Campbell},
  \citenamefont {Gosset},\ and\ \citenamefont
  {Howard}}]{bravyi_simulation_2019-1}%
  \BibitemOpen
  \bibfield  {author} {\bibinfo {author} {\bibfnamefont {S.}~\bibnamefont
  {Bravyi}}, \bibinfo {author} {\bibfnamefont {D.}~\bibnamefont {Browne}},
  \bibinfo {author} {\bibfnamefont {P.}~\bibnamefont {Calpin}}, \bibinfo
  {author} {\bibfnamefont {E.}~\bibnamefont {Campbell}}, \bibinfo {author}
  {\bibfnamefont {D.}~\bibnamefont {Gosset}}, \ and\ \bibinfo {author}
  {\bibfnamefont {M.}~\bibnamefont {Howard}},\ }\bibinfo {title} {\emph
  {Simulation of Quantum Circuits by Low-Rank Stabilizer Decompositions}},\
  \href {\doibase 10.22331/q-2019-09-02-181} {\bibfield  {journal} {\bibinfo
  {journal} {Quantum}\ }\textbf {\bibinfo {volume} {3}},\ \bibinfo {pages}
  {181} (\bibinfo {year} {2019})}\BibitemShut {NoStop}%
\bibitem [{\citenamefont {Pashayan}\ \emph {et~al.}(2022)\citenamefont
  {Pashayan}, \citenamefont {{Reardon-Smith}}, \citenamefont {Korzekwa},\ and\
  \citenamefont {Bartlett}}]{pashayan_fast_2022}%
  \BibitemOpen
  \bibfield  {author} {\bibinfo {author} {\bibfnamefont {H.}~\bibnamefont
  {Pashayan}}, \bibinfo {author} {\bibfnamefont {O.}~\bibnamefont
  {{Reardon-Smith}}}, \bibinfo {author} {\bibfnamefont {K.}~\bibnamefont
  {Korzekwa}}, \ and\ \bibinfo {author} {\bibfnamefont {S.~D.}\ \bibnamefont
  {Bartlett}},\ }\bibinfo {title} {\emph {Fast {{Estimation}} of {{Outcome
  Probabilities}} for {{Quantum Circuits}}}},\ \href {\doibase
  10.1103/PRXQuantum.3.020361} {\bibfield  {journal} {\bibinfo  {journal} {PRX
  Quantum}\ }\textbf {\bibinfo {volume} {3}},\ \bibinfo {pages} {020361}
  (\bibinfo {year} {2022})}\BibitemShut {NoStop}%
\bibitem [{\citenamefont {Campbell}\ and\ \citenamefont
  {Howard}(2017)}]{campbell_unified_2017}%
  \BibitemOpen
  \bibfield  {author} {\bibinfo {author} {\bibfnamefont {E.~T.}\ \bibnamefont
  {Campbell}}\ and\ \bibinfo {author} {\bibfnamefont {M.}~\bibnamefont
  {Howard}},\ }\bibinfo {title} {\emph {A Unified Framework for Magic State
  Distillation and Multi-Qubit Gate-Synthesis with Reduced Resource Cost}},\
  \href {\doibase 10.1103/PhysRevA.95.022316} {\bibfield  {journal} {\bibinfo
  {journal} {Phys. Rev. A}\ }\textbf {\bibinfo {volume} {95}},\ \bibinfo
  {pages} {022316} (\bibinfo {year} {2017})}\BibitemShut {NoStop}%
\bibitem [{\citenamefont {Barenco}\ \emph {et~al.}(1996)\citenamefont
  {Barenco}, \citenamefont {Berthiaume}, \citenamefont {Deutsch}, \citenamefont
  {Ekert}, \citenamefont {Jozsa},\ and\ \citenamefont
  {Macchiavello}}]{barenco_stabilisation_1996}%
  \BibitemOpen
  \bibfield  {author} {\bibinfo {author} {\bibfnamefont {A.}~\bibnamefont
  {Barenco}}, \bibinfo {author} {\bibfnamefont {A.}~\bibnamefont {Berthiaume}},
  \bibinfo {author} {\bibfnamefont {D.}~\bibnamefont {Deutsch}}, \bibinfo
  {author} {\bibfnamefont {A.}~\bibnamefont {Ekert}}, \bibinfo {author}
  {\bibfnamefont {R.}~\bibnamefont {Jozsa}}, \ and\ \bibinfo {author}
  {\bibfnamefont {C.}~\bibnamefont {Macchiavello}},\ }\bibinfo {title} {\emph
  {Stabilisation of {{Quantum Computations}} by {{Symmetrisation}}}},\
  \href@noop {} {\  (\bibinfo {year} {1996})},\ \Eprint
  {http://arxiv.org/abs/quant-ph/9604028} {arXiv:quant-ph/9604028}\BibitemShut
  {NoStop}%
\bibitem [{\citenamefont {Cotler}\ \emph {et~al.}(2019)\citenamefont {Cotler},
  \citenamefont {Choi}, \citenamefont {Lukin}, \citenamefont {Gharibyan},
  \citenamefont {Grover}, \citenamefont {Tai}, \citenamefont {Rispoli},
  \citenamefont {Schittko}, \citenamefont {Preiss}, \citenamefont {Kaufman},
  \citenamefont {Greiner}, \citenamefont {Pichler},\ and\ \citenamefont
  {Hayden}}]{cotler_quantum_2019}%
  \BibitemOpen
  \bibfield  {author} {\bibinfo {author} {\bibfnamefont {J.}~\bibnamefont
  {Cotler}}, \bibinfo {author} {\bibfnamefont {S.}~\bibnamefont {Choi}},
  \bibinfo {author} {\bibfnamefont {A.}~\bibnamefont {Lukin}}, \bibinfo
  {author} {\bibfnamefont {H.}~\bibnamefont {Gharibyan}}, \bibinfo {author}
  {\bibfnamefont {T.}~\bibnamefont {Grover}}, \bibinfo {author} {\bibfnamefont
  {M.~E.}\ \bibnamefont {Tai}}, \bibinfo {author} {\bibfnamefont
  {M.}~\bibnamefont {Rispoli}}, \bibinfo {author} {\bibfnamefont
  {R.}~\bibnamefont {Schittko}}, \bibinfo {author} {\bibfnamefont {P.~M.}\
  \bibnamefont {Preiss}}, \bibinfo {author} {\bibfnamefont {A.~M.}\
  \bibnamefont {Kaufman}}, \bibinfo {author} {\bibfnamefont {M.}~\bibnamefont
  {Greiner}}, \bibinfo {author} {\bibfnamefont {H.}~\bibnamefont {Pichler}}, \
  and\ \bibinfo {author} {\bibfnamefont {P.}~\bibnamefont {Hayden}},\ }\bibinfo
  {title} {\emph {Quantum {{Virtual Cooling}}}},\ \href {\doibase
  10.1103/PhysRevX.9.031013} {\bibfield  {journal} {\bibinfo  {journal} {Phys.
  Rev. X}\ }\textbf {\bibinfo {volume} {9}},\ \bibinfo {pages} {031013}
  (\bibinfo {year} {2019})}\BibitemShut {NoStop}%
\bibitem [{\citenamefont {Koczor}(2021)}]{koczor_exponential_2021}%
  \BibitemOpen
  \bibfield  {author} {\bibinfo {author} {\bibfnamefont {B.}~\bibnamefont
  {Koczor}},\ }\bibinfo {title} {\emph {Exponential {{Error Suppression}} for
  {{Near-Term Quantum Devices}}}},\ \href {\doibase 10.1103/PhysRevX.11.031057}
  {\bibfield  {journal} {\bibinfo  {journal} {Phys. Rev. X}\ }\textbf {\bibinfo
  {volume} {11}},\ \bibinfo {pages} {031057} (\bibinfo {year}
  {2021})}\BibitemShut {NoStop}%
\bibitem [{\citenamefont {Huggins}\ \emph {et~al.}(2021)\citenamefont
  {Huggins}, \citenamefont {McArdle}, \citenamefont {O'Brien}, \citenamefont
  {Lee}, \citenamefont {Rubin}, \citenamefont {Boixo}, \citenamefont {Whaley},
  \citenamefont {Babbush},\ and\ \citenamefont
  {McClean}}]{huggins_virtual_2021}%
  \BibitemOpen
  \bibfield  {author} {\bibinfo {author} {\bibfnamefont {W.~J.}\ \bibnamefont
  {Huggins}}, \bibinfo {author} {\bibfnamefont {S.}~\bibnamefont {McArdle}},
  \bibinfo {author} {\bibfnamefont {T.~E.}\ \bibnamefont {O'Brien}}, \bibinfo
  {author} {\bibfnamefont {J.}~\bibnamefont {Lee}}, \bibinfo {author}
  {\bibfnamefont {N.~C.}\ \bibnamefont {Rubin}}, \bibinfo {author}
  {\bibfnamefont {S.}~\bibnamefont {Boixo}}, \bibinfo {author} {\bibfnamefont
  {K.~B.}\ \bibnamefont {Whaley}}, \bibinfo {author} {\bibfnamefont
  {R.}~\bibnamefont {Babbush}}, \ and\ \bibinfo {author} {\bibfnamefont
  {J.~R.}\ \bibnamefont {McClean}},\ }\bibinfo {title} {\emph {Virtual
  {{Distillation}} for {{Quantum Error Mitigation}}}},\ \href {\doibase
  10.1103/PhysRevX.11.041036} {\bibfield  {journal} {\bibinfo  {journal} {Phys.
  Rev. X}\ }\textbf {\bibinfo {volume} {11}},\ \bibinfo {pages} {041036}
  (\bibinfo {year} {2021})}\BibitemShut {NoStop}%
\bibitem [{\citenamefont {Bermudez}\ \emph {et~al.}(2017)\citenamefont
  {Bermudez}, \citenamefont {Xu}, \citenamefont {Nigmatullin}, \citenamefont
  {O'Gorman}, \citenamefont {Negnevitsky}, \citenamefont {Schindler},
  \citenamefont {Monz}, \citenamefont {Poschinger}, \citenamefont {Hempel},
  \citenamefont {Home}, \citenamefont {{Schmidt-Kaler}}, \citenamefont
  {Biercuk}, \citenamefont {Blatt}, \citenamefont {Benjamin},\ and\
  \citenamefont {M{\"u}ller}}]{bermudez_assessing_2017}%
  \BibitemOpen
  \bibfield  {author} {\bibinfo {author} {\bibfnamefont {A.}~\bibnamefont
  {Bermudez}}, \bibinfo {author} {\bibfnamefont {X.}~\bibnamefont {Xu}},
  \bibinfo {author} {\bibfnamefont {R.}~\bibnamefont {Nigmatullin}}, \bibinfo
  {author} {\bibfnamefont {J.}~\bibnamefont {O'Gorman}}, \bibinfo {author}
  {\bibfnamefont {V.}~\bibnamefont {Negnevitsky}}, \bibinfo {author}
  {\bibfnamefont {P.}~\bibnamefont {Schindler}}, \bibinfo {author}
  {\bibfnamefont {T.}~\bibnamefont {Monz}}, \bibinfo {author} {\bibfnamefont
  {U.~G.}\ \bibnamefont {Poschinger}}, \bibinfo {author} {\bibfnamefont
  {C.}~\bibnamefont {Hempel}}, \bibinfo {author} {\bibfnamefont
  {J.}~\bibnamefont {Home}}, \bibinfo {author} {\bibfnamefont {F.}~\bibnamefont
  {{Schmidt-Kaler}}}, \bibinfo {author} {\bibfnamefont {M.}~\bibnamefont
  {Biercuk}}, \bibinfo {author} {\bibfnamefont {R.}~\bibnamefont {Blatt}},
  \bibinfo {author} {\bibfnamefont {S.}~\bibnamefont {Benjamin}}, \ and\
  \bibinfo {author} {\bibfnamefont {M.}~\bibnamefont {M{\"u}ller}},\ }\bibinfo
  {title} {\emph {Assessing the {{Progress}} of {{Trapped-Ion Processors
  Towards Fault-Tolerant Quantum Computation}}}},\ \href {\doibase
  10.1103/PhysRevX.7.041061} {\bibfield  {journal} {\bibinfo  {journal} {Phys.
  Rev. X}\ }\textbf {\bibinfo {volume} {7}},\ \bibinfo {pages} {041061}
  (\bibinfo {year} {2017})}\BibitemShut {NoStop}%
\bibitem [{\citenamefont {Bluvstein}\ \emph {et~al.}(2022)\citenamefont
  {Bluvstein}, \citenamefont {Levine}, \citenamefont {Semeghini}, \citenamefont
  {Wang}, \citenamefont {Ebadi}, \citenamefont {Kalinowski}, \citenamefont
  {Keesling}, \citenamefont {Maskara}, \citenamefont {Pichler}, \citenamefont
  {Greiner}, \citenamefont {Vuleti{\'c}},\ and\ \citenamefont
  {Lukin}}]{bluvstein_quantum_2022}%
  \BibitemOpen
  \bibfield  {author} {\bibinfo {author} {\bibfnamefont {D.}~\bibnamefont
  {Bluvstein}}, \bibinfo {author} {\bibfnamefont {H.}~\bibnamefont {Levine}},
  \bibinfo {author} {\bibfnamefont {G.}~\bibnamefont {Semeghini}}, \bibinfo
  {author} {\bibfnamefont {T.~T.}\ \bibnamefont {Wang}}, \bibinfo {author}
  {\bibfnamefont {S.}~\bibnamefont {Ebadi}}, \bibinfo {author} {\bibfnamefont
  {M.}~\bibnamefont {Kalinowski}}, \bibinfo {author} {\bibfnamefont
  {A.}~\bibnamefont {Keesling}}, \bibinfo {author} {\bibfnamefont
  {N.}~\bibnamefont {Maskara}}, \bibinfo {author} {\bibfnamefont
  {H.}~\bibnamefont {Pichler}}, \bibinfo {author} {\bibfnamefont
  {M.}~\bibnamefont {Greiner}}, \bibinfo {author} {\bibfnamefont
  {V.}~\bibnamefont {Vuleti{\'c}}}, \ and\ \bibinfo {author} {\bibfnamefont
  {M.~D.}\ \bibnamefont {Lukin}},\ }\bibinfo {title} {\emph {A Quantum
  Processor Based on Coherent Transport of Entangled Atom Arrays}},\ \href
  {\doibase 10.1038/s41586-022-04592-6} {\bibfield  {journal} {\bibinfo
  {journal} {Nature}\ }\textbf {\bibinfo {volume} {604}},\ \bibinfo {pages}
  {451} (\bibinfo {year} {2022})}\BibitemShut {NoStop}%
\bibitem [{\citenamefont {Cai}\ \emph {et~al.}(2022)\citenamefont {Cai},
  \citenamefont {Siegel},\ and\ \citenamefont {Benjamin}}]{Cai2022}%
  \BibitemOpen
  \bibfield  {author} {\bibinfo {author} {\bibfnamefont {Z.}~\bibnamefont
  {Cai}}, \bibinfo {author} {\bibfnamefont {A.}~\bibnamefont {Siegel}}, \ and\
  \bibinfo {author} {\bibfnamefont {S.}~\bibnamefont {Benjamin}},\ }\bibinfo
  {title} {\emph {{Looped Pipelines Enabling Effective 3D Qubit Lattices in a
  Strictly 2D Device}}},\ \href@noop {} {\  (\bibinfo {year} {2022})},\ \Eprint
  {http://arxiv.org/abs/2203.13123} {arXiv:2203.13123}\BibitemShut {NoStop}%
\bibitem [{\citenamefont {Bluvstein}\ \emph {et~al.}(2024)\citenamefont
  {Bluvstein}, \citenamefont {Evered}, \citenamefont {Geim}, \citenamefont
  {Li}, \citenamefont {Zhou}, \citenamefont {Manovitz}, \citenamefont {Ebadi},
  \citenamefont {Cain}, \citenamefont {Kalinowski}, \citenamefont {Hangleiter},
  \citenamefont {Ataides}, \citenamefont {Maskara}, \citenamefont {Cong},
  \citenamefont {Gao}, \citenamefont {Rodriguez}, \citenamefont {Karolyshyn},
  \citenamefont {Semeghini}, \citenamefont {Gullans}, \citenamefont {Greiner},
  \citenamefont {Vuleti{\'c}},\ and\ \citenamefont
  {Lukin}}]{bluvstein_logical_2023}%
  \BibitemOpen
  \bibfield  {author} {\bibinfo {author} {\bibfnamefont {D.}~\bibnamefont
  {Bluvstein}}, \bibinfo {author} {\bibfnamefont {S.~J.}\ \bibnamefont
  {Evered}}, \bibinfo {author} {\bibfnamefont {A.~A.}\ \bibnamefont {Geim}},
  \bibinfo {author} {\bibfnamefont {S.~H.}\ \bibnamefont {Li}}, \bibinfo
  {author} {\bibfnamefont {H.}~\bibnamefont {Zhou}}, \bibinfo {author}
  {\bibfnamefont {T.}~\bibnamefont {Manovitz}}, \bibinfo {author}
  {\bibfnamefont {S.}~\bibnamefont {Ebadi}}, \bibinfo {author} {\bibfnamefont
  {M.}~\bibnamefont {Cain}}, \bibinfo {author} {\bibfnamefont {M.}~\bibnamefont
  {Kalinowski}}, \bibinfo {author} {\bibfnamefont {D.}~\bibnamefont
  {Hangleiter}}, \bibinfo {author} {\bibfnamefont {J.~P.~B.}\ \bibnamefont
  {Ataides}}, \bibinfo {author} {\bibfnamefont {N.}~\bibnamefont {Maskara}},
  \bibinfo {author} {\bibfnamefont {I.}~\bibnamefont {Cong}}, \bibinfo {author}
  {\bibfnamefont {X.}~\bibnamefont {Gao}}, \bibinfo {author} {\bibfnamefont
  {P.~S.}\ \bibnamefont {Rodriguez}}, \bibinfo {author} {\bibfnamefont
  {T.}~\bibnamefont {Karolyshyn}}, \bibinfo {author} {\bibfnamefont
  {G.}~\bibnamefont {Semeghini}}, \bibinfo {author} {\bibfnamefont {M.~J.}\
  \bibnamefont {Gullans}}, \bibinfo {author} {\bibfnamefont {M.}~\bibnamefont
  {Greiner}}, \bibinfo {author} {\bibfnamefont {V.}~\bibnamefont
  {Vuleti{\'c}}}, \ and\ \bibinfo {author} {\bibfnamefont {M.~D.}\ \bibnamefont
  {Lukin}},\ }\bibinfo {title} {\emph {Logical Quantum Processor Based on
  Reconfigurable Atom Arrays}},\ \href {\doibase 10.1038/s41586-023-06927-3}
  {\bibfield  {journal} {\bibinfo  {journal} {Nature}\ }\textbf {\bibinfo
  {volume} {626}},\ \bibinfo {pages} {58} (\bibinfo {year} {2024})}\BibitemShut
  {NoStop}%
\bibitem [{\citenamefont {Gao}\ and\ \citenamefont
  {Duan}(2018)}]{gao_efficient_2018}%
  \BibitemOpen
  \bibfield  {author} {\bibinfo {author} {\bibfnamefont {X.}~\bibnamefont
  {Gao}}\ and\ \bibinfo {author} {\bibfnamefont {L.}~\bibnamefont {Duan}},\
  }\bibinfo {title} {\emph {Efficient Classical Simulation of Noisy Quantum
  Computation}},\ \href@noop {} {\  (\bibinfo {year} {2018})},\ \Eprint
  {http://arxiv.org/abs/1810.03176} {arXiv:1810.03176}\BibitemShut {NoStop}%
\bibitem [{\citenamefont {Aharonov}\ \emph {et~al.}(2022)\citenamefont
  {Aharonov}, \citenamefont {Gao}, \citenamefont {Landau}, \citenamefont
  {Liu},\ and\ \citenamefont {Vazirani}}]{aharonov_polynomial-time_2022}%
  \BibitemOpen
  \bibfield  {author} {\bibinfo {author} {\bibfnamefont {D.}~\bibnamefont
  {Aharonov}}, \bibinfo {author} {\bibfnamefont {X.}~\bibnamefont {Gao}},
  \bibinfo {author} {\bibfnamefont {Z.}~\bibnamefont {Landau}}, \bibinfo
  {author} {\bibfnamefont {Y.}~\bibnamefont {Liu}}, \ and\ \bibinfo {author}
  {\bibfnamefont {U.}~\bibnamefont {Vazirani}},\ }\bibinfo {title} {\emph {A
  Polynomial-Time Classical Algorithm for Noisy Random Circuit Sampling}},\
  \href@noop {} {\  (\bibinfo {year} {2022})},\ \Eprint
  {http://arxiv.org/abs/2211.03999} {arXiv:2211.03999}\BibitemShut {NoStop}%
\bibitem [{\citenamefont {Grewal}\ \emph
  {et~al.}(2023{\natexlab{a}})\citenamefont {Grewal}, \citenamefont {Iyer},
  \citenamefont {Kretschmer},\ and\ \citenamefont
  {Liang}}]{grewal_efficient_2023-2}%
  \BibitemOpen
  \bibfield  {author} {\bibinfo {author} {\bibfnamefont {S.}~\bibnamefont
  {Grewal}}, \bibinfo {author} {\bibfnamefont {V.}~\bibnamefont {Iyer}},
  \bibinfo {author} {\bibfnamefont {W.}~\bibnamefont {Kretschmer}}, \ and\
  \bibinfo {author} {\bibfnamefont {D.}~\bibnamefont {Liang}},\ }\bibinfo
  {title} {\emph {Efficient {{Learning}} of {{Quantum States Prepared With Few
  Non-Clifford Gates}}}},\ \href@noop {} {\  (\bibinfo {year}
  {2023}{\natexlab{a}})},\ \Eprint {http://arxiv.org/abs/2305.13409}
  {arXiv:2305.13409}\BibitemShut {NoStop}%
\bibitem [{\citenamefont {Leone}\ \emph
  {et~al.}(2023{\natexlab{b}})\citenamefont {Leone}, \citenamefont {Oliviero},\
  and\ \citenamefont {Hamma}}]{leone_learning_2023}%
  \BibitemOpen
  \bibfield  {author} {\bibinfo {author} {\bibfnamefont {L.}~\bibnamefont
  {Leone}}, \bibinfo {author} {\bibfnamefont {S.~F.~E.}\ \bibnamefont
  {Oliviero}}, \ and\ \bibinfo {author} {\bibfnamefont {A.}~\bibnamefont
  {Hamma}},\ }\bibinfo {title} {\emph {Learning T-Doped Stabilizer States}},\
  \href@noop {} {\  (\bibinfo {year} {2023}{\natexlab{b}})},\ \Eprint
  {http://arxiv.org/abs/2305.15398} {arXiv:2305.15398}\BibitemShut {NoStop}%
\bibitem [{\citenamefont {Brand{\~a}o}\ \emph {et~al.}(2016)\citenamefont
  {Brand{\~a}o}, \citenamefont {Harrow},\ and\ \citenamefont
  {Horodecki}}]{brandao_local_2016}%
  \BibitemOpen
  \bibfield  {author} {\bibinfo {author} {\bibfnamefont {F.~G. S.~L.}\
  \bibnamefont {Brand{\~a}o}}, \bibinfo {author} {\bibfnamefont {A.~W.}\
  \bibnamefont {Harrow}}, \ and\ \bibinfo {author} {\bibfnamefont
  {M.}~\bibnamefont {Horodecki}},\ }\bibinfo {title} {\emph {Local {{Random
  Quantum Circuits}} Are {{Approximate Polynomial-Designs}}}},\ \href {\doibase
  10.1007/s00220-016-2706-8} {\bibfield  {journal} {\bibinfo  {journal}
  {Commun. Math. Phys.}\ }\textbf {\bibinfo {volume} {346}},\ \bibinfo {pages}
  {397} (\bibinfo {year} {2016})}\BibitemShut {NoStop}%
\bibitem [{\citenamefont {Zhou}\ and\ \citenamefont
  {Nahum}(2019)}]{zhou_emergent_2019}%
  \BibitemOpen
  \bibfield  {author} {\bibinfo {author} {\bibfnamefont {T.}~\bibnamefont
  {Zhou}}\ and\ \bibinfo {author} {\bibfnamefont {A.}~\bibnamefont {Nahum}},\
  }\bibinfo {title} {\emph {Emergent Statistical Mechanics of Entanglement in
  Random Unitary Circuits}},\ \href {\doibase 10.1103/PhysRevB.99.174205}
  {\bibfield  {journal} {\bibinfo  {journal} {Phys. Rev. B}\ }\textbf {\bibinfo
  {volume} {99}},\ \bibinfo {pages} {174205} (\bibinfo {year}
  {2019})}\BibitemShut {NoStop}%
\bibitem [{\citenamefont {{Hunter-Jones}}(2019)}]{hunter-jones_unitary_2019}%
  \BibitemOpen
  \bibfield  {author} {\bibinfo {author} {\bibfnamefont {N.}~\bibnamefont
  {{Hunter-Jones}}},\ }\bibinfo {title} {\emph {Unitary Designs from
  Statistical Mechanics in Random Quantum Circuits}},\ \href@noop {} {\
  (\bibinfo {year} {2019})},\ \Eprint {http://arxiv.org/abs/1905.12053}
  {arXiv:1905.12053}\BibitemShut {NoStop}%
\bibitem [{\citenamefont {Barak}\ \emph {et~al.}(2021)\citenamefont {Barak},
  \citenamefont {Chou},\ and\ \citenamefont {Gao}}]{barak_spoofing_2021}%
  \BibitemOpen
  \bibfield  {author} {\bibinfo {author} {\bibfnamefont {B.}~\bibnamefont
  {Barak}}, \bibinfo {author} {\bibfnamefont {C.-N.}\ \bibnamefont {Chou}}, \
  and\ \bibinfo {author} {\bibfnamefont {X.}~\bibnamefont {Gao}},\ }\bibfield
  {title} {\emph {\bibinfo {title} {\emph {Spoofing {{Linear Cross-Entropy
  Benchmarking}} in {{Shallow Quantum Circuits}}}},\ }}in\ \href {\doibase
  10.4230/LIPIcs.ITCS.2021.30} {\emph {\bibinfo {booktitle} {12th
  {{Innovations}} in {{Theoretical Computer Science Conference}} ({{ITCS}}
  2021)}}},\ \bibinfo {series} {Leibniz {{International Proceedings}} in
  {{Informatics}} ({{LIPIcs}})}, Vol.\ \bibinfo {volume} {185},\ \bibinfo
  {editor} {edited by\ \bibinfo {editor} {\bibfnamefont {J.~R.}\ \bibnamefont
  {Lee}}}\ (\bibinfo  {publisher} {{Schloss Dagstuhl\textendash Leibniz-Zentrum
  f\"ur Informatik}},\ \bibinfo {address} {{Dagstuhl, Germany}},\ \bibinfo
  {year} {2021})\ pp.\ \bibinfo {pages} {30:1--30:20},\ \Eprint
  {http://arxiv.org/abs/2005.02421} {arXiv:2005.02421}\BibitemShut {NoStop}%
\bibitem [{\citenamefont {Dalzell}\ \emph {et~al.}(2022)\citenamefont
  {Dalzell}, \citenamefont {{Hunter-Jones}},\ and\ \citenamefont
  {Brand{\~a}o}}]{dalzell_random_2022}%
  \BibitemOpen
  \bibfield  {author} {\bibinfo {author} {\bibfnamefont {A.~M.}\ \bibnamefont
  {Dalzell}}, \bibinfo {author} {\bibfnamefont {N.}~\bibnamefont
  {{Hunter-Jones}}}, \ and\ \bibinfo {author} {\bibfnamefont {F.~G. S.~L.}\
  \bibnamefont {Brand{\~a}o}},\ }\bibinfo {title} {\emph {Random {{Quantum
  Circuits Anticoncentrate}} in {{Log Depth}}}},\ \href {\doibase
  10.1103/PRXQuantum.3.010333} {\bibfield  {journal} {\bibinfo  {journal} {PRX
  Quantum}\ }\textbf {\bibinfo {volume} {3}},\ \bibinfo {pages} {010333}
  (\bibinfo {year} {2022})}\BibitemShut {NoStop}%
\bibitem [{\citenamefont {Dalzell}\ \emph {et~al.}(2021)\citenamefont
  {Dalzell}, \citenamefont {{Hunter-Jones}},\ and\ \citenamefont
  {Brand{\~a}o}}]{dalzell_random_2021}%
  \BibitemOpen
  \bibfield  {author} {\bibinfo {author} {\bibfnamefont {A.~M.}\ \bibnamefont
  {Dalzell}}, \bibinfo {author} {\bibfnamefont {N.}~\bibnamefont
  {{Hunter-Jones}}}, \ and\ \bibinfo {author} {\bibfnamefont {F.~G. S.~L.}\
  \bibnamefont {Brand{\~a}o}},\ }\bibinfo {title} {\emph {Random Quantum
  Circuits Transform Local Noise into Global White Noise}},\ \href@noop {} {\
  (\bibinfo {year} {2021})},\ \Eprint {http://arxiv.org/abs/2111.14907}
  {arXiv:2111.14907}\BibitemShut {NoStop}%
\bibitem [{\citenamefont {Ringbauer}\ \emph {et~al.}(2023)\citenamefont
  {Ringbauer}, \citenamefont {Hinsche}, \citenamefont {Feldker}, \citenamefont
  {Faehrmann}, \citenamefont {{Bermejo-Vega}}, \citenamefont {Edmunds},
  \citenamefont {Postler}, \citenamefont {Stricker}, \citenamefont {Marciniak},
  \citenamefont {Meth}, \citenamefont {Pogorelov}, \citenamefont {Blatt},
  \citenamefont {Schindler}, \citenamefont {Eisert}, \citenamefont {Monz},\
  and\ \citenamefont {Hangleiter}}]{ringbauer_verifiable_2023}%
  \BibitemOpen
  \bibfield  {author} {\bibinfo {author} {\bibfnamefont {M.}~\bibnamefont
  {Ringbauer}}, \bibinfo {author} {\bibfnamefont {M.}~\bibnamefont {Hinsche}},
  \bibinfo {author} {\bibfnamefont {T.}~\bibnamefont {Feldker}}, \bibinfo
  {author} {\bibfnamefont {P.~K.}\ \bibnamefont {Faehrmann}}, \bibinfo {author}
  {\bibfnamefont {J.}~\bibnamefont {{Bermejo-Vega}}}, \bibinfo {author}
  {\bibfnamefont {C.}~\bibnamefont {Edmunds}}, \bibinfo {author} {\bibfnamefont
  {L.}~\bibnamefont {Postler}}, \bibinfo {author} {\bibfnamefont
  {R.}~\bibnamefont {Stricker}}, \bibinfo {author} {\bibfnamefont {C.~D.}\
  \bibnamefont {Marciniak}}, \bibinfo {author} {\bibfnamefont {M.}~\bibnamefont
  {Meth}}, \bibinfo {author} {\bibfnamefont {I.}~\bibnamefont {Pogorelov}},
  \bibinfo {author} {\bibfnamefont {R.}~\bibnamefont {Blatt}}, \bibinfo
  {author} {\bibfnamefont {P.}~\bibnamefont {Schindler}}, \bibinfo {author}
  {\bibfnamefont {J.}~\bibnamefont {Eisert}}, \bibinfo {author} {\bibfnamefont
  {T.}~\bibnamefont {Monz}}, \ and\ \bibinfo {author} {\bibfnamefont
  {D.}~\bibnamefont {Hangleiter}},\ }\bibinfo {title} {\emph {Verifiable
  Measurement-Based Quantum Random Sampling with Trapped Ions}},\ \href@noop {}
  {\  (\bibinfo {year} {2023})},\ \Eprint {http://arxiv.org/abs/2307.14424}
  {arXiv:2307.14424}\BibitemShut {NoStop}%
\bibitem [{\citenamefont {Flammia}\ and\ \citenamefont
  {Liu}(2011)}]{flammia_direct_2011}%
  \BibitemOpen
  \bibfield  {author} {\bibinfo {author} {\bibfnamefont {S.~T.}\ \bibnamefont
  {Flammia}}\ and\ \bibinfo {author} {\bibfnamefont {Y.-K.}\ \bibnamefont
  {Liu}},\ }\bibinfo {title} {\emph {Direct {{Fidelity Estimation}} from {{Few
  Pauli Measurements}}}},\ \href {\doibase 10.1103/PhysRevLett.106.230501}
  {\bibfield  {journal} {\bibinfo  {journal} {Phys. Rev. Lett.}\ }\textbf
  {\bibinfo {volume} {106}},\ \bibinfo {pages} {230501} (\bibinfo {year}
  {2011})}\BibitemShut {NoStop}%
\bibitem [{\citenamefont {Fitzsimons}\ and\ \citenamefont
  {Kashefi}(2017)}]{fitzsimons_unconditionally_2017}%
  \BibitemOpen
  \bibfield  {author} {\bibinfo {author} {\bibfnamefont {J.~F.}\ \bibnamefont
  {Fitzsimons}}\ and\ \bibinfo {author} {\bibfnamefont {E.}~\bibnamefont
  {Kashefi}},\ }\bibinfo {title} {\emph {Unconditionally Verifiable Blind
  Quantum Computation}},\ \href {\doibase 10.1103/PhysRevA.96.012303}
  {\bibfield  {journal} {\bibinfo  {journal} {Phys. Rev. A}\ }\textbf {\bibinfo
  {volume} {96}},\ \bibinfo {pages} {012303} (\bibinfo {year}
  {2017})}\BibitemShut {NoStop}%
\bibitem [{\citenamefont {Reichardt}\ \emph {et~al.}(2013)\citenamefont
  {Reichardt}, \citenamefont {Unger},\ and\ \citenamefont
  {Vazirani}}]{reichardt_classical_2013}%
  \BibitemOpen
  \bibfield  {author} {\bibinfo {author} {\bibfnamefont {B.~W.}\ \bibnamefont
  {Reichardt}}, \bibinfo {author} {\bibfnamefont {F.}~\bibnamefont {Unger}}, \
  and\ \bibinfo {author} {\bibfnamefont {U.}~\bibnamefont {Vazirani}},\
  }\bibinfo {title} {\emph {Classical Command of Quantum Systems}},\ \href
  {\doibase 10.1038/nature12035} {\bibfield  {journal} {\bibinfo  {journal}
  {Nature}\ }\textbf {\bibinfo {volume} {496}},\ \bibinfo {pages} {456}
  (\bibinfo {year} {2013})}\BibitemShut {NoStop}%
\bibitem [{\citenamefont {Mahadev}(2018)}]{mahadev_classical_2018-1}%
  \BibitemOpen
  \bibfield  {author} {\bibinfo {author} {\bibfnamefont {U.}~\bibnamefont
  {Mahadev}},\ }\bibfield  {title} {\emph {\bibinfo {title} {\emph {Classical
  {{Verification}} of {{Quantum Computations}}}},\ }}in\ \href {\doibase
  10.1109/FOCS.2018.00033} {\emph {\bibinfo {booktitle} {2018 {{IEEE}} 59th
  {{Annual Symposium}} on {{Foundations}} of {{Computer Science}}
  ({{FOCS}})}}}\ (\bibinfo {year} {2018})\ pp.\ \bibinfo {pages}
  {259--267}\BibitemShut {NoStop}%
\bibitem [{\citenamefont {Garnerone}\ \emph {et~al.}(2010)\citenamefont
  {Garnerone}, \citenamefont {{de Oliveira}},\ and\ \citenamefont
  {Zanardi}}]{garnerone_typicality_2010}%
  \BibitemOpen
  \bibfield  {author} {\bibinfo {author} {\bibfnamefont {S.}~\bibnamefont
  {Garnerone}}, \bibinfo {author} {\bibfnamefont {T.~R.}\ \bibnamefont {{de
  Oliveira}}}, \ and\ \bibinfo {author} {\bibfnamefont {P.}~\bibnamefont
  {Zanardi}},\ }\bibinfo {title} {\emph {Typicality in Random Matrix Product
  States}},\ \href {\doibase 10.1103/PhysRevA.81.032336} {\bibfield  {journal}
  {\bibinfo  {journal} {Phys. Rev. A}\ }\textbf {\bibinfo {volume} {81}},\
  \bibinfo {pages} {032336} (\bibinfo {year} {2010})}\BibitemShut {NoStop}%
\bibitem [{\citenamefont {Collins}\ \emph {et~al.}(2013)\citenamefont
  {Collins}, \citenamefont {{Gonzalez-Guillen}},\ and\ \citenamefont
  {{Perez-Garcia}}}]{collins_matrix_2013}%
  \BibitemOpen
  \bibfield  {author} {\bibinfo {author} {\bibfnamefont {B.}~\bibnamefont
  {Collins}}, \bibinfo {author} {\bibfnamefont {C.~E.}\ \bibnamefont
  {{Gonzalez-Guillen}}}, \ and\ \bibinfo {author} {\bibfnamefont
  {D.}~\bibnamefont {{Perez-Garcia}}},\ }\bibinfo {title} {\emph {Matrix
  {{Product States}}, {{Random Matrix Theory}} and the {{Principle}} of
  {{Maximum Entropy}}}},\ \href {\doibase 10.1007/s00220-013-1718-x} {\bibfield
   {journal} {\bibinfo  {journal} {Communications in Mathematical Physics}\
  }\textbf {\bibinfo {volume} {320}},\ \bibinfo {pages} {663} (\bibinfo {year}
  {2013})}\BibitemShut {NoStop}%
\bibitem [{\citenamefont {Fukuda}\ and\ \citenamefont
  {Koenig}(2019)}]{fukuda_typical_2019}%
  \BibitemOpen
  \bibfield  {author} {\bibinfo {author} {\bibfnamefont {M.}~\bibnamefont
  {Fukuda}}\ and\ \bibinfo {author} {\bibfnamefont {R.}~\bibnamefont
  {Koenig}},\ }\bibinfo {title} {\emph {Typical Entanglement for {{Gaussian}}
  States}},\ \href {\doibase 10.1063/1.5119950} {\bibfield  {journal} {\bibinfo
   {journal} {Journal of Mathematical Physics}\ }\textbf {\bibinfo {volume}
  {60}},\ \bibinfo {pages} {112203} (\bibinfo {year} {2019})}\BibitemShut
  {NoStop}%
\bibitem [{\citenamefont {Iosue}\ \emph {et~al.}(2023)\citenamefont {Iosue},
  \citenamefont {Ehrenberg}, \citenamefont {Hangleiter}, \citenamefont
  {Deshpande},\ and\ \citenamefont {Gorshkov}}]{iosue_page_2023}%
  \BibitemOpen
  \bibfield  {author} {\bibinfo {author} {\bibfnamefont {J.~T.}\ \bibnamefont
  {Iosue}}, \bibinfo {author} {\bibfnamefont {A.}~\bibnamefont {Ehrenberg}},
  \bibinfo {author} {\bibfnamefont {D.}~\bibnamefont {Hangleiter}}, \bibinfo
  {author} {\bibfnamefont {A.}~\bibnamefont {Deshpande}}, \ and\ \bibinfo
  {author} {\bibfnamefont {A.~V.}\ \bibnamefont {Gorshkov}},\ }\bibinfo {title}
  {\emph {Page Curves and Typical Entanglement in Linear Optics}},\ \href
  {\doibase 10.22331/q-2023-05-23-1017} {\bibfield  {journal} {\bibinfo
  {journal} {Quantum}\ }\textbf {\bibinfo {volume} {7}},\ \bibinfo {pages}
  {1017} (\bibinfo {year} {2023})}\BibitemShut {NoStop}%
\bibitem [{\citenamefont {Mnih}\ \emph {et~al.}(2008)\citenamefont {Mnih},
  \citenamefont {Szepesv{\'a}ri},\ and\ \citenamefont
  {Audibert}}]{mnih_empirical_2008}%
  \BibitemOpen
  \bibfield  {author} {\bibinfo {author} {\bibfnamefont {V.}~\bibnamefont
  {Mnih}}, \bibinfo {author} {\bibfnamefont {C.}~\bibnamefont
  {Szepesv{\'a}ri}}, \ and\ \bibinfo {author} {\bibfnamefont {J.-Y.}\
  \bibnamefont {Audibert}},\ }\bibfield  {title} {\emph {\bibinfo {title}
  {\emph {Empirical {{Bernstein}} Stopping}},\ }}in\ \href {\doibase
  10.1145/1390156.1390241} {\emph {\bibinfo {booktitle} {Proceedings of the
  25th International Conference on {{Machine}} Learning}}},\ \bibinfo {series
  and number} {{{ICML}} '08}\ (\bibinfo  {publisher} {{Association for
  Computing Machinery}},\ \bibinfo {address} {{Helsinki, Finland}},\ \bibinfo
  {year} {2008})\ pp.\ \bibinfo {pages} {672--679}\BibitemShut {NoStop}%
\bibitem [{\citenamefont {Erd{\"o}s}\ and\ \citenamefont
  {R{\'e}nyi}(1965)}]{erdos_probabilistic_1965}%
  \BibitemOpen
  \bibfield  {author} {\bibinfo {author} {\bibfnamefont {P.}~\bibnamefont
  {Erd{\"o}s}}\ and\ \bibinfo {author} {\bibfnamefont {A.}~\bibnamefont
  {R{\'e}nyi}},\ }\bibinfo {title} {\emph {Probabilistic Methods in Group
  Theory}},\ \href {\doibase 10.1007/BF02806383} {\bibfield  {journal}
  {\bibinfo  {journal} {J. Anal. Math.}\ }\textbf {\bibinfo {volume} {14}},\
  \bibinfo {pages} {127} (\bibinfo {year} {1965})}\BibitemShut {NoStop}%
\bibitem [{\citenamefont {Grewal}\ \emph
  {et~al.}(2023{\natexlab{b}})\citenamefont {Grewal}, \citenamefont {Iyer},
  \citenamefont {Kretschmer},\ and\ \citenamefont
  {Liang}}]{grewal_efficient_2023}%
  \BibitemOpen
  \bibfield  {author} {\bibinfo {author} {\bibfnamefont {S.}~\bibnamefont
  {Grewal}}, \bibinfo {author} {\bibfnamefont {V.}~\bibnamefont {Iyer}},
  \bibinfo {author} {\bibfnamefont {W.}~\bibnamefont {Kretschmer}}, \ and\
  \bibinfo {author} {\bibfnamefont {D.}~\bibnamefont {Liang}},\ }\bibinfo
  {title} {\emph {Efficient {{Learning}} of {{Quantum States Prepared With Few
  Non-Clifford Gates II}}: {{Single-Copy Measurements}}}},\ \href@noop {} {\
  (\bibinfo {year} {2023}{\natexlab{b}})},\ \Eprint
  {http://arxiv.org/abs/2308.07175} {arXiv:2308.07175}\BibitemShut {NoStop}%
\bibitem [{\citenamefont {Hastings}(2010)}]{Hastings10}%
  \BibitemOpen
  \bibfield  {author} {\bibinfo {author} {\bibfnamefont {M.~B.}\ \bibnamefont
  {Hastings}},\ }\bibinfo {title} {\emph {Locality in Quantum Systems}},\
  \href@noop {} {\  (\bibinfo {year} {2010})},\ \Eprint
  {http://arxiv.org/abs/1008.5137} {arXiv:1008.5137 [math-ph]}\BibitemShut
  {NoStop}%
\end{thebibliography}%


\begin{thebibliography}{34}%
\makeatletter
\providecommand \@ifxundefined [1]{%
 \@ifx{#1\undefined}
}%
\providecommand \@ifnum [1]{%
 \ifnum #1\expandafter \@firstoftwo
 \else \expandafter \@secondoftwo
 \fi
}%
\providecommand \@ifx [1]{%
 \ifx #1\expandafter \@firstoftwo
 \else \expandafter \@secondoftwo
 \fi
}%
\providecommand \natexlab [1]{#1}%
\providecommand \enquote  [1]{``#1''}%
\providecommand \bibnamefont  [1]{#1}%
\providecommand \bibfnamefont [1]{#1}%
\providecommand \citenamefont [1]{#1}%
\providecommand \href@noop [0]{\@secondoftwo}%
\providecommand \href [0]{\begingroup \@sanitize@url \@href}%
\providecommand \@href[1]{\@@startlink{#1}\@@href}%
\providecommand \@@href[1]{\endgroup#1\@@endlink}%
\providecommand \@sanitize@url [0]{\catcode `\\12\catcode `\$12\catcode
  `\&12\catcode `\#12\catcode `\^12\catcode `\_12\catcode `\%12\relax}%
\providecommand \@@startlink[1]{}%
\providecommand \@@endlink[0]{}%
\providecommand \url  [0]{\begingroup\@sanitize@url \@url }%
\providecommand \@url [1]{\endgroup\@href {#1}{\urlprefix }}%
\providecommand \urlprefix  [0]{URL }%
\providecommand \Eprint [0]{\href }%
\providecommand \doibase [0]{http://dx.doi.org/}%
\providecommand \selectlanguage [0]{\@gobble}%
\providecommand \bibinfo  [0]{\@secondoftwo}%
\providecommand \bibfield  [0]{\@secondoftwo}%
\providecommand \translation [1]{[#1]}%
\providecommand \BibitemOpen [0]{}%
\providecommand \bibitemStop [0]{}%
\providecommand \bibitemNoStop [0]{.\EOS\space}%
\providecommand \EOS [0]{\spacefactor3000\relax}%
\providecommand \BibitemShut  [1]{\csname bibitem#1\endcsname}%
\let\auto@bib@innerbib\@empty
\bibitem [{\citenamefont {Hangleiter}\ and\ \citenamefont
  {Eisert}(2023)}]{hangleiter_computational_2023-1}%
  \BibitemOpen
  \bibfield  {author} {\bibinfo {author} {\bibfnamefont {D.}~\bibnamefont
  {Hangleiter}}\ and\ \bibinfo {author} {\bibfnamefont {J.}~\bibnamefont
  {Eisert}},\ }\bibinfo {title} {\emph {Computational Advantage of Quantum
  Random Sampling}},\ \href {\doibase 10.1103/RevModPhys.95.035001} {\bibfield
  {journal} {\bibinfo  {journal} {Rev. Mod. Phys.}\ }\textbf {\bibinfo {volume}
  {95}},\ \bibinfo {pages} {035001} (\bibinfo {year} {2023})}\BibitemShut
  {NoStop}%
\bibitem [{\citenamefont {Bremner}\ \emph {et~al.}(2016)\citenamefont
  {Bremner}, \citenamefont {Montanaro},\ and\ \citenamefont
  {Shepherd}}]{bremner_average-case_2016}%
  \BibitemOpen
  \bibfield  {author} {\bibinfo {author} {\bibfnamefont {M.~J.}\ \bibnamefont
  {Bremner}}, \bibinfo {author} {\bibfnamefont {A.}~\bibnamefont {Montanaro}},
  \ and\ \bibinfo {author} {\bibfnamefont {D.~J.}\ \bibnamefont {Shepherd}},\
  }\bibinfo {title} {\emph {Average-{{Case Complexity Versus Approximate
  Simulation}} of {{Commuting Quantum Computations}}}},\ \href {\doibase
  10.1103/PhysRevLett.117.080501} {\bibfield  {journal} {\bibinfo  {journal}
  {Physical Review Letters}\ }\textbf {\bibinfo {volume} {117}},\ \bibinfo
  {pages} {080501} (\bibinfo {year} {2016})}\BibitemShut {NoStop}%
\bibitem [{\citenamefont {Krovi}(2022)}]{krovi_average-case_2022}%
  \BibitemOpen
  \bibfield  {author} {\bibinfo {author} {\bibfnamefont {H.}~\bibnamefont
  {Krovi}},\ }\bibinfo {title} {\emph {Average-Case Hardness of Estimating
  Probabilities of Random Quantum Circuits with a Linear Scaling in the Error
  Exponent}},\ \href@noop {} {\  (\bibinfo {year} {2022})},\ \Eprint
  {http://arxiv.org/abs/2206.05642} {arXiv:2206.05642}\BibitemShut {NoStop}%
\bibitem [{\citenamefont {Zhu}\ \emph {et~al.}(2016)\citenamefont {Zhu},
  \citenamefont {Kueng}, \citenamefont {Grassl},\ and\ \citenamefont
  {Gross}}]{zhu_clifford_2016}%
  \BibitemOpen
  \bibfield  {author} {\bibinfo {author} {\bibfnamefont {H.}~\bibnamefont
  {Zhu}}, \bibinfo {author} {\bibfnamefont {R.}~\bibnamefont {Kueng}}, \bibinfo
  {author} {\bibfnamefont {M.}~\bibnamefont {Grassl}}, \ and\ \bibinfo {author}
  {\bibfnamefont {D.}~\bibnamefont {Gross}},\ }\bibinfo {title} {\emph {The
  {{Clifford}} Group Fails Gracefully to Be a Unitary 4-Design}},\ \href@noop
  {} {\  (\bibinfo {year} {2016})},\ \Eprint {http://arxiv.org/abs/1609.08172}
  {arXiv:1609.08172}\BibitemShut {NoStop}%
\bibitem [{\citenamefont {Brand{\~a}o}\ \emph {et~al.}(2016)\citenamefont
  {Brand{\~a}o}, \citenamefont {Harrow},\ and\ \citenamefont
  {Horodecki}}]{brandao_local_2016}%
  \BibitemOpen
  \bibfield  {author} {\bibinfo {author} {\bibfnamefont {F.~G. S.~L.}\
  \bibnamefont {Brand{\~a}o}}, \bibinfo {author} {\bibfnamefont {A.~W.}\
  \bibnamefont {Harrow}}, \ and\ \bibinfo {author} {\bibfnamefont
  {M.}~\bibnamefont {Horodecki}},\ }\bibinfo {title} {\emph {Local {{Random
  Quantum Circuits}} Are {{Approximate Polynomial-Designs}}}},\ \href {\doibase
  10.1007/s00220-016-2706-8} {\bibfield  {journal} {\bibinfo  {journal}
  {Commun. Math. Phys.}\ }\textbf {\bibinfo {volume} {346}},\ \bibinfo {pages}
  {397} (\bibinfo {year} {2016})}\BibitemShut {NoStop}%
\bibitem [{\citenamefont {Zhou}\ and\ \citenamefont
  {Nahum}(2019)}]{zhou_emergent_2019}%
  \BibitemOpen
  \bibfield  {author} {\bibinfo {author} {\bibfnamefont {T.}~\bibnamefont
  {Zhou}}\ and\ \bibinfo {author} {\bibfnamefont {A.}~\bibnamefont {Nahum}},\
  }\bibinfo {title} {\emph {Emergent Statistical Mechanics of Entanglement in
  Random Unitary Circuits}},\ \href {\doibase 10.1103/PhysRevB.99.174205}
  {\bibfield  {journal} {\bibinfo  {journal} {Phys. Rev. B}\ }\textbf {\bibinfo
  {volume} {99}},\ \bibinfo {pages} {174205} (\bibinfo {year}
  {2019})}\BibitemShut {NoStop}%
\bibitem [{\citenamefont {{Hunter-Jones}}(2019)}]{hunter-jones_unitary_2019}%
  \BibitemOpen
  \bibfield  {author} {\bibinfo {author} {\bibfnamefont {N.}~\bibnamefont
  {{Hunter-Jones}}},\ }\bibinfo {title} {\emph {Unitary Designs from
  Statistical Mechanics in Random Quantum Circuits}},\ \href@noop {} {\
  (\bibinfo {year} {2019})},\ \Eprint {http://arxiv.org/abs/1905.12053}
  {arXiv:1905.12053}\BibitemShut {NoStop}%
\bibitem [{\citenamefont {Barak}\ \emph {et~al.}(2021)\citenamefont {Barak},
  \citenamefont {Chou},\ and\ \citenamefont {Gao}}]{barak_spoofing_2021}%
  \BibitemOpen
  \bibfield  {author} {\bibinfo {author} {\bibfnamefont {B.}~\bibnamefont
  {Barak}}, \bibinfo {author} {\bibfnamefont {C.-N.}\ \bibnamefont {Chou}}, \
  and\ \bibinfo {author} {\bibfnamefont {X.}~\bibnamefont {Gao}},\ }\bibfield
  {title} {\emph {\bibinfo {title} {\emph {Spoofing {{Linear Cross-Entropy
  Benchmarking}} in {{Shallow Quantum Circuits}}}},\ }}in\ \href {\doibase
  10.4230/LIPIcs.ITCS.2021.30} {\emph {\bibinfo {booktitle} {12th
  {{Innovations}} in {{Theoretical Computer Science Conference}} ({{ITCS}}
  2021)}}},\ \bibinfo {series} {Leibniz {{International Proceedings}} in
  {{Informatics}} ({{LIPIcs}})}, Vol.\ \bibinfo {volume} {185},\ \bibinfo
  {editor} {edited by\ \bibinfo {editor} {\bibfnamefont {J.~R.}\ \bibnamefont
  {Lee}}}\ (\bibinfo  {publisher} {{Schloss Dagstuhl\textendash Leibniz-Zentrum
  f\"ur Informatik}},\ \bibinfo {address} {{Dagstuhl, Germany}},\ \bibinfo
  {year} {2021})\ pp.\ \bibinfo {pages} {30:1--30:20},\ \Eprint
  {http://arxiv.org/abs/2005.02421} {arXiv:2005.02421}\BibitemShut {NoStop}%
\bibitem [{\citenamefont {Dalzell}\ \emph {et~al.}(2022)\citenamefont
  {Dalzell}, \citenamefont {{Hunter-Jones}},\ and\ \citenamefont
  {Brand{\~a}o}}]{dalzell_random_2022}%
  \BibitemOpen
  \bibfield  {author} {\bibinfo {author} {\bibfnamefont {A.~M.}\ \bibnamefont
  {Dalzell}}, \bibinfo {author} {\bibfnamefont {N.}~\bibnamefont
  {{Hunter-Jones}}}, \ and\ \bibinfo {author} {\bibfnamefont {F.~G. S.~L.}\
  \bibnamefont {Brand{\~a}o}},\ }\bibinfo {title} {\emph {Random {{Quantum
  Circuits Anticoncentrate}} in {{Log Depth}}}},\ \href {\doibase
  10.1103/PRXQuantum.3.010333} {\bibfield  {journal} {\bibinfo  {journal} {PRX
  Quantum}\ }\textbf {\bibinfo {volume} {3}},\ \bibinfo {pages} {010333}
  (\bibinfo {year} {2022})}\BibitemShut {NoStop}%
\bibitem [{\citenamefont {Wallman}\ and\ \citenamefont
  {Emerson}(2016)}]{wallman_noise_2016}%
  \BibitemOpen
  \bibfield  {author} {\bibinfo {author} {\bibfnamefont {J.~J.}\ \bibnamefont
  {Wallman}}\ and\ \bibinfo {author} {\bibfnamefont {J.}~\bibnamefont
  {Emerson}},\ }\bibinfo {title} {\emph {Noise Tailoring for Scalable Quantum
  Computation via Randomized Compiling}},\ \href {\doibase
  10.1103/PhysRevA.94.052325} {\bibfield  {journal} {\bibinfo  {journal} {Phys.
  Rev. A}\ }\textbf {\bibinfo {volume} {94}},\ \bibinfo {pages} {052325}
  (\bibinfo {year} {2016})}\BibitemShut {NoStop}%
\bibitem [{\citenamefont {Ware}\ \emph {et~al.}(2023)\citenamefont {Ware},
  \citenamefont {Deshpande}, \citenamefont {Hangleiter}, \citenamefont
  {Niroula}, \citenamefont {Fefferman}, \citenamefont {Gorshkov},\ and\
  \citenamefont {Gullans}}]{ware_sharp_2023}%
  \BibitemOpen
  \bibfield  {author} {\bibinfo {author} {\bibfnamefont {B.}~\bibnamefont
  {Ware}}, \bibinfo {author} {\bibfnamefont {A.}~\bibnamefont {Deshpande}},
  \bibinfo {author} {\bibfnamefont {D.}~\bibnamefont {Hangleiter}}, \bibinfo
  {author} {\bibfnamefont {P.}~\bibnamefont {Niroula}}, \bibinfo {author}
  {\bibfnamefont {B.}~\bibnamefont {Fefferman}}, \bibinfo {author}
  {\bibfnamefont {A.~V.}\ \bibnamefont {Gorshkov}}, \ and\ \bibinfo {author}
  {\bibfnamefont {M.~J.}\ \bibnamefont {Gullans}},\ }\bibinfo {title} {\emph {A
  Sharp Phase Transition in Linear Cross-Entropy Benchmarking}},\ \href@noop {}
  {\  (\bibinfo {year} {2023})},\ \Eprint {http://arxiv.org/abs/2305.04954}
  {arXiv:2305.04954}\BibitemShut {NoStop}%
\bibitem [{\citenamefont {Morvan}\ \emph {et~al.}(2023)\citenamefont {Morvan},
  \citenamefont {Villalonga}, \citenamefont {Mi}, \citenamefont {Mandr{\`a}},
  \citenamefont {Bengtsson}, \citenamefont {Klimov}, \citenamefont {Chen},
  \citenamefont {Hong}, \citenamefont {Erickson}, \citenamefont {Drozdov},
  \citenamefont {Chau}, \citenamefont {Laun}, \citenamefont {Movassagh},
  \citenamefont {Asfaw}, \citenamefont {Brand{\~a}o}, \citenamefont {Peralta},
  \citenamefont {Abanin}, \citenamefont {Acharya}, \citenamefont {Allen},
  \citenamefont {Andersen},\ and\ \citenamefont
  {\emph{others}}}]{morvan_phase_2023}%
  \BibitemOpen
  \bibfield  {author} {\bibinfo {author} {\bibfnamefont {A.}~\bibnamefont
  {Morvan}}, \bibinfo {author} {\bibfnamefont {B.}~\bibnamefont {Villalonga}},
  \bibinfo {author} {\bibfnamefont {X.}~\bibnamefont {Mi}}, \bibinfo {author}
  {\bibfnamefont {S.}~\bibnamefont {Mandr{\`a}}}, \bibinfo {author}
  {\bibfnamefont {A.}~\bibnamefont {Bengtsson}}, \bibinfo {author}
  {\bibfnamefont {P.~V.}\ \bibnamefont {Klimov}}, \bibinfo {author}
  {\bibfnamefont {Z.}~\bibnamefont {Chen}}, \bibinfo {author} {\bibfnamefont
  {S.}~\bibnamefont {Hong}}, \bibinfo {author} {\bibfnamefont {C.}~\bibnamefont
  {Erickson}}, \bibinfo {author} {\bibfnamefont {I.~K.}\ \bibnamefont
  {Drozdov}}, \bibinfo {author} {\bibfnamefont {J.}~\bibnamefont {Chau}},
  \bibinfo {author} {\bibfnamefont {G.}~\bibnamefont {Laun}}, \bibinfo {author}
  {\bibfnamefont {R.}~\bibnamefont {Movassagh}}, \bibinfo {author}
  {\bibfnamefont {A.}~\bibnamefont {Asfaw}}, \bibinfo {author} {\bibfnamefont
  {L.~T. A.~N.}\ \bibnamefont {Brand{\~a}o}}, \bibinfo {author} {\bibfnamefont
  {R.}~\bibnamefont {Peralta}}, \bibinfo {author} {\bibfnamefont
  {D.}~\bibnamefont {Abanin}}, \bibinfo {author} {\bibfnamefont
  {R.}~\bibnamefont {Acharya}}, \bibinfo {author} {\bibfnamefont
  {R.}~\bibnamefont {Allen}}, \bibinfo {author} {\bibfnamefont {T.~I.}\
  \bibnamefont {Andersen}}, \ and\ \bibinfo {author} {\bibnamefont
  {\emph{others}}},\ }\bibinfo {title} {\emph {Phase Transition in {{Random
  Circuit Sampling}}}},\ \href@noop {} {\  (\bibinfo {year} {2023})},\ \Eprint
  {http://arxiv.org/abs/2304.11119} {arXiv:2304.11119}\BibitemShut {NoStop}%
\bibitem [{\citenamefont {Sommers}\ \emph {et~al.}(2023)\citenamefont
  {Sommers}, \citenamefont {Huse},\ and\ \citenamefont
  {Gullans}}]{sommers_crystalline_2023}%
  \BibitemOpen
  \bibfield  {author} {\bibinfo {author} {\bibfnamefont {G.~M.}\ \bibnamefont
  {Sommers}}, \bibinfo {author} {\bibfnamefont {D.~A.}\ \bibnamefont {Huse}}, \
  and\ \bibinfo {author} {\bibfnamefont {M.~J.}\ \bibnamefont {Gullans}},\
  }\bibinfo {title} {\emph {Crystalline {{Quantum Circuits}}}},\ \href
  {\doibase 10.1103/PRXQuantum.4.030313} {\bibfield  {journal} {\bibinfo
  {journal} {PRX Quantum}\ }\textbf {\bibinfo {volume} {4}},\ \bibinfo {pages}
  {030313} (\bibinfo {year} {2023})}\BibitemShut {NoStop}%
\bibitem [{\citenamefont {Dalzell}\ \emph {et~al.}(2021)\citenamefont
  {Dalzell}, \citenamefont {{Hunter-Jones}},\ and\ \citenamefont
  {Brand{\~a}o}}]{dalzell_random_2021}%
  \BibitemOpen
  \bibfield  {author} {\bibinfo {author} {\bibfnamefont {A.~M.}\ \bibnamefont
  {Dalzell}}, \bibinfo {author} {\bibfnamefont {N.}~\bibnamefont
  {{Hunter-Jones}}}, \ and\ \bibinfo {author} {\bibfnamefont {F.~G. S.~L.}\
  \bibnamefont {Brand{\~a}o}},\ }\bibinfo {title} {\emph {Random Quantum
  Circuits Transform Local Noise into Global White Noise}},\ \href@noop {} {\
  (\bibinfo {year} {2021})},\ \Eprint {http://arxiv.org/abs/2111.14907}
  {arXiv:2111.14907}\BibitemShut {NoStop}%
\bibitem [{\citenamefont {Choi}\ \emph {et~al.}(2023)\citenamefont {Choi},
  \citenamefont {Shaw}, \citenamefont {Madjarov}, \citenamefont {Xie},
  \citenamefont {Finkelstein}, \citenamefont {Covey}, \citenamefont {Cotler},
  \citenamefont {Mark}, \citenamefont {Huang}, \citenamefont {Kale},
  \citenamefont {Pichler}, \citenamefont {Brand{\~a}o}, \citenamefont {Choi},\
  and\ \citenamefont {Endres}}]{choi_preparing_2023}%
  \BibitemOpen
  \bibfield  {author} {\bibinfo {author} {\bibfnamefont {J.}~\bibnamefont
  {Choi}}, \bibinfo {author} {\bibfnamefont {A.~L.}\ \bibnamefont {Shaw}},
  \bibinfo {author} {\bibfnamefont {I.~S.}\ \bibnamefont {Madjarov}}, \bibinfo
  {author} {\bibfnamefont {X.}~\bibnamefont {Xie}}, \bibinfo {author}
  {\bibfnamefont {R.}~\bibnamefont {Finkelstein}}, \bibinfo {author}
  {\bibfnamefont {J.~P.}\ \bibnamefont {Covey}}, \bibinfo {author}
  {\bibfnamefont {J.~S.}\ \bibnamefont {Cotler}}, \bibinfo {author}
  {\bibfnamefont {D.~K.}\ \bibnamefont {Mark}}, \bibinfo {author}
  {\bibfnamefont {H.-Y.}\ \bibnamefont {Huang}}, \bibinfo {author}
  {\bibfnamefont {A.}~\bibnamefont {Kale}}, \bibinfo {author} {\bibfnamefont
  {H.}~\bibnamefont {Pichler}}, \bibinfo {author} {\bibfnamefont {F.~G. S.~L.}\
  \bibnamefont {Brand{\~a}o}}, \bibinfo {author} {\bibfnamefont
  {S.}~\bibnamefont {Choi}}, \ and\ \bibinfo {author} {\bibfnamefont
  {M.}~\bibnamefont {Endres}},\ }\bibinfo {title} {\emph {Preparing Random
  States and Benchmarking with Many-Body Quantum Chaos}},\ \href {\doibase
  10.1038/s41586-022-05442-1} {\bibfield  {journal} {\bibinfo  {journal}
  {Nature}\ }\textbf {\bibinfo {volume} {613}},\ \bibinfo {pages} {468}
  (\bibinfo {year} {2023})}\BibitemShut {NoStop}%
\bibitem [{\citenamefont {Ringbauer}\ \emph {et~al.}(2023)\citenamefont
  {Ringbauer}, \citenamefont {Hinsche}, \citenamefont {Feldker}, \citenamefont
  {Faehrmann}, \citenamefont {{Bermejo-Vega}}, \citenamefont {Edmunds},
  \citenamefont {Postler}, \citenamefont {Stricker}, \citenamefont {Marciniak},
  \citenamefont {Meth}, \citenamefont {Pogorelov}, \citenamefont {Blatt},
  \citenamefont {Schindler}, \citenamefont {Eisert}, \citenamefont {Monz},\
  and\ \citenamefont {Hangleiter}}]{ringbauer_verifiable_2023}%
  \BibitemOpen
  \bibfield  {author} {\bibinfo {author} {\bibfnamefont {M.}~\bibnamefont
  {Ringbauer}}, \bibinfo {author} {\bibfnamefont {M.}~\bibnamefont {Hinsche}},
  \bibinfo {author} {\bibfnamefont {T.}~\bibnamefont {Feldker}}, \bibinfo
  {author} {\bibfnamefont {P.~K.}\ \bibnamefont {Faehrmann}}, \bibinfo {author}
  {\bibfnamefont {J.}~\bibnamefont {{Bermejo-Vega}}}, \bibinfo {author}
  {\bibfnamefont {C.}~\bibnamefont {Edmunds}}, \bibinfo {author} {\bibfnamefont
  {L.}~\bibnamefont {Postler}}, \bibinfo {author} {\bibfnamefont
  {R.}~\bibnamefont {Stricker}}, \bibinfo {author} {\bibfnamefont {C.~D.}\
  \bibnamefont {Marciniak}}, \bibinfo {author} {\bibfnamefont {M.}~\bibnamefont
  {Meth}}, \bibinfo {author} {\bibfnamefont {I.}~\bibnamefont {Pogorelov}},
  \bibinfo {author} {\bibfnamefont {R.}~\bibnamefont {Blatt}}, \bibinfo
  {author} {\bibfnamefont {P.}~\bibnamefont {Schindler}}, \bibinfo {author}
  {\bibfnamefont {J.}~\bibnamefont {Eisert}}, \bibinfo {author} {\bibfnamefont
  {T.}~\bibnamefont {Monz}}, \ and\ \bibinfo {author} {\bibfnamefont
  {D.}~\bibnamefont {Hangleiter}},\ }\bibinfo {title} {\emph {Verifiable
  Measurement-Based Quantum Random Sampling with Trapped Ions}},\ \href@noop {}
  {\  (\bibinfo {year} {2023})},\ \Eprint {http://arxiv.org/abs/2307.14424}
  {arXiv:2307.14424}\BibitemShut {NoStop}%
\bibitem [{\citenamefont {Flammia}\ and\ \citenamefont
  {Liu}(2011)}]{flammia_direct_2011}%
  \BibitemOpen
  \bibfield  {author} {\bibinfo {author} {\bibfnamefont {S.~T.}\ \bibnamefont
  {Flammia}}\ and\ \bibinfo {author} {\bibfnamefont {Y.-K.}\ \bibnamefont
  {Liu}},\ }\bibinfo {title} {\emph {Direct {{Fidelity Estimation}} from {{Few
  Pauli Measurements}}}},\ \href {\doibase 10.1103/PhysRevLett.106.230501}
  {\bibfield  {journal} {\bibinfo  {journal} {Phys. Rev. Lett.}\ }\textbf
  {\bibinfo {volume} {106}},\ \bibinfo {pages} {230501} (\bibinfo {year}
  {2011})}\BibitemShut {NoStop}%
\bibitem [{\citenamefont {Fitzsimons}\ and\ \citenamefont
  {Kashefi}(2017)}]{fitzsimons_unconditionally_2017}%
  \BibitemOpen
  \bibfield  {author} {\bibinfo {author} {\bibfnamefont {J.~F.}\ \bibnamefont
  {Fitzsimons}}\ and\ \bibinfo {author} {\bibfnamefont {E.}~\bibnamefont
  {Kashefi}},\ }\bibinfo {title} {\emph {Unconditionally Verifiable Blind
  Quantum Computation}},\ \href {\doibase 10.1103/PhysRevA.96.012303}
  {\bibfield  {journal} {\bibinfo  {journal} {Phys. Rev. A}\ }\textbf {\bibinfo
  {volume} {96}},\ \bibinfo {pages} {012303} (\bibinfo {year}
  {2017})}\BibitemShut {NoStop}%
\bibitem [{\citenamefont {Reichardt}\ \emph {et~al.}(2013)\citenamefont
  {Reichardt}, \citenamefont {Unger},\ and\ \citenamefont
  {Vazirani}}]{reichardt_classical_2013}%
  \BibitemOpen
  \bibfield  {author} {\bibinfo {author} {\bibfnamefont {B.~W.}\ \bibnamefont
  {Reichardt}}, \bibinfo {author} {\bibfnamefont {F.}~\bibnamefont {Unger}}, \
  and\ \bibinfo {author} {\bibfnamefont {U.}~\bibnamefont {Vazirani}},\
  }\bibinfo {title} {\emph {Classical Command of Quantum Systems}},\ \href
  {\doibase 10.1038/nature12035} {\bibfield  {journal} {\bibinfo  {journal}
  {Nature}\ }\textbf {\bibinfo {volume} {496}},\ \bibinfo {pages} {456}
  (\bibinfo {year} {2013})}\BibitemShut {NoStop}%
\bibitem [{\citenamefont {Mahadev}(2018)}]{mahadev_classical_2018-1}%
  \BibitemOpen
  \bibfield  {author} {\bibinfo {author} {\bibfnamefont {U.}~\bibnamefont
  {Mahadev}},\ }\bibfield  {title} {\emph {\bibinfo {title} {\emph {Classical
  {{Verification}} of {{Quantum Computations}}}},\ }}in\ \href {\doibase
  10.1109/FOCS.2018.00033} {\emph {\bibinfo {booktitle} {2018 {{IEEE}} 59th
  {{Annual Symposium}} on {{Foundations}} of {{Computer Science}}
  ({{FOCS}})}}}\ (\bibinfo {year} {2018})\ pp.\ \bibinfo {pages}
  {259--267}\BibitemShut {NoStop}%
\bibitem [{\citenamefont {Page}(1993)}]{page_average_1993}%
  \BibitemOpen
  \bibfield  {author} {\bibinfo {author} {\bibfnamefont {D.~N.}\ \bibnamefont
  {Page}},\ }\bibinfo {title} {\emph {Average Entropy of a Subsystem}},\ \href
  {\doibase 10.1103/PhysRevLett.71.1291} {\bibfield  {journal} {\bibinfo
  {journal} {Phys. Rev. Lett.}\ }\textbf {\bibinfo {volume} {71}},\ \bibinfo
  {pages} {1291} (\bibinfo {year} {1993})}\BibitemShut {NoStop}%
\bibitem [{\citenamefont {Garnerone}\ \emph {et~al.}(2010)\citenamefont
  {Garnerone}, \citenamefont {{de Oliveira}},\ and\ \citenamefont
  {Zanardi}}]{garnerone_typicality_2010}%
  \BibitemOpen
  \bibfield  {author} {\bibinfo {author} {\bibfnamefont {S.}~\bibnamefont
  {Garnerone}}, \bibinfo {author} {\bibfnamefont {T.~R.}\ \bibnamefont {{de
  Oliveira}}}, \ and\ \bibinfo {author} {\bibfnamefont {P.}~\bibnamefont
  {Zanardi}},\ }\bibinfo {title} {\emph {Typicality in Random Matrix Product
  States}},\ \href {\doibase 10.1103/PhysRevA.81.032336} {\bibfield  {journal}
  {\bibinfo  {journal} {Phys. Rev. A}\ }\textbf {\bibinfo {volume} {81}},\
  \bibinfo {pages} {032336} (\bibinfo {year} {2010})}\BibitemShut {NoStop}%
\bibitem [{\citenamefont {Collins}\ \emph {et~al.}(2013)\citenamefont
  {Collins}, \citenamefont {{Gonzalez-Guillen}},\ and\ \citenamefont
  {{Perez-Garcia}}}]{collins_matrix_2013}%
  \BibitemOpen
  \bibfield  {author} {\bibinfo {author} {\bibfnamefont {B.}~\bibnamefont
  {Collins}}, \bibinfo {author} {\bibfnamefont {C.~E.}\ \bibnamefont
  {{Gonzalez-Guillen}}}, \ and\ \bibinfo {author} {\bibfnamefont
  {D.}~\bibnamefont {{Perez-Garcia}}},\ }\bibinfo {title} {\emph {Matrix
  {{Product States}}, {{Random Matrix Theory}} and the {{Principle}} of
  {{Maximum Entropy}}}},\ \href {\doibase 10.1007/s00220-013-1718-x} {\bibfield
   {journal} {\bibinfo  {journal} {Communications in Mathematical Physics}\
  }\textbf {\bibinfo {volume} {320}},\ \bibinfo {pages} {663} (\bibinfo {year}
  {2013})}\BibitemShut {NoStop}%
\bibitem [{\citenamefont {Fukuda}\ and\ \citenamefont
  {Koenig}(2019)}]{fukuda_typical_2019}%
  \BibitemOpen
  \bibfield  {author} {\bibinfo {author} {\bibfnamefont {M.}~\bibnamefont
  {Fukuda}}\ and\ \bibinfo {author} {\bibfnamefont {R.}~\bibnamefont
  {Koenig}},\ }\bibinfo {title} {\emph {Typical Entanglement for {{Gaussian}}
  States}},\ \href {\doibase 10.1063/1.5119950} {\bibfield  {journal} {\bibinfo
   {journal} {Journal of Mathematical Physics}\ }\textbf {\bibinfo {volume}
  {60}},\ \bibinfo {pages} {112203} (\bibinfo {year} {2019})}\BibitemShut
  {NoStop}%
\bibitem [{\citenamefont {Iosue}\ \emph {et~al.}(2023)\citenamefont {Iosue},
  \citenamefont {Ehrenberg}, \citenamefont {Hangleiter}, \citenamefont
  {Deshpande},\ and\ \citenamefont {Gorshkov}}]{iosue_page_2023}%
  \BibitemOpen
  \bibfield  {author} {\bibinfo {author} {\bibfnamefont {J.~T.}\ \bibnamefont
  {Iosue}}, \bibinfo {author} {\bibfnamefont {A.}~\bibnamefont {Ehrenberg}},
  \bibinfo {author} {\bibfnamefont {D.}~\bibnamefont {Hangleiter}}, \bibinfo
  {author} {\bibfnamefont {A.}~\bibnamefont {Deshpande}}, \ and\ \bibinfo
  {author} {\bibfnamefont {A.~V.}\ \bibnamefont {Gorshkov}},\ }\bibinfo {title}
  {\emph {Page Curves and Typical Entanglement in Linear Optics}},\ \href
  {\doibase 10.22331/q-2023-05-23-1017} {\bibfield  {journal} {\bibinfo
  {journal} {Quantum}\ }\textbf {\bibinfo {volume} {7}},\ \bibinfo {pages}
  {1017} (\bibinfo {year} {2023})}\BibitemShut {NoStop}%
\bibitem [{\citenamefont {Montanaro}(2017)}]{montanaro_learning_2017}%
  \BibitemOpen
  \bibfield  {author} {\bibinfo {author} {\bibfnamefont {A.}~\bibnamefont
  {Montanaro}},\ }\bibinfo {title} {\emph {Learning Stabilizer States by
  {{Bell}} Sampling}},\ \href@noop {} {\  (\bibinfo {year} {2017})},\ \Eprint
  {http://arxiv.org/abs/1707.04012} {arXiv:1707.04012}\BibitemShut {NoStop}%
\bibitem [{\citenamefont {Gu{\c t}{\u a}}\ \emph {et~al.}(2020)\citenamefont
  {Gu{\c t}{\u a}}, \citenamefont {Kahn}, \citenamefont {Kueng},\ and\
  \citenamefont {Tropp}}]{guta_fast_2020}%
  \BibitemOpen
  \bibfield  {author} {\bibinfo {author} {\bibfnamefont {M.}~\bibnamefont
  {Gu{\c t}{\u a}}}, \bibinfo {author} {\bibfnamefont {J.}~\bibnamefont
  {Kahn}}, \bibinfo {author} {\bibfnamefont {R.}~\bibnamefont {Kueng}}, \ and\
  \bibinfo {author} {\bibfnamefont {J.~A.}\ \bibnamefont {Tropp}},\ }\bibinfo
  {title} {\emph {Fast State Tomography with Optimal Error Bounds}},\ \href
  {\doibase 10.1088/1751-8121/ab8111} {\bibfield  {journal} {\bibinfo
  {journal} {J. Phys. A: Math. Theor.}\ }\textbf {\bibinfo {volume} {53}},\
  \bibinfo {pages} {204001} (\bibinfo {year} {2020})}\BibitemShut {NoStop}%
\bibitem [{\citenamefont {Huang}\ \emph {et~al.}(2021)\citenamefont {Huang},
  \citenamefont {Kueng},\ and\ \citenamefont
  {Preskill}}]{huang_information-theoretic_2021}%
  \BibitemOpen
  \bibfield  {author} {\bibinfo {author} {\bibfnamefont {H.-Y.}\ \bibnamefont
  {Huang}}, \bibinfo {author} {\bibfnamefont {R.}~\bibnamefont {Kueng}}, \ and\
  \bibinfo {author} {\bibfnamefont {J.}~\bibnamefont {Preskill}},\ }\bibinfo
  {title} {\emph {Information-{{Theoretic Bounds}} on {{Quantum Advantage}} in
  {{Machine Learning}}}},\ \href {\doibase 10.1103/PhysRevLett.126.190505}
  {\bibfield  {journal} {\bibinfo  {journal} {Phys. Rev. Lett.}\ }\textbf
  {\bibinfo {volume} {126}},\ \bibinfo {pages} {190505} (\bibinfo {year}
  {2021})}\BibitemShut {NoStop}%
\bibitem [{\citenamefont {Mnih}\ \emph {et~al.}(2008)\citenamefont {Mnih},
  \citenamefont {Szepesv{\'a}ri},\ and\ \citenamefont
  {Audibert}}]{mnih_empirical_2008}%
  \BibitemOpen
  \bibfield  {author} {\bibinfo {author} {\bibfnamefont {V.}~\bibnamefont
  {Mnih}}, \bibinfo {author} {\bibfnamefont {C.}~\bibnamefont
  {Szepesv{\'a}ri}}, \ and\ \bibinfo {author} {\bibfnamefont {J.-Y.}\
  \bibnamefont {Audibert}},\ }\bibfield  {title} {\emph {\bibinfo {title}
  {\emph {Empirical {{Bernstein}} Stopping}},\ }}in\ \href {\doibase
  10.1145/1390156.1390241} {\emph {\bibinfo {booktitle} {Proceedings of the
  25th International Conference on {{Machine}} Learning}}},\ \bibinfo {series
  and number} {{{ICML}} '08}\ (\bibinfo  {publisher} {{Association for
  Computing Machinery}},\ \bibinfo {address} {{Helsinki, Finland}},\ \bibinfo
  {year} {2008})\ pp.\ \bibinfo {pages} {672--679}\BibitemShut {NoStop}%
\bibitem [{\citenamefont {Erd{\"o}s}\ and\ \citenamefont
  {R{\'e}nyi}(1965)}]{erdos_probabilistic_1965}%
  \BibitemOpen
  \bibfield  {author} {\bibinfo {author} {\bibfnamefont {P.}~\bibnamefont
  {Erd{\"o}s}}\ and\ \bibinfo {author} {\bibfnamefont {A.}~\bibnamefont
  {R{\'e}nyi}},\ }\bibinfo {title} {\emph {Probabilistic Methods in Group
  Theory}},\ \href {\doibase 10.1007/BF02806383} {\bibfield  {journal}
  {\bibinfo  {journal} {J. Anal. Math.}\ }\textbf {\bibinfo {volume} {14}},\
  \bibinfo {pages} {127} (\bibinfo {year} {1965})}\BibitemShut {NoStop}%
\bibitem [{\citenamefont {Grewal}\ \emph {et~al.}(2023)\citenamefont {Grewal},
  \citenamefont {Iyer}, \citenamefont {Kretschmer},\ and\ \citenamefont
  {Liang}}]{grewal_efficient_2023}%
  \BibitemOpen
  \bibfield  {author} {\bibinfo {author} {\bibfnamefont {S.}~\bibnamefont
  {Grewal}}, \bibinfo {author} {\bibfnamefont {V.}~\bibnamefont {Iyer}},
  \bibinfo {author} {\bibfnamefont {W.}~\bibnamefont {Kretschmer}}, \ and\
  \bibinfo {author} {\bibfnamefont {D.}~\bibnamefont {Liang}},\ }\bibinfo
  {title} {\emph {Efficient {{Learning}} of {{Quantum States Prepared With Few
  Non-Clifford Gates II}}: {{Single-Copy Measurements}}}},\ \href@noop {} {\
  (\bibinfo {year} {2023})},\ \Eprint {http://arxiv.org/abs/2308.07175}
  {arXiv:2308.07175}\BibitemShut {NoStop}%
\bibitem [{\citenamefont {Hastings}(2010)}]{Hastings10}%
  \BibitemOpen
  \bibfield  {author} {\bibinfo {author} {\bibfnamefont {M.~B.}\ \bibnamefont
  {Hastings}},\ }\bibinfo {title} {\emph {Locality in Quantum Systems}},\
  \href@noop {} {\  (\bibinfo {year} {2010})},\ \Eprint
  {http://arxiv.org/abs/1008.5137} {arXiv:1008.5137 [math-ph]}\BibitemShut
  {NoStop}%
\bibitem [{\citenamefont {Gao}\ and\ \citenamefont
  {Duan}(2018)}]{gao_efficient_2018}%
  \BibitemOpen
  \bibfield  {author} {\bibinfo {author} {\bibfnamefont {X.}~\bibnamefont
  {Gao}}\ and\ \bibinfo {author} {\bibfnamefont {L.}~\bibnamefont {Duan}},\
  }\bibinfo {title} {\emph {Efficient Classical Simulation of Noisy Quantum
  Computation}},\ \href@noop {} {\  (\bibinfo {year} {2018})},\ \Eprint
  {http://arxiv.org/abs/1810.03176} {arXiv:1810.03176}\BibitemShut {NoStop}%
\bibitem [{\citenamefont {Aharonov}\ \emph {et~al.}(2022)\citenamefont
  {Aharonov}, \citenamefont {Gao}, \citenamefont {Landau}, \citenamefont
  {Liu},\ and\ \citenamefont {Vazirani}}]{aharonov_polynomial-time_2022}%
  \BibitemOpen
  \bibfield  {author} {\bibinfo {author} {\bibfnamefont {D.}~\bibnamefont
  {Aharonov}}, \bibinfo {author} {\bibfnamefont {X.}~\bibnamefont {Gao}},
  \bibinfo {author} {\bibfnamefont {Z.}~\bibnamefont {Landau}}, \bibinfo
  {author} {\bibfnamefont {Y.}~\bibnamefont {Liu}}, \ and\ \bibinfo {author}
  {\bibfnamefont {U.}~\bibnamefont {Vazirani}},\ }\bibinfo {title} {\emph {A
  Polynomial-Time Classical Algorithm for Noisy Random Circuit Sampling}},\
  \href@noop {} {\  (\bibinfo {year} {2022})},\ \Eprint
  {http://arxiv.org/abs/2211.03999} {arXiv:2211.03999}\BibitemShut {NoStop}%
\end{thebibliography}%
\end{document}